 \newtheorem{thm}{Theorem}[section]
 \newtheorem{cor}[thm]{Corollary}
 \newtheorem{lem}[thm]{Lemma}
 \newtheorem{prop}[thm]{Proposition}
 \newtheorem{defn}[thm]{Definition}
  \newtheorem{defn-thm}[thm]{Definition-Theorem}
  \newtheorem{ex}[thm]{Example}
  \theoremstyle{remark}
\newtheorem{rem}[thm]{Remark}
\numberwithin{equation}{section}
\newcommand{\row}{\text{row}}
\newcommand{\col}{\text{col}}
\newcommand{\de}{\text{deg}}
\DeclareFontFamily{U}{mathx}{}
\DeclareFontShape{U}{mathx}{m}{n}{<-> mathx10}{}
\DeclareSymbolFont{mathx}{U}{mathx}{m}{n}
\DeclareMathAccent{\widehat}{0}{mathx}{"70}
\DeclareMathAccent{\widecheck}{0}{mathx}{"71}
\newcommand{\non}{\nonumber}
\newcommand{\wt}{\widetilde}
\newcommand{\la}{\lambda}
\newcommand{\ts}{\,}
\newcommand{\tss}{\hspace{1pt}}
\newcommand{\CC}{\mathbb{C}\tss}
\newcommand{\Wc}{\mathcal{W}}
\newcommand{\gl}{\mathfrak{gl}}
\newcommand{\g}{\mathfrak{g}}
\newcommand{\sll}{\mathfrak{sl}}
\newcommand{\agot}{\mathfrak{a}}
\newcommand{\Fand}{\qquad\text{and}\qquad}
\newcommand{\bal}{\begin{aligned}}
\newcommand{\eal}{\end{aligned}}
\newcommand{\beq}{\begin{equation}}
\newcommand{\eeq}{\end{equation}}
\newcommand{\ben}{\begin{equation*}}
\newcommand{\een}{\end{equation*}}
\newcommand{\bpf}{\begin{proof}}
\newcommand{\epf}{\end{proof}}
\def\beql#1{\begin{equation}\label{#1}}
\begin{document}

\title{\Large\bf Generalized finite and affine $W$-algebras in type $A$}

\author{{Dong Jun Choi,\quad Alexander Molev\quad and\quad Uhi Rinn Suh}}

\date{} 
\maketitle


\begin{abstract}
We construct
a new family of affine $W$-algebras $W^k(\lambda,\mu)$
parameterized by partitions $\lambda$ and $\mu$
associated with
the centralizers of nilpotent elements in $\gl_N$.
The new family unifies a few known classes of $W$-algebras.
In particular, for
the column-partition $\lambda$ we recover the affine $W$-algebras $W^k(\gl_N,f)$
of Kac, Roan and Wakimoto,
associated with nilpotent elements $f\in\gl_N$ of type $\mu$.
Our construction is based on
a version of the BRST complex of the quantum
Drinfeld--Sokolov reduction. We show that the application of the Zhu functor
to the vertex algebras $W^k(\lambda,\mu)$
yields a family of generalized finite $W$-algebras $U(\lambda,\mu)$
which we also describe independently as associative algebras.



\end{abstract}

\maketitle

\section{Introduction}
\label{sec:int}
\subsection{\texorpdfstring{$W$}{W}-algebras} \label{subsec:W-}
The appearance of {\em affine $W$-algebras}
was motivated by physics, originating in the work of Zamolodchikov~\cite{z:ie};
he extended the
Virasoro algebra by adding fields of higher conformal dimension thus discovering
the $W_3$ algebra associated with the Lie algebra $\sll_3$.
The $W_N$-algebras corresponding to $\sll_N$ were then constructed by Fateev and Lukyanov~\cite{fl:mt}.
Due to their significance in the conformal field theory, the $W$-algebra symmetries
were since extensively studied in the physics literature; see e.g. \cite{bs:ws} for a review.

A conceptual mathematical definition of the $W$-algebras in the context of the vertex algebra theory
is due to Feigin and Frenkel~\cite{ff:qd} who introduced them via
the quantized Drinfeld--Sokolov
reduction. According to this definition, the $W$-algebra $W^{k}(\g)$ at the level $k\in\CC$
corresponding
to the simple Lie algebra $\g$ is the zeroth cohomology of the BRST complex of the
quantum Drinfeld--Sokolov reduction; see \cite{a:iw} and \cite[Ch.~15]{fb:va} for detailed expositions. 

The theory was further developed by Kac, Roan and Wakimoto \cite{krw:qr}
introducing a wider class of $W$-algebras $W^k(\g,f)$ associated with simple
Lie algebras and superalgebras $\g$ and nilpotent elements $f\in\g$.
For a principal nilpotent $f$ the algebra $W^k(\g,f)$ coincides with $W^k(\g)$ and for the trivial nilpotent $f$ the corresponding $W$-algebra $W^k(\g,f)$ is the affine vertex algebra $V^k(\g).$ In other words, for given Lie superalgebra $\g$, the family of $W$-algebras interpolate the principal $W$-algebra and affine vertex algebra.
In the principal case, the investigation of the
structural theory and representations of $W$-algebras led
Frenkel, Kac and Wakimoto~\cite{fkw:cf} to
remarkable conjectures
concerning their rationality and the existence and description of
modular invariant representations. The $W$-algebra representation theory
was further developed by Arakawa in \cite{a:rt} and \cite{a:rw}, where both
conjectures were proved.

In the limit $k\to\infty$ the algebra $W^k(\g)$ becomes the
{\em classical $W$-algebra} $W(\g)$ which is a commutative algebra equipped with a Poisson bracket.
Its definition goes back to Drinfeld and Sokolov~\cite{ds:la}; they used the algebras $W(\g)$ to introduce
equations of the KdV type for arbitrary simple Lie algebras.
In a more recent work by De Sole, Kac and Valeri~\cite{dskv:cw}
the construction of Drinfeld and Sokolov (generalized to an arbitrary nilpotent
element $f$) was described in the framework
of Poisson vertex algebras. This description
was then applied to construct integrable hierarchies
of bi-Hamiltonian equations.

According to the Feigin--Frenkel duality property~\cite{ff:qd},
another way to recover the classical $W$-algebra $W({}^L\g)$
associated with the Langlands dual Lie algebra ${}^L\g$ is to take
the {\em critical level} for $W^k(\g)$; i.e., to evaluate $k$ at the
negative of the dual Coxeter number of $\g$.
Due to the celebrated theorem of Feigin and Frenkel~\cite{ff:ak},
the algebra $W({}^L\g)$ is isomorphic to the center of the
affine vertex algebra at the critical level corresponding to $\g$.

An independent theory of {\em finite $W$-algebras} was conceived in the work
of Kostant~\cite{k:wv}. More recent approaches were developed
by Premet~\cite{p:st} and Gan and Ginzburg~\cite{gg:qs}.
The connection between the affine and finite $W$-algebras was established by
De~Sole and Kac~\cite{dk:fa}, and (in the principal nilpotent case) by
Arakawa~\cite{a:rt}. They showed that
the application of the {\em Zhu functor} originated in \cite{z:mi} to
the affine $W$-algebra $W^{\tss k}(\g,f)$ associated with a nilpotent element $f\in\g$
yields the finite $W$-algebra corresponding to $\g$ and $f$.

\subsection{Generalized \texorpdfstring{$W$}{W}-algebras}

It is well-known that some non-semisimple Lie algebras can share
certain classical properties of their semisimple counterparts. Amongst such examples
are centralizers $\agot$ of nilpotent elements in simple Lie algebras. In particular,
the remarkable {\em Premet conjecture} states that the subalgebra
of $\agot$-invariants in the symmetric algebra $S(\agot)$
is a free polynomial algebra; see \cite{ppy:si}. Although it does not
hold in full generality \cite{y:cp}, the conjecture inspired further research
into these Lie algebras and associated objects. The centers of the universal
enveloping algebras $U(\agot)$ in type $A$ were constructed explicitly in \cite{bb:ei} and
generators of the classical $W$-algebras associated with the centralizers were produced
in \cite{mr:cw}.

The main motivation of our work comes from the results of Arakawa and Premet~\cite{ap:qm}.
They found a remarkable way to use affine $W$-algebras to describe the
centers of the universal affine vertex algebras
at the critical level associated with the centralizers $\agot$, thus
obtaining a version of the Feigin--Frenkel theorem for centralizers.
This description was further applied  to quantize the Mishchenko--Fomenko
subalgebras in $U(\agot)$.
The center at the critical level in type $A$ was further investigated in \cite{m:cc},
where its explicit generators were produced. It was shown in \cite{m:wa} that
analogues of the affine $W$-algebras can be associated
with the underlying Lie algebras $\agot$
and their description in terms of generators was also given.

\vskip 2mm

Our goal in this paper is to apply a version of the BRST complex $C^k(\lambda,\mu)$
of the quantum Drinfeld--Sokolov reduction to construct a new family of vertex algebras $W^k(\lambda,\mu)$ associated with the centralizers $\agot$
of nilpotent elements
in the Lie algebra $\gl_N$. Here $\la$ is a partition
of $N$ with $n$ parts corresponding to the chosen nilpotent element, while $\mu$ is a partition
of $n$. We show that $W^k(\lambda,\mu)$ inherits a vertex algebra structure from $C^k(\lambda,\mu)$
(Theorem~\ref{thm:vera}). The main results concerning the structure of $W^k(\lambda,\mu)$
are given in Theorem~\ref{thm:degree zero} and Corollary~\ref{cor:affine structure},
where its generating sets were described.

The family of affine $W$-algebras $W^k(\lambda,\mu)$ turns out to interpolate between several
classes of vertex algebras previously studied in the literature.
In the specialization where $\la=(1^N)$ is the column-partition of $N$ and $\mu=(N)$
is the row-partition, we get
the affine $W$-algebra $W^{\tss k}(\gl_N)$ going back to
\cite{z:ie} and \cite{fl:mt}. If the partition $\mu$ of $N$
is arbitrary, then $W^k(\lambda,\mu)$ coincides with the affine
$W$-algebra $W^k(\gl_N,f)$ associated with
a nilpotent element $f \in \gl_N$ of type $\mu$, as introduced in \cite{krw:qr};
see Remark~\ref{rem:krw} below.

In a different specialization where the partition $\lambda$ with $n$ parts is arbitrary
and $\mu=(n)$ is the row-partition,
$W^k(\lambda,(n))$
is the affine $W$-algebra $W^k(\agot)$ introduced in \cite{m:wa}.
For an alternative choice, where
$\mu=(1^n)$ is the column-partition, the $W$-algebra $W^k(\lambda,(1^n))$ coincides
with the universal affine vertex algebra $V^k(\agot)$, where $\agot$ is the centralizer
of a nilpotent element of type $\la$. Note that, in general, the $W$-algebra $W^k(\lambda,\mu)$
need not be conformal; see Example~\ref{rem:no conformal vector}.

\vskip 2mm

Using the same partitions $\lambda$ and $\mu$ as above, we also introduce a family of associative algebras
$U(\lambda,\mu)$ which we call the {\em generalized finite $W$-algebras}; see
Definition~\ref{def:generalized W}. Our main result concerning
these objects is Theorem~\ref{thm:zhu and finite} which connects the affine
$W$-algebra $W^k(\lambda,\mu)$ with $U(\lambda,\mu)$ via the Zhu functor
by analogy with \cite{dk:fa}. In our notation, the construction in {\em loc.~cit.}
corresponds to $\la=(1^N)$ and a nilpotent element $f\in\gl_N$
of type $\mu$; see Example~\ref{ex:finwa} below.
Note that
the finite $W$-algebras $U(\lambda,\mu)$ with $\la=(1^N)$
were also described from a different viewpoint
in \cite{bk:sy} (denoted there by $W(\pi)$ for the {\em pyramid} $\pi$ associated with $\mu$)
by using the {\em shifted Yangians}; the particular case
of rectangular pyramids appeared earlier in \cite{rs:yr}.

As with the affine case, the algebras $U(\lambda,\mu)$ interpolate between several
well-studied objects. The specialization with $\mu=(1^n)$ and arbitrary $\lambda$ recovers
the universal enveloping algebra $U(\agot)$
of the centralizer
$\agot$ of a nilpotent element of type $\la$ (Example~\ref{ex:env}). The specialization $\lambda_1=\lambda_2=\cdots = \lambda_n=p$ with arbitrary $\mu$ recovers the finite $W$-algebras associated with truncated current Lie algebras, also known as generalized Takiff algebras \cite{t:rp}; see Remark~\ref{rem:takiff}. In particular, the Whittaker modules for $U(\lambda, \mu)$ in this setting were studied in \cite{h:tc}. The specialization where $\mu=(n)$ (which we refer to as the `principal
nilpotent case') is considered in Theorem~\ref{thm:principal case},
where we obtain an explicit description
of generators of $U(\lambda,(n))$, then show
that this algebra is isomorphic to the center of $U(\agot)$ (Corollary~\ref{cor:cent}).
The very last Section~\ref{subsec:minnilp} is devoted to the `minimal nilpotent case'
with $\mu=(1^{n-2}2)$ and arbitrary $\la$. We give explicit descriptions of
generators of both the finite and affine $W$-algebras $U(\lambda,\mu)$ and $W^k(\lambda,\mu)$;
see the respective Theorems~\ref{thm:finite generators} and \ref{thm:affine generators}.

\subsection{Further directions}

As written above, in this paper, we introduce new family of vertex algebras which interpolate known vertex algebras such as $V^k(\gl_N)$, $V^k(\mathfrak{a})$, and  $W^k(\gl_N,f).$ Furthermore, we showed some of generalized $W$-algebras are chiralization of some known associative algebras. As an example, a finite $W$-algebra associated with a Takiff algebra is obtained as a Zhu algebra of 
$W^k(\lambda,\mu)$ for a rectangular shape partition $\lambda$.

As a future project, we will observe the relation between the 
$W$-algebras associated with $\mathfrak{gl}_N$ and $\mathfrak{a}.$ 
In particular, when 
$f$ is a nilpotent element of 
 $\gl_N$ inducing the grading \eqref{eq:n_lambda,mu},
the BRST complex for $W^k(\lambda,\mu)$ is contained in the BRST complex for $W^k(\gl_N,f)$. Moreover, the conformal vector on the $W^k(\gl_N,f)$ induces the conformal weight decomposition of  $W^k(\lambda,\mu)$. 
Hence comparing the two vertex algebras should be an interesting problem and also  this observation can provide a quantum affine analogue of the relation between $S(\g)^\g$ and $S(\mathfrak{a})^{\mathfrak{a}}$ established in \cite{ppy:si}. Another interesting possible link might be observed between $W^k(\lambda,\mu)$ and $W^k(\mathfrak{gl}_N,f_{\lambda,\mu})$ where $f_{\lambda,\mu}$ is dual to $\mathsf{e}_{\lambda,\mu}$ in \eqref{eq:e,f_mu}. We expect that $W^k(\lambda,\mu)$ embeds into $W^k(\mathfrak{gl}_N,f_{\lambda,\mu})$.


We also believe that both the vertex algebras $W^k(\lambda,\mu)$ and associative algebras $U(\lambda,\mu)$ deserve further investigation regarding their structure theory and representations, and moreover Feigin--Frenkel-type duality,
classical counterparts of $W^k(\lambda,\mu)$ and critical level phenomena would be interesting. Another interesting question is
a possible relationship of the algebras $U(\lambda,\mu)$ with suitably modified shifted Yangians
of \cite{bk:sy}.


\section{Generalized finite $W$-algebras} \label{sec:finite}

Throughout the paper $N$ denotes a positive integer and $\mathfrak{g}=\mathfrak{gl}_N$
is the general linear Lie algebra over the field $\mathbb{C}$ of complex numbers.
Take a nilpotent element $e\in \mathfrak{g}$ whose Jordan canonical form has the Jordan
blocks of sizes $\lambda_1\leqslant \lambda_2 \leqslant \dots \leqslant \lambda_n.$
 Consider the left-justified {\em pyramid} corresponding to the partition
$\la=(\la_1,\dots,\la_n)$ of $N$. It
consists of $n$ rows of unit boxes with $\la_1$ boxes in the top row,
 $\la_2$ boxes in the second row, etc.
The associated {\em tableau} is obtained by writing
 the numbers $1,2, \dots, N$ into the boxes of the pyramid $\la$
 consecutively by rows from left to right
 starting from the top row.
For example, the tableau associated with the partition $\lambda = (2,3,5)\vdash 10 = N$ is given by
\begin{equation} \label{pyramid_ex}
    \begin{ytableau}
    1 & 2\\
    3 & 4 & 5\\
    6 & 7 & 8 & 9 & 10
\end{ytableau}
\end{equation}

\medskip

\noindent
We let
$\row_{\la}(a)$ and $\col_{\la}(a)$ denote the row and column number of the box containing
the entry $a$.
We will denote by $q_i$ the number of boxes in
the $i$-th column of the pyramid $\la$ for $i=1,2,\dots, l,$ where $l:= \lambda_n$.

Using the tableau associated with the pyramid $\la$, introduce the element $e\in\g$ by
 \begin{equation} \label{eq:e_lambda}
    e=\sum_{\substack{i=1,\dots, N-1 \\[0.2em]
    \row_{\lambda}(i)=\row_{\lambda}(i+1)}} e_{i\, i+1},
\end{equation}
where the $e_{ij}$ denote the standard basis elements of $\mathfrak{g}$.
In the above example, the parameters are
$(q_1,q_2,q_3, q_4, q_5)=(3,3,2,1,1)$
and the corresponding nilpotent element is
\[e=e_{12}+e_{34}+e_{45}+e_{67}+e_{78}+e_{89}+e_{9\, 10}.
\non
\]

Given a pyramid $\la$ with $n$ rows, take an arbitrary partition $\mu$ of $n$.
As with $\la$, we will write the parts of $\mu = (\mu_1,\mu_2, \dots, \mu_m)$
in the weakly increasing order $\mu_1\leqslant \mu_2\leqslant \dots \leqslant \mu_m$
and consider the corresponding pyramid $\mu$ along with the associated tableau.
For the pyramid $\lambda$ appearing in \eqref{pyramid_ex}, there are three possible
pyramids $\mu$ and associated tableaux:
\beql{muexa}
    \begin{ytableau}
    1 \\
    2 \\
    3
\end{ytableau}
\qquad\qquad
    \begin{ytableau}
    1 \\
    2 & 3\\
\end{ytableau}
\qquad\qquad
\begin{ytableau}
    1 & 2& 3\\
\end{ytableau}
\eeq

The centralizer $\mathfrak{a}:=\mathfrak{g}^e$ of the element $e\in\mathfrak{g}$
defined in \eqref{eq:e_lambda}
is a Lie subalgebra with the basis
\begin{equation} \label{eq:basis_a}
   \mathcal{B}^e:= \{ \ E_{ij}^{(r)}\ | \  (i,j,r)\in S^e \, \},
\end{equation}
where
\ben
    S^e:=\{\, (i,j,r) \in \mathbb{Z}^3\, |\, 1\leqslant i,j\leqslant n\, ,\,
    \ \lambda_j - \text{min}(\lambda_i, \lambda_j)\leqslant r < \lambda_j\, \}
\een
and
\begin{equation}\label{eq:E}
    E_{ij}^{(r)}= \sum_{\substack{\row_{\la}(a)=i,\, \row_{\la}(b)=j\\[0.2em]
    \col_{\la}(b)-\col_{\la}(a)=r} } e_{ab}.
\end{equation}
The commutation relations in the centralizer $\agot$ are given by
\ben
    [E_{ij}^{(r)}, E_{hl}^{(s)}]= \delta_{hj}E_{il}^{(r+s)}-\delta_{il}E_{hj}^{(r+s)},
\een
where we set $E_{ij}^{(r)}=0$ for $(i,j,r)\not \in S^e.$

Now we use the chosen pyramid $\mu$ to introduce
a $\mathbb{Z}$-gradation $\mathfrak{a}:= \bigoplus_{i\in \mathbb{Z}}\mathfrak{a}(i)$
on the centralizer by setting
 \begin{equation} \label{eq:deg_mu}
     \de_{\mu}(E_{ij}^{(r)}):= \col_{\mu}(j)- \col_{\mu}(i).
 \end{equation}
Set
 \begin{equation}\label{eq:n_lambda,mu}
     \mathfrak{n}_{\lambda,\mu}\quad := \quad  \bigoplus_{i>0} \mathfrak{a}(i)
     \quad= \quad \text{Span}_\mathbb{C} \ \{ \ E_{ij}^{(r)} \ | \ (i, j, r) \in S_{\lambda, \mu}\},
 \end{equation}
 where
 \begin{equation} \label{eq:index S}
     S_{\lambda,\mu}:=\{(i,j,r)\in S^e\,  |\,   \col_\mu(i)<\col_\mu(j)\}.
 \end{equation}

Introduce the element
\begin{equation}  \label{eq:e,f_mu}
\mathsf{e}_{\lambda,\mu}:= \sum_{\substack{i=1,\dots, n-1 \\[0.2em] \row_{\mu}(i)
=\row_{\mu}(i+1)}} E_{i\, i+1}^{(\lambda_{i+1}-1)} \in \mathfrak{a}(1)
\end{equation}
associated with the pair of pyramids $(\lambda,\mu)$. To illustrate, note that
the elements associated with the respective tableaux in \eqref{muexa} are
\ben
\mathsf{e}_{\lambda,\mu}=0,\qquad \mathsf{e}_{\lambda,\mu}=E_{23}^{(4)}\Fand \mathsf{e}_{\lambda,\mu}=E_{12}^{(2)}+E_{23}^{(4)}.
\een

Obviously, in the general case the relation $\row_{\mu}(i)
=\row_{\mu}(i+1)$ is possible if and only if $\col_{\mu}(i+1)=\col_{\mu}(i)+1$.
Define the associated character $\chi \in \mathfrak{n}_{\lambda, \mu}^\ast$ by
setting
\beql{defchi}
\chi\big(E_{i\, i+1}^{(\lambda_{i+1}-1)}\big)=1
\eeq
\noindent
if $\row_{\mu}(i)=\row_{\mu}(i+1)$ with $i\in\{1,\dots,n-1\}$, and $\chi(E_{ij}^{(r)})=0$
for all remaining triples $(i,j,r)\in S_{\lambda,\mu}$. For an explanation of this choice of $\chi$, see Remark~\ref{rem:chi}.
Consider the left ideal
\ben
    \mathcal{I}_{\lambda,\mu}:= U(\mathfrak{a})\left< \left.
 \mathsf{n}+\chi(\mathsf{n})\right| \mathsf{n} \in \mathfrak{n}_{\lambda,\mu}\, \right>
\een
of the universal enveloping algebra $U(\mathfrak{a})$. Because $\chi\in \mathfrak{n}_{\lambda,\mu}^*$ with $\chi|_{\mathfrak{a}(i)}=0$ for all $i \geqslant 2$, the adjoint action of
$\mathfrak{n}_{\lambda,\mu}$ on $U(\mathfrak{a})$ induces the action of $\mathfrak{n}_{\lambda,\mu}$
on the quotient $U(\mathfrak{a})/\mathcal{I}_{\lambda,\mu}$. As with the finite $W$-algebras
(cf. \cite{bk:sy} and \cite[Appendix]{dk:fa}), the subspace of $\mathfrak{n}_{\lambda,\mu}$-invariants
turns out to be an associative algebra with the product inherited from $U(\agot)$ as we verify below.

\begin{defn}  \label{def:generalized W}
The generalized finite $W$-algebra is the associative algebra
\[ U(\lambda,\mu)= (U(\mathfrak{a})/\mathcal{I}_{\lambda,\mu})^{\mathfrak{n}_{\lambda,\mu}}.\]
\end{defn}

To verify that $U(\lambda, \mu)$ is a well-defined associative algebra,
take $A, B \in U(\mathfrak{a})$ such that
    $\overline{A}, \overline{B} \in U(\lambda, \mu)$ are their
    images in the quotient, and $\mathsf{n} \in \mathfrak{n}_{\lambda, \mu}$.
    It is sufficient
    to show that $[\mathsf{n}, AB] \in \mathcal{I}_{\lambda, \mu}$.
    Since $\overline{A} \in U(\lambda, \mu)$, we have
    $[\mathsf{n}, A] = a(\mathsf{m} + \chi(\mathsf{m}))$
    for some $a \in U(\mathfrak{a})$ and $\mathsf{m} \in \mathfrak{n}_{\lambda, \mu}$.
    By the Leibniz rule,
    \ben
        \begin{aligned}
            \relax [\mathsf{n}, AB] &= [\mathsf{n}, A]B + A[\mathsf{n}, B] \\
            &= a(\mathsf{m} + \chi(\mathsf{m}))B + A[\mathsf{n}, B] \\
            &= aB(\mathsf{m} + \chi(\mathsf{m})) + a[\mathsf{m}, B]
            + A[\mathsf{n}, B] \in \mathcal{I}_{\lambda, \mu},
        \end{aligned}
    \een
    as required.

\begin{ex}\label{ex:finwa}
    Let $\lambda=(1,1,\dots, 1)$ be the column-pyramid with $N$
    boxes so that $\mathfrak{a} = \mathfrak{gl}_N$ and let
    $\mu$ be a pyramid with $N$ boxes.
    Consider the element $\mathsf{e}_{\lambda,\mu}$ associated with $\mu$ as defined in \eqref{eq:e,f_mu}.
     Then
    \[U(\lambda,\mu)=U(\mathfrak{g}, \mathsf{e}_{\lambda,\mu})\]
    is the finite $W$-algebra
    associated with $\mathfrak{g}$ and $\mathsf{e}_{\lambda,\mu}$; see \cite{gg:qs}, \cite{k:wv} and \cite{p:st}.
\end{ex}

\begin{ex}\label{ex:env}
    Let $\lambda$ be a pyramid consisting of $N$ boxes with $n$
    rows and $\mu=(1,1,\dots, 1)$ be the column-pyramid with $n$ boxes.
     Then
    \[  U(\lambda,\mu)=U(\mathfrak{a}),\]
    where $\mathfrak{a}$ is the centralizer of the nilpotent element $e$ defined in
    \eqref{eq:e_lambda}.
\end{ex}

\begin{rem} \label{rem:takiff}
    Let $N=np$ and $\lambda_1=\lambda_2= \cdots=\lambda_n=p$ and $\mu$ be a pyramid with $n$ boxes. We have a Lie algebra isomorphism
    \begin{equation}\label{eq:iso}
        \mathfrak{a} \ \cong \ \mathfrak{gl}_n[v]/(v^p), \qquad E_{ij}^{(r)} \longleftrightarrow e_{ij} v^r
    \end{equation}
    Here, $\mathfrak{gl}_n[v]/(v^p)$ denotes the truncated current Lie algebra of level $p$, equipped with the Lie bracket induced from the current Lie algebra $\mathfrak{gl}_n[v]$. The $W$-algebras associated with $\mathfrak{gl}_n[v]/(v^p)$ and a nilpotent element $e$ were constructed in \cite{h:tc}. The $W$-algebra in \cite{h:tc} is identical to the $W$-algebra constructed in this paper from the Lie algebra $\mathfrak{a}$, after identifying via \eqref{eq:iso}. (See \cite{cw:rt} and references therein.)  
\end{rem}

\section{Generalized affine $W$-algebras}
\label{sec:gaff}

\subsection{Affine vertex algebra and free fermion vertex algebra}
\label{subsec:ava}

  Let $e$ be the nilpotent element in $\mathfrak{g}$ associated with
  the pyramid $\lambda$ as defined in \eqref{eq:e_lambda}. As before, we denote
  by $\mathfrak{a}$ be
  the centralizer of $e$ in $\mathfrak{g}.$
  We will use the invariant bilinear form $( \ | \  )$ on
  $\mathfrak{a}$ which was introduced in \cite{ap:qm}.
  Specifically, set $\de_{\lambda}(e_{ij}) = \col_\lambda(j) - \col_\lambda(i)$
  to induce the $\mathbb{Z}$-gradations
\ben
\mathfrak{g} = \bigoplus_{r \in \mathbb{Z}} \mathfrak{g}_r\qquad\text{and}\qquad
\mathfrak{a} = \bigoplus_{r \in \mathbb{Z}} \mathfrak{a}_r
  = \bigoplus_{r \in \mathbb{Z}} (\mathfrak{a} \cap \mathfrak{g}_r).
\een
  Then for homogeneous elements $X,Y\in\agot$ we set
\begin{equation} \label{eq:bilinearform}
    ( X | Y ) =\begin{cases} \frac{1}{2N}
    \text{tr}_{\mathfrak{g}_0} \big((\text{ad }X) (\text{ad } Y) \big)
    \quad &\text{if } X, Y \in \mathfrak{a}_0, \\ 0 \quad &\text{otherwise.}\end{cases}
\end{equation}

\begin{rem} \label{rem:bilinear}
    We can consider another bilinear form induced from that of $\mathfrak{gl}_N= \mathfrak{sl}_N \oplus \mathbb{C}.$ The main results in this paper such as  Theorem \ref{thm:degree zero}, Corollary \ref{cor:affine structure}, and Theorem \ref{thm:zhu and finite} are still true even when this bilinear form is used. However, we follow the convention used in \cite{ap:qm}, \cite{mr:cw}, and \cite{m:wa}.
\end{rem}

To use a version of the BRST complex
of the quantum Drinfeld--Sokolov reduction, associated with the Lie algebra $\agot$
by analogy with \cite[Ch.~15]{fb:va}, and extending the construction of \cite{m:wa},
introduce two vertex algebras
as follows. The first is $V^k(\mathfrak{a})$, the {\em affine vertex algebra at the level
$k\in \mathbb{C}$}. It is defined as
    \ben
    V^k(\mathfrak{a}) = U(\widehat{\mathfrak{a}})
    \otimes_{U(\mathfrak{a}[t])\oplus \mathbb{C} K} \mathbb{C}_k \, ,
    \een
    where $\widehat{\mathfrak{a}} = \mathfrak{a}[t, t^{-1}]
    \otimes \mathbb{C}K$ is the affine Kac--Moody algebra with
    central element $K$, whose commutation relations are given by
    \begin{equation} \label{eq:affine_comm}
        [a \otimes t^n, b \otimes t^m] = [a, b] \otimes t^{n+m} + \delta_{n+m, 0} (a|b) K
    \end{equation}
    using the bilinear form \eqref{eq:bilinearform}. Here $\mathbb{C}_k$
    is the one-dimensional module of $U(\mathfrak{a}[t]) \oplus \mathbb{C}K$,
    where all elements act trivially except $K$, which acts as
    $k \in \mathbb{C}$. For $(E_{ij}^{(r)} \otimes t^{-1})\in V^k(\mathfrak{a})$,
    the corresponding field is
    $E_{ij}^{(r)}(z) := \sum_{n \in \mathbb{Z}} (E_{ij}^{(r)} \otimes t^n) z^{-n-1}$
    and the commutation relations induced from \eqref{eq:affine_comm} are
    \ben
        [E_{i_1, j_1}^{(r_1)}(z), E_{i_2, j_2}^{(r_2)}(w)]
        = [E_{i_1, j_1}^{(r_1)}, E_{i_2, j_2}^{(r_2)}](w)
        \delta(z,w) + k(E_{i_1, j_1}^{(r_1)} \ | \ E_{i_2, j_2}^{(r_2)}) \partial_w\delta(z,w),
    \een
    where $\delta(z,w)=\sum_{m\in \mathbb{Z}}z^m w^{-m-1}.$
    In terms of the $\lambda$-bracket, they can be rewritten as
    \ben
        [E_{i_1, j_1}^{(r_1)} {}_\lambda E_{i_2, j_2}^{(r_2)}]
        = [E_{i_1, j_1}^{(r_1)}, E_{i_2, j_2}^{(r_2)}]
        + k(E_{i_1, j_1}^{(r_1)} \ | \ E_{i_2, j_2}^{(r_2)}) \ts\lambda;
    \een
    see e.g. \cite{k:iv}.

The second vertex algebra is
$\mathcal{F}(\mathfrak{n}_{\lambda,\mu})$, the
{\em free fermion vertex algebra associated with $\mathfrak{n}_{\lambda, \mu}$}.
    More precisely, let us consider the dual space
    $\mathfrak{n}_{\lambda, \mu}^\ast$ of $\mathfrak{n}_{\lambda, \mu}$ and the
    odd vector superspaces
\begin{equation} \label{eq:phi}
         \phi_{\mathfrak{n}_{\lambda, \mu}}=\{ \phi_\mathsf{n}\, | \,
         \mathsf{n}\in \mathfrak{n}_{\lambda, \mu}  \} \Fand
          \phi^{\mathfrak{n}_{\lambda, \mu}^\ast}=\{ \phi^\mathsf{m} \, | \,
          \mathsf{m}\in \mathfrak{n}_{\lambda, \mu}^\ast  \}.
\end{equation}
We set for brevity,
\ben
    \phi_{(i,j,r)}:= \phi_{E_{ij}^{(r)}}\Fand \phi^{(i,j,r)}:= \phi^{E_{ij}^{(r)\ast}},
\een
where the $E_{ij}^{(r)\ast}$ are the elements of the basis of $\mathfrak{n}_{\lambda, \mu}^\ast$
dual to the basis formed by the elements
$E_{ij}^{(r)} \in \mathfrak{n}_{\lambda, \mu}$.
The vertex algebra $\mathcal{F}(\mathfrak{n}_{\lambda,\mu})$ is freely generated
by  the elements $\phi_{(i, j, r)}$ and $\phi^{(i, j, r)}$ for $(i, j, r) \in S_{\lambda, \mu}$
as a differential algebra with the following $\lambda$-brackets:
\begin{equation} \label{eq:ff OPE}
\begin{aligned}[]
    [\phi_{(i, j, r)}{}_\lambda \phi^{(i', j', r')}]&
    =[\phi^{(i', j', r')}{}_\lambda \phi_{(i, j, r)}]=\delta_{ii'}\delta_{jj'}\delta_{rr'}, \\
    [\phi^{(i, j, r)}{}_\lambda \phi^{(i', j', r')}]&
    = [\phi_{(i, j, r)}{}_\lambda \phi_{(i', j', r')}]=0.
\end{aligned}
\end{equation}
The first relation in \eqref{eq:ff OPE} can also be written as
\ben
    [\phi_{\mathsf{n}}{}_\lambda \phi^{\mathsf{m}}]
    = [\phi^{\mathsf{m}} {}_\lambda \phi_{\mathsf{n}}]= {\mathsf{m}}({\mathsf{n}})
\een
for ${\mathsf{n}}\in \mathfrak{n}_{\lambda,\mu}$ and
${\mathsf{m}}\in \mathfrak{n}_{\lambda,\mu}^\ast.$
Also, we set $\phi_x:= \phi_{\pi_+(x)}$ where
$\pi_+ : \mathfrak{a} \to \mathfrak{n}_{\lambda, \mu}$
is the projection to $\mathfrak{n}_{\lambda,\mu}$
which is zero on the basis vectors which do not belong to $\mathfrak{n}_{\lambda,\mu}$.

\subsection{Definition of the generalized affine $W$-algebras}

\label{subsec:dda}

In this section, we introduce the generalized affine $W$-algebras via BRST cohomologies.
Consider the vertex algebra
\ben
    C^k(\lambda, \mu):=  V^k(\mathfrak{a}) \otimes \mathcal{F}(\mathfrak{n}_{\lambda,\mu})
\een
and its element
\ben
    d = \sum_{I\in S_{\lambda,\mu}} :\phi^{I} E_I: + \phi^{\chi}
    + \frac{1}{2}\sum_{I,I' \in S_{\lambda,\mu}}:\phi^{I}\phi^{I'}\phi_{[E_{I'},E_{I}]}:
    \ \in \ C^k(\lambda,\mu).
\een
Here $E_I:=E_{ij}^{(r)}$ and $\phi^I:=\phi^{(i,j,r)}$ for $I=(i,j,r)\in S_{\lambda, \mu}$;
see also \eqref{defchi} for the definition of $\chi$.
Let us denote
\ben
    Q:= d_{(0)} \ \in \ \text{End}\ts C^k(\lambda, \mu),
\een
where $d_{(0)}: A \mapsto [d\, {}_\lambda \,  A]\, \big|_{\lambda=0}.$
By the fundamental property of $\lambda$-bracket,  $Q$ is an odd derivation
with respect to the normally ordered product and commutes with $\partial$.
Hence the following lemma completely determines $Q.$

\begin{lem}\label{lem:coa}
The following formulas hold:
\begin{enumerate}[(1)]
    \item $Q(a) = \sum_{I\in S_{\lambda, \mu}} :\phi^I [E_I, a]:
    + \sum_{(i, j, r) \in S_{\lambda, \mu}} k (a | E_I)\partial \phi^I$ for $a\in \mathfrak{a}$.
    \item $Q( \phi_{\mathsf{n}}) = \mathsf{n} + \chi(\mathsf{n})
    + \sum_{I \in S_{\lambda, \mu}} : \phi^{I}
    \phi_{[E_I, \mathsf{n}]}:$ for ${\mathsf{n}}\in \mathfrak{n}_{\lambda, \mu}$.
    \item $Q (\phi^{\mathsf{m}}) = \frac{1}{2}
    \sum_{I \in S_{\lambda,\mu}} :\phi^{I} \phi^{E_{I} \cdot \mathsf{m}} : $
    for ${\mathsf{m}}\in \mathfrak{n}_{\lambda, \mu}^\ast$.
\end{enumerate}
In the last equation, $E_I \cdot \mathsf{m}$ denotes the coadjoint
action of $\mathfrak{n}_{\lambda, \mu}$ on $\mathfrak{n}_{\lambda,\mu}^\ast$.
\end{lem}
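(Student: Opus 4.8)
The plan is to compute $Q(A)=d_{(0)}A=[d\,{}_\lambda A]\big|_{\lambda=0}$ termwise for each of the three types of generators by expanding the three summands of $d$ and applying the non-commutative Wick formula for $\lambda$-brackets of normally ordered products. Since $Q$ is an odd derivation commuting with $\partial$, it suffices to evaluate $Q$ on the generators $a\in\mathfrak{a}$, $\phi_{\mathsf n}$, and $\phi^{\mathsf m}$. First I would record the basic $\lambda$-brackets I will repeatedly use: from the affine relations, $[E_I\,{}_\lambda a]=[E_I,a]+k(E_I|a)\lambda$, and from~\eqref{eq:ff OPE} the only nonzero fermionic brackets are $[\phi_I\,{}_\lambda\phi^{I'}]=[\phi^{I'}\,{}_\lambda\phi_I]=\delta_{II'}$. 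Throughout I will use that $V^k(\mathfrak a)$ and $\mathcal F(\mathfrak n_{\lambda,\mu})$ commute as tensor factors, so cross-brackets vanish except through the explicit structure constants in $d$.

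For part~(1), the middle term $\phi^\chi$ contributes nothing when bracketed against $a$ (it lies in the fermionic factor and $a$ in the bosonic one), and the cubic term $\tfrac12:\!\phi^I\phi^{I'}\phi_{[E_{I'},E_I]}\!:$ also has trivial bracket with $a$ for the same reason. Hence only the linear term $\sum_I:\!\phi^I E_I\!:$ survives. Applying the Wick formula, $[:\!\phi^I E_I\!:{}_\lambda a]$ splits into the contraction of $\phi^I$ with $a$ (which is zero) and of $E_I$ with $a$; the latter yields $:\!\phi^I[E_I\,{}_\lambda a]\!:\,=\,:\!\phi^I([E_I,a]+k(E_I|a)\lambda)\!:$. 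Setting $\lambda=0$ after accounting for the $\partial\phi^I$ arising from the $\lambda$-linear piece (via the integral term in the Wick formula, which converts the $\lambda$ into $\partial$ on the surviving factor) gives exactly $\sum_I:\!\phi^I[E_I,a]\!:+\sum_I k(a|E_I)\partial\phi^I$, matching the claim up to the symmetry $(E_I|a)=(a|E_I)$ of the form.

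For part~(2), I bracket $d$ against $\phi_{\mathsf n}$: the linear term contracts $\phi^I$ with $\phi_{\mathsf n}$ producing $\langle\text{coefficient}\rangle E_I$ which assembles to $\mathsf n$; the term $\phi^\chi$ contracts to give the scalar $\chi(\mathsf n)$; and the cubic term contracts one $\phi^{I}$ (or $\phi^{I'}$) with $\phi_{\mathsf n}$, leaving $:\!\phi^I\phi_{[E_I,\mathsf n]}\!:$ after identifying the structure constants with the bracket $[E_I,\mathsf n]$ projected into $\mathfrak n_{\lambda,\mu}$. For part~(3), only the cubic term brackets nontrivially with $\phi^{\mathsf m}$, and the contraction of one of its $\phi^{I'}$ factors with $\phi^{\mathsf m}$ produces $\tfrac12\sum_I:\!\phi^I\phi^{E_I\cdot\mathsf m}\!:$, where $E_I\cdot\mathsf m$ is the coadjoint action; the factor $\tfrac12$ is preserved because exactly one contraction site survives after the antisymmetry of the two fermion labels is taken into account. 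The main obstacle I anticipate is the careful bookkeeping in the cubic term: tracking signs from the odd parity of the $\phi$'s, correctly matching the structure constants $\phi_{[E_{I'},E_I]}$ against the (co)adjoint actions, and ensuring the combinatorial factor $\tfrac12$ and the single-contraction count come out right. Once the Wick-formula contractions are organized systematically by which factor pairs with the test generator, the three formulas follow by direct identification of terms.
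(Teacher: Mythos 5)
Your proposal follows essentially the same route as the paper: compute $[d\,{}_\lambda\,\cdot\,]$ on each type of generator via the Wick formula (the paper first applies skew-symmetry and then the left Wick formula) and set $\lambda=0$, and all three resulting formulas are correctly identified. Two small slips to correct in the write-up: in part (1) the term $k(a|E_I)\partial\phi^I$ does not come from the integral term of the Wick formula (which vanishes here since $[\phi^I{}_\lambda\, a]=0$) but from the $-\lambda-\partial$ substitution in skew-symmetry, equivalently the shift operator in the right Wick formula; and in part (3) it is the factor $\phi_{[E_{I'},E_I]}$, not $\phi^{I'}$, that contracts with $\phi^{\mathsf{m}}$ — consistent with your own recorded rule that brackets between two upper-index fermions vanish.
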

\begin{proof}
It is enough to compute $\lambda$-brackets between $d$ and generators of $C^k(\lambda,\mu).$
    \begin{enumerate}
        \item[(1)] Observe that $[d{}_{\lambda} a]=
        \sum_{I\in S_{\lambda, \mu}} [ \,  :\phi^{I} E_I:\, {}_\lambda \, a \, ].$
        By skew-symmetry, we have
        \begin{equation}
        \non
        \begin{aligned}[]
             [d{}_{\lambda} a]  &= - \sum_{I\in S_{\lambda, \mu}}
             \big[a_{-\lambda-\partial} :\phi^{I} E_I:\big]
             = \sum_{I\in S_{\lambda, \mu}} :\phi^{I}[E_I, a]:
             + \sum_{I \in S_{\lambda, \mu}} k(\partial+ \lambda) (a | E_I)\phi^{I}.
        \end{aligned}
        \end{equation}
        \item[(2)] By direct computations, we get
        $  \big[ \phi^{\chi}\, {}_\lambda \ \phi_{\mathsf{n}}\big] = \chi(\mathsf{n})$ and
        \begin{equation}
        \non
            \begin{aligned}
                \sum_{I\in S_{\lambda, \mu}}
                \big[ : \phi^{I} E_I: {}_\lambda \ \phi_{\mathsf{n}}\big]
                &= \sum_{I\in S_{\lambda, \mu}} : [\phi_{\mathsf{n} \ -\lambda-\partial }
                \ \phi^{I}] \, E_I:  = \mathsf{n}.
            \end{aligned}
        \end{equation}
    We also have
        \begin{equation}
        \non
            \begin{aligned}
                & \sum_{{I,I'} \, \in S_{\lambda,\mu} }\big[:\phi^{I}
                \phi^{I'}\phi_{[E_{I'},E_{I}]}:\, {}_\lambda \phi_{\mathsf{n}}\big]\\
                &=  \sum_{I,I' \in S_{\lambda,\mu}} :[\phi_{\mathsf{n}
                \ -\lambda-\partial} \phi^{I}]: \phi^{I'}
                \phi_{[E_{I'}, E_{I}]}:: - \sum_{I,I' \,
                \in S_{\lambda,\mu}}:\phi^{I} [\phi_{\mathsf{n}
                \ -\lambda-\partial} :\phi^{I'}\phi_{[E_{I'}, E_{I}]}:]:\\
                &=  \sum_{I' \in S_{\lambda, \mu}} :\phi^{I'}
                \phi_{[E_{I'}, \mathsf{n}]}: -
                \sum_{I \in S_{\lambda, \mu}} :\phi^{I} \phi_{[\mathsf{n}, E_{I}]}: \
                = 2\sum_{(i, j, r) \in S_{\lambda, \mu}} :\phi^{I} \phi_{[E_{I}, \mathsf{n}]}:  \ .
            \end{aligned}
        \end{equation}
        Hence we conclude that
        \ben
        [d_{\lambda} \phi_{\mathsf{n}}] = \mathsf{n} +\chi(\mathsf{n})
        + \sum_{I \in S_{\lambda, \mu}} : \phi^{I} \phi_{[E_{I}, \mathsf{n}]}: \ .
        \een
        \item[(3)] We have
        \begin{equation}
        \non
            \begin{aligned}[]
                [d{}_{\lambda}\phi^{\mathsf{m}}] &= \frac{1}{2}
                \sum_{I,I'  \in S_{\lambda,\mu}} :\phi^{I}:
                \phi^{I'} \big[ \phi^{\mathsf{m}}_{\ \lambda} \phi_{[E_{I'}, E_{I}]}\big]::\\
                &= \frac{1}{2} \sum_{I,I' \in S_{\lambda,\mu}}
                \mathsf{m}\left( [E_{I'}, E_{I}]\right) : \phi^{I} \phi^{I'} : \
                = \frac{1}{2} \sum_{I \in S_{\lambda,\mu}} :\phi^{I} \phi^{E_{I} \cdot \mathsf{m}} : \ .
            \end{aligned}
        \end{equation}
    \end{enumerate}
\end{proof}

\begin{prop} \label{prop:differential}
    The endomorphism $Q$ is a differential on $ C^k(\lambda, \mu)$; that is, $Q^2=0.$
\end{prop}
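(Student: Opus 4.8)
The plan is to reduce the identity $Q^2=0$ to a finite check on generators and then to carry out that check with the help of Lemma~\ref{lem:coa}. Since $Q$ is an odd derivation of the normally ordered product that commutes with $\partial$, its square $Q^2=\tfrac12[Q,Q]$ is an \emph{even} derivation of the normally ordered product which again commutes with $\partial$ (the square of an odd derivation of a supercommutative-type product is a derivation). A derivation commuting with $\partial$ is determined by its values on any set of differential-algebra generators; as $C^k(\lambda,\mu)$ is generated as a differential algebra by the elements $a\in\agot$ and the fermions $\phi_{\mathsf n}$, $\phi^{\mathsf m}$, it therefore suffices to verify that $Q^2$ annihilates each of these three types of generators.

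For the antighosts $\phi^{\mathsf m}$ the computation is the cleanest: applying $Q$ to part~(3) of Lemma~\ref{lem:coa}, using the odd Leibniz rule and part~(3) once more, yields a sum of cubic monomials in the antighosts whose coefficients are iterated coadjoint actions $E_{I'}\cdot(E_{I}\cdot\mathsf m)$. Since the antighosts are odd with vanishing mutual $\lambda$-brackets, they anticommute inside the normal ordering, and antisymmetrizing the coefficient in the two summation indices leaves exactly the coadjoint Jacobiator, which vanishes because $\mathsf m\mapsto E\cdot\mathsf m$ is a genuine representation of $\mathfrak{n}_{\lambda,\mu}$ on $\mathfrak{n}_{\lambda,\mu}^{\ast}$. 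Hence $Q^2(\phi^{\mathsf m})=0$.

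For $\phi_{\mathsf n}$ and for $a\in\agot$ the same strategy applies but with more terms to organize. Expanding $Q^2(\phi_{\mathsf n})$ through parts~(1)--(3)—noting that the scalar $\chi(\mathsf n)$ is killed by $Q$, that $\mathsf n\in\agot$ so part~(1) governs $Q(\mathsf n)$, and that the bilinear fermion term is differentiated using parts~(2)--(3)—one finds three kinds of surviving contributions: cubic fermion terms, terms linear in the fermions carrying a structure constant, and terms proportional to the level $k$. The structure-constant terms reorganize into Jacobiators in $\agot$ and cancel; the inhomogeneous contributions coming from $\chi$ cancel precisely because $\chi$ vanishes on $[\mathfrak{n}_{\lambda,\mu},\mathfrak{n}_{\lambda,\mu}]$, which is the content of the definition \eqref{defchi}; and the level-dependent terms cancel by the invariance of the bilinear form \eqref{eq:bilinearform}. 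The case $a\in\agot$ is handled in the same manner, combining all three formulas of Lemma~\ref{lem:coa}.

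I expect the main obstacle to be the bookkeeping in the $\phi_{\mathsf n}$ and $a$ computations: tracking signs under the odd Leibniz rule, handling the non-commutativity and quasi-associativity of the normally ordered product when moving $\partial$ and reindexing the cubic fermion sums, and identifying which structure-constant sums assemble into vanishing Jacobiators. The conceptually delicate point is the simultaneous cancellation of the $k$-dependent terms, which forces the use of the specific invariant form \eqref{eq:bilinearform} together with the character property of $\chi$. Equivalently, one may package the whole argument through the identity $2Q^2=\big(Q(d)\big)_{(0)}$ and show that $Q(d)$ is a total $\partial$-derivative, trading the three generator computations for the single computation of $Q(d)=[d{}_\lambda d]\big|_{\lambda=0}$.
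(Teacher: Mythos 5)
Your argument is correct, but it is organized differently from the paper's, so a comparison is in order. The paper proves the stronger statement that the element $[d_{\lambda}d]$ vanishes identically: it expands $[d_{\lambda}d]$ through the three pieces of $d$, and the terms cancel by relabeling summation indices, by the Jacobi identity, and by the observation that $\chi$ kills every commutator $[E_{I'},E_I]$ (such commutators have $\mu$-degree at least $2$). You instead use the formal fact that $Q^2=\tfrac12[Q,Q]$ is an even derivation of the normally ordered product commuting with $\partial$ (valid purely from the odd Leibniz rule, with no appeal to commutativity or associativity), reduce to the generators $a\in\agot$, $\phi_{\mathsf{n}}$, $\phi^{\mathsf{m}}$, and verify vanishing case by case via Lemma~\ref{lem:coa}; your closing remark that this is equivalent to showing $Q(d)\in\partial\ts C^k(\lambda,\mu)$ via $2Q^2=(Q(d))_{(0)}$ is exactly the bridge to the paper's computation. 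Both routes rest on the same cancellation mechanisms and both succeed, but two differences are worth recording. First, in the paper's computation the level-dependent terms never appear at all: every $k$-term there is multiplied by $(E_I|E_{I'})$ with both arguments in $\mathfrak{n}_{\lambda,\mu}$, which vanishes because the invariant form pairs opposite $\mu$-degrees; in your route, by contrast, $Q^2(a)$ for a general $a\in\agot$ does produce $k$-terms, coming from $:\phi^I\ts Q([E_I,a]):$ and from $(a|E_I)\ts\partial\ts Q(\phi^I)$, and their cancellation genuinely requires the invariance identity $([E_J,a]|E_K)=(a|[E_K,E_J])$ — so your generator-wise check costs an extra verification that the paper's single identity sidesteps. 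Second, your description of the $\phi^{\mathsf{m}}$ case is slightly too optimistic: antisymmetrizing in the two summation indices and using the representation property of the coadjoint action does not finish the computation; a term proportional to $\sum_{I,I'}:\phi^I\phi^{I'}\phi^{[E_{I'},E_I]\cdot\mathsf{m}}:$ survives, and killing it requires the full three-index antisymmetry of the fermions together with one more application of the Jacobi identity. This is bookkeeping rather than a gap — the mechanism you name is the right one — but the cancellation is not purely the two-index Jacobiator you describe.
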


\begin{proof}
   To verify the relation $Q^2=0$, it is enough to show that $[d{}_\lambda d]=0$.
   We have
   \ben
    \bal
        & \sum_{I \in S_{\lambda, \mu}} \big[\, d\, {}_\lambda :\phi^{I} E_I:\big]
        = \sum_{I \in S_{\lambda, \mu}} \Big(:[d_\lambda \phi^{I}] E_I:
        -  :\phi^{I} [d_\lambda E_I]: + \int_{0}^\lambda [[d_\lambda \phi^{I}]_\mu E_I] \ d\mu\Big) \\
        &= \sum_{I,I' \in S_{\lambda,\mu}} \frac{1}{2} ::\phi^{I'} \phi^{E_{I'} \cdot E_{I}^{\ast}}: E_I:
        - \sum_{I,I' \in S_{\lambda,\mu}} :\phi^{I} :\phi^{I'}[E_{I'}, E_{I}] :: \\
        &= \sum_{I, I' \in S_{\lambda, \mu}} \frac{1}{2} ::\phi^{I} \phi^{I'}: [E_{I'}, E_I]:
        - \sum_{I, I' \in S_{\lambda, \mu}} :\phi^{I} :\phi^{I'} [E_{I'}, E_{I}]:: \\
        &= -\frac{1}{2}\sum_{I,I'\in S_{\lambda,\mu}} :\phi^{I}:\phi^{I'} [E_{I'}, E_I]:: \ .
    \eal
    \een
    One can easily check that
$[d_\lambda \phi^{\chi}]=0$ and hence it remains to verify that
\begin{equation} \label{eq:d^2=0-2}
    \frac{1}{2}\sum_{I,I' \in S_{\lambda,\mu}}
    \big[d_\lambda :\phi^{I} \phi^{I'} \phi_{[E_{I'}, E_{I}]}:\big]
    =-\sum_{I \in S_{\lambda, \mu}}\big[d_\lambda  :E_I \phi^{I}:\big].
\end{equation}
Expand the left hand side of \eqref{eq:d^2=0-2} as follows:
        \begin{align}
            &\frac{1}{2}\sum_{I,I' \in S_{\lambda,\mu}}\big[d_\lambda :\phi^{I}
            \phi^{I'} \phi_{[E_{I'}, E_{I}]}:\big]  \label{eq:d^2=0-3-1} \\
            &= \frac{1}{4} \sum_{I,I',I'',I''' \in S_{\lambda,\mu}} (E_I)^\ast
            \left( [E_{I'''}, E_{I''}] \right) ::\phi^{I''} \phi^{I'''}:
            \ :\phi^{I'}\phi_{[E_{I'}, E_{I}]}:: \label{eq:d^2=0-3-2}\\
            &- \frac{1}{4} \sum_{I,I',I'',I''' \in S_{\lambda,\mu}} (E_{I'})^\ast
            \left( [E_{I'''}, E_{I''}]\right) : \phi^{I}
            (:: \phi^{I''} \phi^{I'''} : \phi_{[E_{I'}, E_{I}]}:): \label{eq:d^2=0-3-3}\\
            & + \frac{1}{2}\sum_{I,I' \in S_{\lambda,\mu}} :\phi^{I} :\phi^{I'} [E_{I'}, E_{I}]::
            \label{eq:d^2=0-3-4}\\
            &+ \frac{1}{2} \sum_{I,I'\in S_{\lambda,\mu}} \chi
            \left( [E_{I'}, E_{I}] \right) : \phi^{I} \phi^{I'}: \label{eq:d^2=0-3-5}\\
            & + \frac{1}{2} \sum_{I,I',I''\in S_{\lambda,\mu}} :\phi^{I} :\phi^{I'} :\phi^{I''}
            \phi_{[E_{I''}, [E_{I'}, E_{I}]]}::: \label{eq:d^2=0-3-6}\\
            &- \frac{1}{2} \sum_{I,I',I'' \in S_{\lambda,\mu}} E_{I'}^\ast
            ([[E_{I'}, E_{I}], E_{I''}]) :\phi^{I} \phi^{I''}: \ \lambda\label{eq:d^2=0-3-7}\\
            &+ \frac{1}{2} \sum_{I,I',I'' \in S_{\lambda,\mu}} E_{I'}^\ast
            ([[E_{I''}, E_{I'}], E_I]) : \phi^{I} \phi^{I''}: \  \lambda \ . \label{eq:d^2=0-3-8}
        \end{align}
    The expression in \eqref{eq:d^2=0-3-5} is zero because
    $ \text{deg}([E_{I'}, E_{I}]) \geqslant  2$
    (if the commutator is nonzero). The sum of the terms
    in \eqref{eq:d^2=0-3-7} and \eqref{eq:d^2=0-3-8} is zero, as follows by relabeling
    the indices $I$ and $I''$ in \eqref{eq:d^2=0-3-8}, while
    the sum of the terms in
    \eqref{eq:d^2=0-3-2}, \eqref{eq:d^2=0-3-3} and \eqref{eq:d^2=0-3-6} is zero
    because of the Jacobi identity. Hence \eqref{eq:d^2=0-2} follows.
\end{proof}

By Proposition \ref{prop:differential}, the cohomology $H(C^k(\lambda,\mu), Q)$ is well-defined.
This enables us to state the key definition.

\begin{defn}\label{def:affwa}
The vertex algebra
    \[W^k(\lambda,\mu)= H(C^k(\lambda,\mu), Q)\]
    is called the generalized affine $W$-algebra of level $k$ associated
    with the partitions $\lambda$ and $\mu$.
\end{defn}

Recall that the operator $Q=d_{(0)}$ commutes with the endomorphism $\partial$.
Note also that $Q$
is a derivation with respect to the normally ordered product and
$\lambda$-bracket. In other words, we have the properties
\begin{align}
    & Q(:AB:)=:Q(A)B:+(-1)^{p(A)}:A \, Q(B): \label{eq:derv_norm}\\
    & Q([A\, {}_\lambda \, B])= [Q(A)_\lambda B] +(-1)^{p(A)}[A_\lambda Q(B)]
    \label{eq:derv_lambda}
\end{align}
for $A,B\in C^k(\lambda,\mu).$ Indeed, \eqref{eq:derv_norm} follows from
the Wick formula, whereas \eqref{eq:derv_lambda} follows from the Jacobi identity
of Lie conformal algebras. These two properties
imply that if $A$ and $B$ in $C^k(\lambda,\mu)$ are representatives of
elements in $W^k(\lambda,\mu)$, i.e. $Q(A)=Q(B)=0$, then $Q(:AB:)=Q([A{}_{\lambda}B])=0$.
Furthermore, we have the following lemma which implies
that a vertex algebra structure on $W^k(\lambda,\mu)$ is well-defined.

\begin{lem}\label{prop:BRST VA structure}
    \begin{enumerate}[(1)]
        \item[]
        \item If $A \in C^k(\lambda, \mu)$ is in the image of $Q$ then $\partial A$ is also in the image of $Q$.
        \item If $B \in \ker Q$, then $:Q(A)B:$ is in the image of $Q$ for $A \in C^k(\lambda, \mu)$.
        \item If $B \in \ker Q$ and $A \in C^k(\lambda, \mu)$ then $[Q(A)_\lambda B]$
        is in $\mathbb{C}[\lambda] \otimes \text{im}(Q)$.
    \end{enumerate}
\end{lem}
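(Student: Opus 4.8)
The plan is to derive all three assertions directly from the structural properties of $Q$ that have already been established, with no further computation inside $C^k(\lambda,\mu)$. Specifically, I will use that $Q=d_{(0)}$ commutes with $\partial$, and that $Q$ is an odd derivation with respect to the normally ordered product \eqref{eq:derv_norm} and with respect to the $\lambda$-bracket \eqref{eq:derv_lambda}. Each of the three parts is then a one-line consequence of these facts together with the hypothesis $Q(B)=0$; taken together they confirm that $\partial$, the normally ordered product and the $\lambda$-bracket all descend to the cohomology $W^k(\lambda,\mu)$.

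For part (1), I would write $A=Q(A')$ for some $A'\in C^k(\lambda,\mu)$ and use that $Q$ commutes with $\partial$ to obtain $\partial A=\partial\ts Q(A')=Q(\partial A')\in\text{im}(Q)$. For part (2), I would specialize \eqref{eq:derv_norm} to the given $A$ and $B$; since $Q(B)=0$, the second term on the right-hand side vanishes and we are left with $Q(:AB:)=\ :Q(A)B:$, exhibiting $:Q(A)B:$ as the image under $Q$ of $:AB:$, as required.

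For part (3), I would similarly specialize \eqref{eq:derv_lambda}, and the condition $Q(B)=0$ gives $Q([A_\lambda B])=[Q(A)_\lambda B]$. The only point deserving a word of care is the target space in the statement: the $\lambda$-bracket $[A_\lambda B]$ is a polynomial in the formal variable $\lambda$ with coefficients in $C^k(\lambda,\mu)$, say $[A_\lambda B]=\sum_j\lambda^j C_j$ with finitely many $C_j\in C^k(\lambda,\mu)$, and $Q$ acts $\mathbb{C}[\lambda]$-linearly since it does not involve $\lambda$. Hence $[Q(A)_\lambda B]=Q([A_\lambda B])=\sum_j\lambda^j\ts Q(C_j)\in\mathbb{C}[\lambda]\otimes\text{im}(Q)$, which is precisely the claim. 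I expect no genuine obstacle here: the lemma is a formal bookkeeping step, and the only subtlety is invoking the $\mathbb{C}[\lambda]$-linearity of $Q$ in part (3) so that the conclusion lands in $\mathbb{C}[\lambda]\otimes\text{im}(Q)$ rather than merely in $\text{im}(Q)$ after a specialization of $\lambda$.
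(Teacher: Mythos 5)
Your proposal is correct and follows essentially the same route as the paper: part (1) from the commutation of $Q$ with $\partial$, and parts (2) and (3) by specializing the derivation properties \eqref{eq:derv_norm} and \eqref{eq:derv_lambda} with $Q(B)=0$. Your extra remark on the $\mathbb{C}[\lambda]$-linearity of $Q$ in part (3) just makes explicit a point the paper leaves implicit.
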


\begin{proof}
 Suppose $Q(B)=A$ for some $B\in C^k(\lambda,\mu).$ Since $Q(\partial B)=\partial A,$
 the first assertion follows. The second and third assertions follow
 from \eqref{eq:derv_norm} and \eqref{eq:derv_lambda}, respectively, by
 taking into account the relation $[A_\lambda Q(B)]=0.$
\end{proof}

\begin{thm}
\label{thm:vera}
    The vertex algebra structure of $C^k(\lambda,\mu)$ induces the vertex
    algebra structure on $W^k(\lambda, \mu)$.
\end{thm}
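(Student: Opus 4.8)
The plan is to show that the normally ordered product and the $\lambda$-bracket of $C^k(\lambda,\mu)$ descend to well-defined operations on the cohomology $W^k(\lambda,\mu)=H(C^k(\lambda,\mu),Q)$, and that the resulting operations satisfy the vertex algebra axioms because $C^k(\lambda,\mu)$ does. The key technical input is Lemma~\ref{prop:BRST VA structure}, which I would use to establish that the operations are independent of the choice of representatives.

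First I would define the operations on cohomology classes. For $\overline A,\overline B\in W^k(\lambda,\mu)$ represented by cocycles $A,B\in\ker Q$, the discussion preceding Lemma~\ref{prop:BRST VA structure} already shows that $:AB:$ and $[A\,{}_\lambda\,B]$ are again cocycles, using \eqref{eq:derv_norm} and \eqref{eq:derv_lambda} together with $Q(A)=Q(B)=0$. So I set $:\overline A\,\overline B:\,:=\,\overline{:AB:}$ and $[\overline A\,{}_\lambda\,\overline B]:=\overline{[A\,{}_\lambda\,B]}$, and the derivation $\partial$ descends since $Q$ commutes with $\partial$ by construction. I would also note that the vacuum $\mathbf 1\in C^k(\lambda,\mu)$ lies in $\ker Q$, giving a candidate vacuum class.

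The main point to check is well-definedness, and this is where Lemma~\ref{prop:BRST VA structure} does the work. If I replace $A$ by $A+Q(A')$ and $B$ by $B+Q(B')$ with $B\in\ker Q$, then by part~(2) the term $:Q(A')B:$ lies in $\operatorname{im}(Q)$, and by a symmetric argument (using that $A\in\ker Q$ and that $Q$ is a derivation of the normally ordered product, so $:A\,Q(B'):$ differs from $\pm Q(:AB':)$ by $:Q(A)B':=0$) the term $:A\,Q(B'):$ also lies in $\operatorname{im}(Q)$; the cross term $:Q(A')Q(B'):$ is handled the same way. Hence $\overline{:AB:}$ is unchanged in cohomology. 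The identical argument with part~(3) of Lemma~\ref{prop:BRST VA structure} shows that $[A\,{}_\lambda\,B]$ changes only by an element of $\mathbb C[\lambda]\otimes\operatorname{im}(Q)$, so its class in $\mathbb C[\lambda]\otimes W^k(\lambda,\mu)$ is well-defined. Part~(1) similarly guarantees $\partial$ is well-defined on classes.

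Finally I would verify that the induced operations make $W^k(\lambda,\mu)$ a vertex algebra. The axioms---the sesquilinearity, skew-symmetry, and Jacobi identity for the $\lambda$-bracket, the quasi-associativity and the relation between $\partial$ and the two products---are all identities that hold in $C^k(\lambda,\mu)$ for arbitrary elements. Since the projection $C^k(\lambda,\mu)\supseteq\ker Q\to W^k(\lambda,\mu)$ is a linear map intertwining the lifted operations with those on representatives, each axiom descends verbatim by choosing representatives and passing to cohomology. The main obstacle is purely the well-definedness step above; once Lemma~\ref{prop:BRST VA structure} is in hand there is no genuine difficulty, since every structural identity is inherited from the ambient vertex algebra $C^k(\lambda,\mu)$.
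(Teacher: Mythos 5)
Your proposal is correct and follows essentially the same route as the paper: the paper's proof consists of invoking Lemma~\ref{prop:BRST VA structure} (together with the derivation properties \eqref{eq:derv_norm} and \eqref{eq:derv_lambda} established just before it) to conclude that $W^k(\lambda,\mu)$ is closed under $\partial$, the normally ordered product, and the $\lambda$-bracket. Your write-up merely makes explicit the well-definedness-on-representatives argument and the descent of the axioms, which the paper leaves implicit.
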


\begin{proof}
    By Lemma \ref{prop:BRST VA structure}, $W^k(\lambda, \mu)$ is closed
    under the derivation $\partial$, normally ordered product, and the $\lambda$-bracket.
\end{proof}

\subsection{Structure of the generalized affine $W$-algebra}

\label{subsec:sga}

In this section, we describe some properties of a minimal generating set
of the vertex algebra $W^k(\lambda,\mu)$ by introducing another
construction via a subcomplex of $C^k(\lambda,\mu).$
Introduce the `building blocks'
\begin{equation} \label{eq:building block}
    J_a = a + \sum_{I \in S_{\lambda, \mu}} :\phi^{I} \phi_{[E_{I}, a]}:
\end{equation}
for $a \in \mathfrak{a}$ and let $\pi_\leqslant a := a - \pi_+(a)$.
By direct computations, we can check that
    \begin{equation}  \label{eq:d with building block}
    \begin{aligned}
        \relax  [d_\lambda J_a] &=
        \sum_{I \in S_{\lambda, \mu}} :\phi^{I}J_{\pi_\le[E_{I}, a]}:
        - \ \phi^{a \cdot \chi} \\&+(\lambda + \partial)
        \sum_{I \in S_{\lambda, \mu}}\Big( k(E_{I} | a)
        + \text{tr}_{\mathfrak{n}_{\lambda, \mu}} (\pi_+  \text{ad } E_{I}
        \circ \pi_+  \text{ad } a) \Big)\phi^{I}
        \end{aligned}
    \end{equation}
    where $a \cdot \chi$ is the coadjoint action of $\mathfrak{n}_{\lambda, \mu}$
    on $\mathfrak{n}_{\lambda, \mu}^\ast$.
    Also, we have
    \begin{equation} \label{eq: phi with building block}
        [J_a{}_\lambda \phi_{\mathsf{n}}]=\phi_{[a, \mathsf{n}]} ,
        \quad  [J_a{}_\lambda \phi^{\mathsf{m}}]= \phi^{a \cdot \mathsf{m}} \ ,
    \end{equation}
and
    \ben
        \begin{aligned}
            \relax [J_{a \ \lambda} J_b] &=J_{[a, b]} + (k(a|b)
            +\text{tr}_{\mathfrak{n}_{\lambda, \mu}}(\pi_+ \text{ad }a \circ  \pi_+ \text{ad }b)) \lambda \\
            &- \sum_{I \in S_{\lambda, \mu}} :\phi^{I}
            \phi_{[\pi_\le[E_{I}, a], b]}:
            + \sum_{I \in S_{\lambda, \mu}} :\phi^{I} \phi_{[\pi_\le[E_{I}, b], a]}: \ .
        \end{aligned}
    \een
In particular, if $a$ and $b$ are both in
$\mathfrak{n}_{\lambda,\mu}$ or in the subspace
\beql{subp}
\mathfrak{p}:=\bigoplus_{i\leqslant 0} \mathfrak{a}(i),
\eeq
then we have
\begin{equation}\label{eq: two building block}
 [J_a {}_\lambda J_b]= J_{[a,b]}+ (k(a|b)
 + \text{tr}_{\mathfrak{n}_{\lambda, \mu}} (\pi_+  \text{ad } a \circ \pi_+  \text{ad } b))\lambda.
\end{equation}

We have the tensor product decomposition of
the vertex algebra $C^k(\lambda,\mu)$ given by
\ben
    C^k(\lambda,\mu)= C_+^k(\lambda,\mu)\otimes \widetilde{C}^k(\lambda,\mu),
\een
where $C_+^k(\lambda,\mu)$ and $\widetilde{C}^k(\lambda,\mu)$
are the vertex subalgebras respectively generated by the subspaces
$r_+:=\phi_{\mathfrak{n}_{\lambda, \mu}}\oplus J_{\mathfrak{n}_{\lambda, \mu}}$ and
$r_-:=\phi^{\mathfrak{n}_{\lambda,\mu}^\ast}\oplus J_{\mathfrak{p}}$.
By \eqref{eq: phi with building block} and \eqref{eq: two building block},
we can conclude $C_+^k(\lambda,\mu)$ and $\widetilde{C}^k(\lambda,\mu)$
are universal enveloping vertex algebras of the nonlinear Lie conformal
algebras
\ben
R_+:= \mathbb{C}[\partial]\otimes r_+\Fand
R_-:= \mathbb{C}[\partial]\otimes r_-.
\een
Moreover, the differential $Q$ has the properties
\ben
Q(C_+^k(\lambda,\mu)) \subset C_+^k(\lambda,\mu)\Fand
Q(\widetilde{C}^k(\lambda,\mu)) \subset\widetilde{C}^k(\lambda,\mu).
\een
Due to the fact that $H(C_+^k(\lambda,\mu), Q|_{C_+^k(\lambda,\mu)})= \mathbb{C}$, we then have
\ben
 H(C^k(\lambda,\mu), Q) \simeq H(C_+^k(\lambda,\mu), Q|_{C_+^k(\lambda,\mu)})
 \otimes  H(\widetilde{C}^k(\lambda,\mu), \widetilde{Q})
 \simeq  H(\widetilde{C}^k(\lambda,\mu), \widetilde{Q}),
\een
where $\widetilde{Q}:=Q|_{\widetilde{C}^k(\lambda,\mu)}$.
As a conclusion, the following proposition holds.

\begin{prop}\label{prop:isom}
The generalized $W$-algebra $W^k(\lambda, \mu)$
is isomorphic to $ H(\widetilde{C}^k(\lambda,\mu), \widetilde{Q}).$
\end{prop}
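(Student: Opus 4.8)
The plan is to read the isomorphism off the factorization of the complex $(C^k(\lambda,\mu),Q)$ assembled above, once the acyclicity of its ``plus'' factor is in hand. By Definition~\ref{def:affwa} we have $W^k(\lambda,\mu)=H(C^k(\lambda,\mu),Q)$, so the whole task is to identify this cohomology with $H(\widetilde{C}^k(\lambda,\mu),\widetilde{Q})$.

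First I would regard $C^k(\lambda,\mu)=C_+^k(\lambda,\mu)\otimes\widetilde{C}^k(\lambda,\mu)$ as a factorization of cochain complexes rather than merely of vertex algebras. Grading by fermionic charge makes $Q$ a degree-one differential, and since $Q$ is a derivation preserving each factor, under the PBW identification it acts as the total differential $Q(A\otimes B)=Q_+(A)\otimes B+(-1)^{p(A)}A\otimes\widetilde{Q}(B)$, where $Q_+:=Q|_{C_+^k(\lambda,\mu)}$. Because $\mathbb{C}$ is a field, the algebraic K\"unneth formula applies with no higher Tor terms and gives
\[
H(C^k(\lambda,\mu),Q)\;\simeq\;H(C_+^k(\lambda,\mu),Q_+)\otimes H(\widetilde{C}^k(\lambda,\mu),\widetilde{Q}).
\]
As the corresponding isomorphism is induced on the second factor by the vertex-algebra inclusion $\widetilde{C}^k(\lambda,\mu)\hookrightarrow C^k(\lambda,\mu)$, it will in fact be an isomorphism of vertex algebras.

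The substantive input, and the step I expect to be the main obstacle, is the acyclicity $H(C_+^k(\lambda,\mu),Q_+)=\mathbb{C}$. The subalgebra $C_+^k(\lambda,\mu)$ is freely generated over $\mathbb{C}[\partial]$ by the odd elements $\phi_{\mathsf{n}}$ and the even building blocks $J_{\mathsf{n}}$ for $\mathsf{n}\in\mathfrak{n}_{\lambda,\mu}$. Combining Lemma~\ref{lem:coa}(2) with the definition \eqref{eq:building block} of the building blocks yields the clean relation $Q(\phi_{\mathsf{n}})=J_{\mathsf{n}}+\chi(\mathsf{n})$, and applying $Q$ once more together with $Q^2=0$ forces $Q(J_{\mathsf{n}})=0$ inside $C_+^k(\lambda,\mu)$; concretely the potentially obstructing $\phi^I$-terms drop out because $\pi_\le[E_I,\mathsf{n}]=0$, $\mathsf{n}\cdot\chi=0$, and the invariant form vanishes on $\mathfrak{n}_{\lambda,\mu}\times\mathfrak{n}_{\lambda,\mu}$. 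Thus $Q_+$ pairs each odd generator $\phi_{\mathsf{n}}$ with the even generator $J_{\mathsf{n}}$ up to the scalar shift $\chi(\mathsf{n})$, exhibiting $C_+^k(\lambda,\mu)$ as an affine Koszul-type complex whose cohomology should reduce to the constants. To prove the vanishing of the higher cohomology I would filter by the number of $\phi$- and $J$-factors: on the associated graded object the scalar $\chi(\mathsf{n})$ drops out and the differential becomes the linear Koszul differential $\phi_{\mathsf{n}}\mapsto J_{\mathsf{n}}$, which admits the evident contracting homotopy $J_{\mathsf{n}}\mapsto\phi_{\mathsf{n}}$ and is therefore acyclic away from $\mathbb{C}$; a spectral-sequence comparison then lifts this to $H(C_+^k(\lambda,\mu),Q_+)=\mathbb{C}$.

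With the acyclicity established, the K\"unneth isomorphism collapses to
\[
W^k(\lambda,\mu)=H(C^k(\lambda,\mu),Q)\;\simeq\;\mathbb{C}\otimes H(\widetilde{C}^k(\lambda,\mu),\widetilde{Q})\;\simeq\;H(\widetilde{C}^k(\lambda,\mu),\widetilde{Q}),
\]
which is exactly the assertion of the proposition. Beyond the formal K\"unneth reduction and the bookkeeping recorded above, the only genuine work is the construction of the homotopy contracting the ``plus'' factor.
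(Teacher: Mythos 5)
Your proposal follows essentially the same route as the paper: it likewise uses the decomposition $C^k(\lambda,\mu)=C_+^k(\lambda,\mu)\otimes\widetilde{C}^k(\lambda,\mu)$, the fact that $Q$ preserves each factor, the acyclicity $H(C_+^k(\lambda,\mu),Q|_{C_+^k(\lambda,\mu)})=\mathbb{C}$, and the K\"unneth isomorphism to conclude. The only difference is that the paper asserts the acyclicity of the plus factor as a known fact, while you correctly sketch its proof (the relations $Q(\phi_{\mathsf{n}})=J_{\mathsf{n}}+\chi(\mathsf{n})$, $Q(J_{\mathsf{n}})=0$, followed by the filtered Koszul-homotopy argument), which is a sound filling-in of that step rather than a different approach.
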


Define the conformal weight $\Delta$
on the complex $\widetilde{C}^k(\lambda, \mu)$ induced by
\begin{equation} \label{eq:Delta in W}
    \Delta(J_a)= 1-j_a, \quad \Delta(\phi^{I})=\text{deg}(E_{I}), \quad \Delta(\partial)=1
\end{equation}
for $a\in \mathfrak{a}(j_a)$. We denote the subspace of conformal
weight $\Delta$ of $\widetilde{C}^k(\lambda,\mu)$ by $\widetilde{C}^k(\lambda,\mu)[\Delta].$
Observe that the differential $d$ preserves the conformal grading and
\[\widetilde{C}^k(\lambda,\mu)[\Delta_1]\, {}_{(n)}\,
\widetilde{C}^k(\lambda,\mu)[\Delta_2] \ \subset\ \widetilde{C}^k(\lambda,\mu)[\Delta_1+\Delta_2-n-1].
\]
Note that an endomorphism $H$ of $\widetilde{C}^k(\lambda,\mu)$ such that $H(A)=\Delta(A)A$
for any homogeneous element $A\in \widetilde{C}^k(\lambda,\mu)$ is a Hamiltonian operator.
This observation will be used in Section \ref{subsec:Zhu algebra}.

\begin{rem} \label{rem:conf}
    If a Lie algebra admits a nondegenerate bilinear form, the corresponding
    affine vertex algebra at a non-critical level contains a conformal vector.
    This is known as the {\em Sugawara construction},
    and it induces conformal vectors of the $W$-algebras (off the critical level)
     which are quantum Hamiltonian
    reductions of the affine vertex algebras.
    However, since our bilinear form \eqref{eq:bilinearform} is degenerate,
    the affine vertex algebra $V^k(\mathfrak{a})$ need not have the Sugawara
    operator, and the conformal weight \eqref{eq:Delta in W} does not come
    from a conformal vector of $W^k(\lambda, \mu)$; see also Example \ref{rem:no conformal vector}.
    \qed
\end{rem}

We will need the $\mathbb{Z}/2$-bigrading on $\widetilde{C}^k(\lambda, \mu)$ defined as follows:
\beql{eq:bigrading}
\begin{aligned}
    \text{gr}(J_a)&= \left(j_a-\frac{1}{2}, -j_a+\frac{1}{2}\right),
    \qquad \text{gr}(\partial)=(0,0),\\
    \text{gr}(\phi^{I})&= \left(-\text{deg}(E_I)+\frac{1}{2},
    \text{deg}(E_I)+\frac{1}{2} \right).
\end{aligned}
\eeq
The bigrading induces the $\mathbb{Z}_{+}$-grading on $\widetilde{C}^k(\lambda,\mu)$ given by
\ben
\widetilde{C}^k(\lambda,\mu):
= \bigoplus_{n\in \mathbb{Z}_{+}}\widetilde{C}^k(\lambda,\mu)^n,
\een
 where
\ben
    \widetilde{C}^k(\lambda,\mu)^n= \text{Span}_{\mathbb{C}} \big\{\, :a_1a_2
\dots a_s: \, |\, a_k\in R_-, \  \text{gr}(a_k)=(p_k, q_k), \ \sum_{k=1}^{s}(p_k+q_k)=n \, \big\}.
\een
It also induces the decreasing filtration
\begin{equation} \label{eq:filtration} F^p(\widetilde{C}^k(\lambda,\mu))
=\text{Span}_{\mathbb{C}} \big\{\, :a_{1}a_{2}
\dots a_{s}: \, |\, \, a_{k}\in R_-, \  \text{gr}(a_{k})=(p_k, q_k),
\ \sum_{k=1}^{s}p_k \geqslant  p \, \big\}.
\end{equation}
One easily checks that
\ben
    \widetilde{Q}(F^p \widetilde{C}^k(\lambda,\mu)^n[\Delta])
    \subset F^p \widetilde{C}^k(\lambda,\mu)^{n+1}[\Delta].
\een
The associated graded algebra is then defined by
\begin{equation} \label{eq:graded complex}
\text{gr}(\widetilde{C}^k(\lambda,\mu)):=\bigoplus_{p,q\in \mathbb{Z}/{2}}
\text{gr}^{pq} \widetilde{C}^k(\lambda,\mu),
\end{equation}
where
\ben
\text{gr}^{pq} \widetilde{C}^k(\lambda,\mu)
= F^p(\widetilde{C}^k(\lambda,\mu)^{p+q})/F^{p+\frac{1}{2}}(\widetilde{C}^k(\lambda,\mu)^{p+q})
\een
and
\ben
F^p(\widetilde{C}^k(\lambda,\mu)^{p+q})
= F^p(\widetilde{C}^k(\lambda,\mu))\cap \widetilde{C}^k(\lambda,\mu)^{p+q}.
\een
The corresponding graded cohomology is defined by
\begin{equation} \label{eq:graded cohomology}
    H^{pq}( \text{gr} (\widetilde{C}^k(\lambda,\mu)), \widetilde{Q}^{\text{gr}})
    = \frac{\text{ker}(\widetilde{Q}^{\text{gr}}:\text{gr}^{pq}\widetilde{C}^{k}
    (\lambda,\mu)\to\text{gr}^{p\, q+1}\widetilde{C}^{k}(\lambda,\mu)) }{\text{im}
    (\widetilde{Q}^{\text{gr}}:\text{gr}^{p\, q-1}\widetilde{C}^{k}(\lambda,\mu)
    \to\text{gr}^{p q}\widetilde{C}^{k}(\lambda,\mu)) },
\end{equation}
where $\widetilde{Q}^{\text{gr}}$ is the differential on
$ \text{gr}(\widetilde{C}^{k}(\lambda,\mu))$  induced from $\widetilde{Q}.$

Some structural properties of the generalized affine $W$-algebra $W^k(\lambda,\mu)$
will be derived by
using the cohomology of the graded algebra. The following
two lemmas will be essential to derive the main result of this section.
Recall the character $\chi$ as defined in \eqref{defchi} and the subspace
$\mathfrak{p}$ defined in \eqref{subp}.

\begin{lem}\label{lem:surjectivity}
    Extend the domain of the character $\chi$ to $\mathfrak{a}$
    by letting $\chi(p)=0$ for any $p\in \mathfrak{p}.$
    Then the map
    \begin{equation} \label{varphi}
    \varphi : \mathfrak{p} \to \mathfrak{a}^\ast
    \to \mathfrak{n}_{\lambda,\mu}^\ast, \qquad a
    \mapsto a \cdot \chi \mapsto (a \cdot \chi)|_{\mathfrak{n}_{\lambda, \mu}}
    \end{equation}
    is surjective.
    \begin{proof}
        If $E_{ij}^{(r)}\in \mathfrak{n}_{\lambda,\mu}$,
        then $\col_\mu(i) < \col_\mu(j)$ and
        $\lambda_j - \text{min}\{\lambda_i, \lambda_j\} \leqslant r < \lambda_j$ and
        hence $E_{j-1,i}^{(\lambda_j-r-1)}$ is an element in $\mathfrak{p}.$
        Observe that
    \begin{equation}
    \non
    \begin{aligned}
        &\varphi(E_{j-1, i}^{(\lambda_j - r - 1)})(E_{i', j'}^{(r')})
        = (E_{j-1, i}^{(\lambda_j - r - 1)} \cdot \chi) (E_{i', j'}^{(r')})
        = \chi([E_{i'j'}^{(r')}, E_{j-1, i}^{(\lambda_j-r-1)}]) \\
        &= \delta_{j', j-1} \delta_{i', i-1}
        \delta_{r', r+\lambda_i - \lambda_j}
        \delta_{\col_\mu (i-1)+1, \col_\mu (i)}
        - \delta_{i, i'} \delta_{j, j'} \delta_{r, r'} \delta_{\col_\mu (j-1)+1, \col_\mu (j)}
    \end{aligned}
    \end{equation}
    and so
    \begin{equation} \label{eq:lem_sur_pf}
    \varphi(E_{j-1, i}^{(\lambda_j - r - 1)})
    =E_{j-1, i}^{(\lambda_j-r-1)} \cdot \chi
    = (E_{i-1, j-1}^{(\lambda_i-\lambda_j+r)})^\ast
    \delta_{\col_\mu(i-1)+1, \col_\mu(i)} - (E_{ij}^{(r)})^\ast.
    \end{equation}
    Here we noted that $\col_\mu(j) \ne 1$ which
    implies $\col_\mu(j-1)+1 = \col_\mu(j)$.

    Now we show that any element $(E_{ij}^{(r)})^*$ of $\mathfrak{n}_{\lambda,\mu}^*$ is
    contained in
    $\text{Im}\, \varphi$ by the induction on $\col_\mu(i)$. If $\col_{\mu}(i) = 1$ then
    \ben
    \varphi(E_{j-1, i}^{(\lambda_j - r-1)}) = -(E_{ij}^{(r)})^\ast
    \een
    by \eqref{eq:lem_sur_pf}, i.e.  $(E_{ij}^{(r)})^\ast \in \text{Im}(\varphi)$.
    Suppose $\col_\mu(i) = c>1$. By the induction hypothesis, we have
    $(E_{i-1, j-1}^{(\lambda_i-\lambda_j+r)})^\ast\in \text{Im}\, \varphi$ and
    \ben
        \begin{aligned}
            \varphi(E_{j-1, i}^{(\lambda_j-r-1)})
            = (E_{i-1, j-1}^{(\lambda_i-\lambda_j+r)})^\ast - (E_{ij}^{(r)})^\ast\in \text{Im}\, \varphi.
        \end{aligned}
    \een
    Hence $(E_{ij}^{(r)})^\ast\in \text{Im}\, \varphi.$
    Therefore $\varphi$ is surjective.
    \end{proof}
\end{lem}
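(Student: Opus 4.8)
The plan is to prove surjectivity by exhibiting an explicit preimage for each dual basis vector and then running a triangular induction. Since $\mathfrak{n}_{\lambda,\mu}^\ast$ has the basis $\{(E_{ij}^{(r)})^\ast : (i,j,r)\in S_{\lambda,\mu}\}$, it suffices to show that each $(E_{ij}^{(r)})^\ast$ lies in $\mathrm{Im}\,\varphi$. Unwinding the definition gives $\varphi(a)(x)=(a\cdot\chi)(x)=\chi([x,a])$, and $\chi$ is supported only on the adjacent generators $E_{p\,p+1}^{(\lambda_{p+1}-1)}$ with $\col_\mu(p+1)=\col_\mu(p)+1$. Reading the structure constants $[E_{ij}^{(r)},E_{hl}^{(s)}]=\delta_{hj}E_{il}^{(r+s)}-\delta_{il}E_{hj}^{(r+s)}$, the natural way to force $\chi([E_{ij}^{(r)},a])\ne 0$ is to choose $a$ so that the bracket produces the generator $E_{j-1\,j}^{(\lambda_j-1)}$; this points to the candidate $a = E_{j-1,\,i}^{(\lambda_j-r-1)}$.

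First I would record that this candidate lies in $\mathfrak{p}$: its $\mu$-degree is $\col_\mu(i)-\col_\mu(j-1)$, and because $\col_\mu(i)<\col_\mu(j)$ forces $\col_\mu(j)\ge 2$ (so $j$ is not in the first $\mu$-column and $j-1$ sits immediately to its left, giving $\col_\mu(j-1)+1=\col_\mu(j)$), this degree equals $\col_\mu(i)-\col_\mu(j)+1\le 0$. Next I would compute $\varphi(a)$ directly from the structure constants: only two triples feed a $\chi$-supported generator, and the calculation yields
\[
\varphi\big(E_{j-1,i}^{(\lambda_j-r-1)}\big)
=\big(E_{i-1,\,j-1}^{(\lambda_i-\lambda_j+r)}\big)^\ast\,\delta_{\col_\mu(i-1)+1,\,\col_\mu(i)}
-\big(E_{ij}^{(r)}\big)^\ast,
\]
the leading term being $-(E_{ij}^{(r)})^\ast$ and the correction (when present) having first index in the strictly smaller $\mu$-column $\col_\mu(i)-1$.

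An induction on $\col_\mu(i)$ then closes the argument. In the base case $\col_\mu(i)=1$ there is no box to the left of $i$ in $\mu$, the Kronecker indicator vanishes, and $\varphi(a)=-(E_{ij}^{(r)})^\ast$. For $\col_\mu(i)>1$, the correction $(E_{i-1,j-1}^{(\lambda_i-\lambda_j+r)})^\ast$ has first index in column $\col_\mu(i)-1$ and so lies in $\mathrm{Im}\,\varphi$ by the inductive hypothesis, forcing $(E_{ij}^{(r)})^\ast\in\mathrm{Im}\,\varphi$ as well. A minor point to absorb here is that whenever $(i-1,j-1,\lambda_i-\lambda_j+r)\notin S_{\lambda,\mu}$ the correction is simply the zero functional on $\mathfrak{n}_{\lambda,\mu}$ and drops out.

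The step I expect to be the real obstacle is verifying that the preimage $E_{j-1,i}^{(\lambda_j-r-1)}$ is a genuine nonzero element of the centralizer, i.e. that the shifted triple $(j-1,i,\lambda_j-r-1)$ satisfies the defining inequalities $\lambda_i-\min(\lambda_{j-1},\lambda_i)\le \lambda_j-r-1<\lambda_i$ of $S^e$. The upper inequality follows at once from $r\ge\lambda_j-\min(\lambda_i,\lambda_j)$, but the lower one is delicate when $\lambda_{j-1}<\lambda_i$ (the configurations with $i$ beyond $j$), and I anticipate that there the candidate can collapse to zero. In such cases I would instead bracket on the opposite side, taking $a'=E_{j,\,i+1}^{(\lambda_{i+1}-r-1)}$ to produce the generator $E_{i\,i+1}^{(\lambda_{i+1}-1)}$; the monotonicity $\lambda_{j-1}\le\lambda_i\le\lambda_{i+1}$ rules out both candidates vanishing simultaneously, and $a'$ carries a symmetric triangularity running in the opposite direction on $\col_\mu$. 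Making the two constructions cohere into a single clean induction is the technical heart I would need to settle.
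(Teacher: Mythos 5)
Your proposal follows the paper's own argument step for step in its main line: the same candidate preimage $E_{j-1,i}^{(\lambda_j-r-1)}$, the same formula \eqref{eq:lem_sur_pf}, and the same induction on $\col_\mu(i)$, including the remark that an out-of-range correction term is the zero functional. The noteworthy point is the obstacle you flag at the end: it is not hypothetical, and the paper's proof as written does not address it. Concretely, take $\lambda=(1,2,2,2,2)$, $\mu=(2,3)$ and the target $(i,j,r)=(3,2,1)\in S_{\lambda,\mu}$ (here $\col_\mu(3)=1<2=\col_\mu(2)$). The prescribed preimage is $E_{1,3}^{(0)}$, but $(1,3,0)\notin S^e$ because $\lambda_3-\min(\lambda_1,\lambda_3)=1>0$; this element is zero, so \eqref{eq:lem_sur_pf} would read $0=-(E_{32}^{(1)})^\ast$, which is false. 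Exactly as you predict, the failure occurs precisely when $\lambda_{j-1}<\lambda_i$ (which forces $i>j$) and $r\geqslant\lambda_{j-1}+\lambda_j-\lambda_i$.

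Your repair is the right one, and the induction you leave unsettled does cohere; here is how it closes. For $i<j$ the lower $S^e$-inequality for the paper's candidate becomes $0\leqslant\lambda_j-r-1$, which always holds, and the correction term has indices $(i-1,j-1)$, so the class of targets with $i<j$ is stable under the paper's reduction and the original induction on $\col_\mu(i)$ is complete there. If instead $i>j$ then, since $\col_\mu(i)<\col_\mu(j)$, the box $i$ cannot be the last box of its $\mu$-row (otherwise $\col_\mu(i)=\mu_{\row_\mu(i)}\geqslant\mu_{\row_\mu(j)}\geqslant\col_\mu(j)$); hence $i+1\leqslant n$ lies in the same row, so $\chi\bigl(E_{i,i+1}^{(\lambda_{i+1}-1)}\bigr)=1$ and $\col_\mu(i+1)=\col_\mu(i)+1\leqslant\col_\mu(j)$ puts $a'=E_{j,\,i+1}^{(\lambda_{i+1}-r-1)}$ in $\mathfrak{p}$, while $r<\lambda_j\leqslant\lambda_{i+1}$ gives $(j,i+1,\lambda_{i+1}-r-1)\in S^e$. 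The computation parallel to \eqref{eq:lem_sur_pf} then yields
\[
\varphi(a')=(E_{ij}^{(r)})^\ast-\delta_{\col_\mu(j)+1,\,\col_\mu(j+1)}\,\bigl(E_{i+1,\,j+1}^{(\lambda_{j+1}-\lambda_{i+1}+r)}\bigr)^\ast,
\]
whose correction term again has first index exceeding the second, so the class $i>j$ is stable under your reduction as well. The two recursions therefore never interact: since along your reduction the first index strictly increases and no element of $S_{\lambda,\mu}$ can have first index $n$, the recursion for $i>j$ terminates (equivalently, induct on $n-i$), while for $i<j$ one runs the paper's induction. With this case split written down, your sketch becomes a complete proof --- and one that covers a configuration the paper's own argument misses (note the gap is invisible in the paper's featured cases $\mu=(n)$ and $\mu=(1^{n-2}2)$, where $i<j$ always).
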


The next lemma concerns the properties of the complex $(\widetilde{C}^k(\lambda,\mu),\widetilde{Q})$.
To state it, recall that $r_-=\phi^{\mathfrak{n}_{\lambda,\mu}^*}\oplus J_{\mathfrak{p}}$
and set
\ben
r_-^{pq}[\Delta]= \text{gr}^{pq}(\widetilde{C}^k(\lambda,\mu)[\Delta]\cap r_-).
\een

\begin{lem} \label{lem:good}
In the complex $(\widetilde{C}^k(\lambda,\mu),\widetilde{Q})$ we have the following.
\begin{enumerate}[(1)]
    \item For any $\Delta\in \mathbb{Z},$ the space
    $\widetilde{C}^k(\lambda,\mu)[\Delta]$ is finite dimensional.
    \item
    We have the inclusion
    \ben
    \widetilde{Q}^{\text{gr}}(r_-^{pq}[\Delta]) \subset r_-^{p\, q+1}[\Delta],
    \een
    so that $\widetilde{Q}$ is almost a linear differential.
    \item The differential $\widetilde{Q}$ on $\widetilde{C}^k(\lambda,\mu)$
    is a good differential; that is,
    \ben
    H^{pq}(\text{gr}(\widetilde{C}^{k}(\lambda, \mu))[\Delta], \widetilde{Q}^{\text{gr}})=0
    \een
    unless $p+q=0.$
\end{enumerate}
\end{lem}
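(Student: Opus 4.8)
The plan is to treat the three assertions in turn, with the bulk of the work going into~(3).

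For~(1), I would argue that every differential-algebra generator of $\widetilde{C}^k(\lambda,\mu)$ carries strictly positive conformal weight. Indeed, by \eqref{eq:Delta in W} we have $\Delta(J_a)=1-j_a\geqslant 1$ for $a\in\mathfrak{a}(j_a)\subset\mathfrak{p}$ (so that $j_a\leqslant 0$), while $\Delta(\phi^{\mathsf{m}})=\deg(E_I)\geqslant 1$ for $\mathsf{m}=(E_I)^\ast$ with $I\in S_{\lambda,\mu}$, and $\Delta(\partial)=1$. Since $\widetilde{C}^k(\lambda,\mu)$ is spanned by normally ordered monomials $:\partial^{m_1}g_1\cdots\partial^{m_s}g_s:$ in the finitely many generators $g_l$, and each factor contributes at least $1$ to the total weight, a fixed value of $\Delta$ forces $s\leqslant\Delta$ and $\sum_l m_l\leqslant\Delta$. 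Only finitely many such monomials exist, so $\widetilde{C}^k(\lambda,\mu)[\Delta]$ is finite dimensional.

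For~(2), I would compute $\widetilde{Q}$ on the generators $J_a$ ($a\in\mathfrak{p}$) and $\phi^{\mathsf{m}}$ ($\mathsf{m}\in\mathfrak{n}_{\lambda,\mu}^\ast$) and isolate the filtration-leading term with respect to \eqref{eq:filtration}. From \eqref{eq:d with building block} one reads off that $\widetilde{Q}(J_a)$ consists of the linear character term $-\phi^{a\cdot\chi}=-\phi^{\varphi(a)}$, the quadratic term $\sum_I:\phi^I J_{\pi_{\leqslant}[E_I,a]}:$, and a $\partial$-term linear in the $\phi^I$; from Lemma~\ref{lem:coa}(3), $\widetilde{Q}(\phi^{\mathsf{m}})=\tfrac12\sum_I:\phi^I\phi^{E_I\cdot\mathsf{m}}:$ is purely quadratic. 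A direct bookkeeping with the bigrading \eqref{eq:bigrading} then shows that on each generator exactly the linear piece preserves the filtration degree $p$, whereas every quadratic contribution (and the remaining $\partial$-term) strictly raises $p$ and hence lands in $F^{p+1/2}$. Passing to the associated graded, this yields that $\widetilde{Q}^{\text{gr}}$ acts on the generating space $r_-$ by $J_a\mapsto-\phi^{\varphi(a)}$ and $\phi^{\mathsf{m}}\mapsto 0$; in particular $\widetilde{Q}^{\text{gr}}(r_-^{pq}[\Delta])\subset r_-^{p\,q+1}[\Delta]$, which is the asserted (almost) linearity.

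For~(3), the idea is that, by~(2), the complex $(\text{gr}(\widetilde{C}^k(\lambda,\mu)),\widetilde{Q}^{\text{gr}})$ is the free graded-commutative differential algebra generated by the \emph{linear} complex $(R_-,\widetilde{Q}^{\text{gr}})$, in which $\widetilde{Q}^{\text{gr}}$ maps $\mathbb{C}[\partial]\otimes J_{\mathfrak{p}}$ (in $p+q=0$) to $\mathbb{C}[\partial]\otimes\phi^{\mathfrak{n}_{\lambda,\mu}^\ast}$ (in $p+q=1$) via $-\varphi$ and annihilates $\phi^{\mathfrak{n}_{\lambda,\mu}^\ast}$. The key input is Lemma~\ref{lem:surjectivity}: since $\varphi$ is surjective, I can split $J_{\mathfrak{p}}=J_{\ker\varphi}\oplus J_U$ with $\varphi|_U\colon U\xrightarrow{\ \sim\ }\mathfrak{n}_{\lambda,\mu}^\ast$, so that $(R_-,\widetilde{Q}^{\text{gr}})$ decomposes as the acyclic two-term complex $\mathbb{C}[\partial]\otimes U\xrightarrow{\ \sim\ }\mathbb{C}[\partial]\otimes\phi^{\mathfrak{n}_{\lambda,\mu}^\ast}$ together with the trivial complex $\mathbb{C}[\partial]\otimes J_{\ker\varphi}$ concentrated in degree $p+q=0$. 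Thus $H(R_-,\widetilde{Q}^{\text{gr}})$ sits entirely in degree $p+q=0$. Applying the K\"unneth theorem within each (finite-dimensional, by~(1)) conformal-weight space --- equivalently, using that the free graded-commutative algebra on an acyclic-except-in-degree-$0$ complex has cohomology concentrated in degree $0$ --- I conclude that $H^{pq}(\text{gr}(\widetilde{C}^k(\lambda,\mu))[\Delta],\widetilde{Q}^{\text{gr}})=0$ unless $p+q=0$.

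The main obstacle is the passage carried out in~(2) and exploited in~(3): one must verify, through the somewhat delicate degree bookkeeping afforded by the bigrading \eqref{eq:bigrading}, that the only filtration-preserving part of $\widetilde{Q}$ on generators is the \emph{linear} character map $a\mapsto-\phi^{\varphi(a)}$, every genuinely nonlinear term being pushed into $F^{p+1/2}$. Once this is in place, the surjectivity of $\varphi$ from Lemma~\ref{lem:surjectivity} does the real work, reducing the cohomology computation to the acyclicity of an explicit two-term Koszul-type complex.
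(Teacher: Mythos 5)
Your proposal is correct and follows essentially the same route as the paper: the paper's own proof consists precisely of the observations that $\widetilde{Q}^{\text{gr}}(J_a)=-\phi^{a\cdot\chi}$ and $\widetilde{Q}^{\text{gr}}(\phi^{\mathsf{m}})=0$ on the generating space $r_-$, combined with the surjectivity of $\varphi$ from Lemma~\ref{lem:surjectivity}, which is exactly what your parts (2) and (3) establish in expanded form (and your part (1) fills in what the paper calls clear). The only caveat is that your bookkeeping in (2) only works if the components of $\text{gr}(\phi^{I})$ in \eqref{eq:bigrading} are read as $\bigl(\tfrac12-\deg(E_I),\ \tfrac12+\deg(E_I)\bigr)$, i.e.\ with the degree of the dual element $(E_I)^\ast$ --- which is evidently what is intended, since with the order as printed the character term $-\phi^{a\cdot\chi}$ would also land in strictly higher filtration, $\widetilde{Q}^{\text{gr}}$ would vanish on $r_-$, and part (3) would fail.
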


\begin{proof}
    Part (1) is clear, while parts (2) and (3) follow directly from the fact that
    \ben
        \widetilde{Q}^{\text{gr}}(J_a)
        =-\phi^{a \cdot \chi}\Fand \widetilde{Q}^{\text{gr}}(\phi^{\mathsf{m}})=0
    \een
    for $a\in \mathfrak{p}$ and
    $\mathsf{m}\in \mathfrak{n}^\ast_{\lambda, \mu}$, along with Lemma \ref{lem:surjectivity}.
\end{proof}

\begin{rem}\label{rem:chi}
    The choice of \eqref{defchi} was made so as to ensure Lemma~\ref{lem:surjectivity} and Lemma~\ref{lem:good} (3). Let $\chi : \mathfrak{n}_{\lambda, \mu} \to \mathbb{C}$ be an arbitrary character. The condition $\chi(E_{i, i+1}^{(\lambda_{i+1}-1)}) \ne 0$ is necessary for Lemma~\ref{lem:surjectivity}. If one chooses $\chi$ different from \eqref{defchi} but still satisfying this condition, then Lemma~\ref{lem:good} (3) and Theorem~\ref{thm:degree zero} remain valid.
\end{rem}

Invoking
\cite[Lemma 4.18 and Theorem 4.19]{dk:fa},
we come to the following theorem.

\begin{thm} \label{thm:degree zero}
Let  $\{e_i \ |\ i \in S_{\chi}\}$ be a basis of
$r_-^{p,-p}[\Delta] \cap \text{ker\,} \widetilde{Q}^{\text{gr}}$ and
let
\ben
E_i= e_i+ \epsilon _i\in F^p(\widetilde{C}^k(\lambda,\mu)^0)[\Delta] \cap \text{ker\,} \widetilde{Q}
\een
for some $\epsilon_i\in F^{p+\frac{1}{2}} \widetilde{C}^{k}(\lambda, \mu)^0[\Delta].$ Then
\[H(R_-,\widetilde{Q}):= \mathbb{C}[\partial]\otimes
\text{Span}_{\mathbb{C}}\{E_i|i\in S_\chi\}\]
is a nonlinear Lie conformal algebra. Moreover,
\[H(\widetilde{C}^k(\lambda,\mu),\widetilde{Q})
=H^0(\widetilde{C}^k(\lambda,\mu),\widetilde{Q}) \simeq V(H(R_-,\widetilde{Q}))\]
where $V(H(R_-,\widetilde{Q}))$ is the
universal enveloping vertex algebra of $H(R_-,\widetilde{Q}).$
\qed
\end{thm}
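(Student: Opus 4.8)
The plan is to show that the complex $(\widetilde{C}^k(\lambda,\mu),\widetilde{Q})$ satisfies the hypotheses of the abstract homological machinery of De~Sole and Kac and then to quote \cite[Lemma 4.18 and Theorem 4.19]{dk:fa} to read off the three assertions. Those results apply to the universal enveloping vertex algebra $V(R)$ of a nonlinear Lie conformal algebra $R=\CC[\di]\otimes r$ equipped with a Hamiltonian operator whose conformal weight spaces are finite dimensional, a compatible $\ZZ/2$-bigrading, and an odd differential whose associated graded is a \emph{good differential}, meaning that its cohomology vanishes outside the bidegrees with $p+q=0$. Their conclusions are exactly that the total cohomology is concentrated in degree zero, that the lifts of a basis of the degree-zero graded cohomology of the generating space close into a nonlinear Lie conformal algebra, and that the full cohomology is the universal enveloping vertex algebra of that generating Lie conformal algebra.

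First I would assemble the required structure. By Proposition~\ref{prop:isom} and the discussion preceding it, $\widetilde{C}^k(\lambda,\mu)=V(R_-)$ with $R_-=\CC[\di]\otimes r_-$ and $r_-=\phi^{\mathfrak{n}_{\lambda,\mu}^\ast}\oplus J_{\mathfrak{p}}$. The conformal weight \eqref{eq:Delta in W} furnishes the Hamiltonian operator, while the $\ZZ/2$-bigrading \eqref{eq:bigrading} and the filtration \eqref{eq:filtration} supply the grading data, together with the already verified compatibility $\widetilde{Q}(F^p\widetilde{C}^k(\lambda,\mu)^n[\Delta])\subset F^p\widetilde{C}^k(\lambda,\mu)^{n+1}[\Delta]$. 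Lemma~\ref{lem:good}(1) gives finite-dimensionality of the weight spaces, Lemma~\ref{lem:good}(2) shows that $\widetilde{Q}$ is almost linear, so that $\widetilde{Q}^{\text{gr}}$ acts linearly on the generators $r_-$, and Lemma~\ref{lem:good}(3) is precisely the statement that $\widetilde{Q}$ is a good differential.

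With the hypotheses in place, I would invoke the cited results. The spectral sequence of the filtration $F^\bullet$ has first page the graded cohomology $H^{pq}(\text{gr}\,\widetilde{C}^k(\lambda,\mu),\widetilde{Q}^{\text{gr}})$, which by Lemma~\ref{lem:good}(3) vanishes unless $p+q=0$; finite-dimensionality of the conformal weight spaces guarantees convergence, so the spectral sequence collapses and the cohomology lives in total degree zero, yielding $H(\widetilde{C}^k(\lambda,\mu),\widetilde{Q})=H^0(\widetilde{C}^k(\lambda,\mu),\widetilde{Q})$. Lifting a basis $\{e_i\}$ of $r_-^{p,-p}[\Delta]\cap\ker\widetilde{Q}^{\text{gr}}$ to cocycles $E_i=e_i+\epsilon_i$, compatibility of the normally ordered product and the $\lambda$-bracket with the filtration shows that these representatives close under the induced operations, so $H(R_-,\widetilde{Q})=\CC[\di]\otimes\text{Span}_{\mathbb{C}}\{E_i\mid i\in S_\chi\}$ is a nonlinear Lie conformal algebra. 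The final isomorphism $H(\widetilde{C}^k(\lambda,\mu),\widetilde{Q})\simeq V(H(R_-,\widetilde{Q}))$ then follows from the Poincar\'e--Birkhoff--Witt basis theorem for universal enveloping vertex algebras, which identifies the associated graded of the cohomology with the universal enveloping vertex algebra of the generating space $H(R_-,\widetilde{Q})$.

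The hard part will be the precise alignment of our situation with the hypotheses of \cite[Lemma 4.18 and Theorem 4.19]{dk:fa}: the differential $\widetilde{Q}$ is not linear, only almost linear, so the entire argument rests on the computation recorded in the proof of Lemma~\ref{lem:good} that the leading term satisfies $\widetilde{Q}^{\text{gr}}(J_a)=-\phi^{a\cdot\chi}$ for $a\in\mathfrak{p}$ and $\widetilde{Q}^{\text{gr}}(\phi^{\mathsf{m}})=0$, combined with the surjectivity of $\varphi$ from Lemma~\ref{lem:surjectivity}. These two inputs are exactly what force $\widetilde{Q}^{\text{gr}}$ to be a good differential with cohomology concentrated at $p+q=0$, and hence what permit the general machinery to be applied verbatim.
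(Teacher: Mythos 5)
Your proposal is correct and follows essentially the same route as the paper: the paper's entire proof consists of establishing Lemmas~\ref{lem:surjectivity} and \ref{lem:good} (finite-dimensional conformal weight spaces, almost linearity of $\widetilde{Q}$, and goodness of $\widetilde{Q}^{\text{gr}}$) and then invoking \cite[Lemma 4.18 and Theorem 4.19]{dk:fa}, exactly as you do. Your additional sketch of what happens inside the cited results (spectral sequence collapse, lifting of cocycles, PBW) is a faithful unpacking of the De~Sole--Kac machinery rather than a different argument.
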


As a consequence of
Theorem \ref{thm:degree zero}, we get the following property of the
generalized affine $W$-algebra:
\begin{equation} \label{eq:affine and W}
    W^k(\lambda,\mu)\ \  {\text{\em is a vertex subalgebra of }}\ \  V(J_{\mathfrak{p}}),
\end{equation}
where $V(J_{\mathfrak{p}})$ is the universal enveloping vertex algebra of
$\mathbb{C}[\partial]\otimes J_{\mathfrak{p}}$ endowed with the
$\lambda$-bracket introduced in \eqref{eq: two building block}.
Take an ordered basis $\mathcal{B}_{\mathfrak{p}}$ of $J_{\mathfrak{p}}$.
Then $V(J_{\mathfrak{p}})$ has the PBW basis induced from the ordered
set $\wt{\mathcal{B}}= \{\, \partial^t \mathcal{B}_{\mathfrak{p}}\, |\, t\in \mathbb{Z}_+\, \}$
and we can define degrees of elements of $V(J_{\mathfrak{p}})$ by
degrees with respect to the basis elements in $\wt{\mathcal{B}}$.
In particular, the {\em linear term} of any element will mean the degree $1$ part in
        $V(J_{\mathfrak{p}})$ with respect to the given basis of $J_{\mathfrak{p}}$.

\begin{cor} \label{cor:affine structure}
    Let $a_1, \dots, a_r$ be a basis of $\text{ker}\, \varphi$ for the map $\varphi$ in \eqref{varphi}.
\begin{enumerate}[(1)]
    \item  As a differential algebra, $W^k(\lambda,\mu)$ has
    a generating set consisting of $r=\text{dim}\, \mathfrak{a}(0)$ elements.
    Moreover, $r$ is the minimal number of elements in a generating set.
    \item There exists a differential algebraically independent generating
    set $\{w_i\ |\ i=1,\dots, r\}$ of $W^k(\lambda,\mu)$ satisfying the following properties:
    \begin{enumerate}[(i)]
        \item The element $w_i$ is homogeneous with respect to the conformal weight.
        \item The linear term of $w_i$ without total derivative is $J_{a_i}$,
        with respect to any basis of $J_{\mathfrak{p}}$.

    \end{enumerate}
    Moreover, any subset  $\{w_i\ |\ i=1,\dots, r\}\subset W^k(\lambda,\mu)$
    with properties (i) and (ii), is a generating set of $W^k(\lambda,\mu)$.
\end{enumerate}
\end{cor}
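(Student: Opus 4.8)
The plan is to read off both statements from the identification $W^k(\lambda,\mu)\simeq V(H(R_-,\widetilde{Q}))$ furnished by Theorem~\ref{thm:degree zero}, where $H(R_-,\widetilde{Q})=\mathbb{C}[\partial]\otimes\text{Span}_{\mathbb{C}}\{E_i\mid i\in S_\chi\}$ is a free $\mathbb{C}[\partial]$-module of rank $|S_\chi|$. Since the universal enveloping vertex algebra of such a nonlinear Lie conformal algebra carries a PBW basis of normally ordered monomials in the $\partial^t E_i$, it is freely generated as a differential algebra by $\{E_i\mid i\in S_\chi\}$, and these generators are differentially algebraically independent. Thus the whole corollary reduces to three tasks: computing $|S_\chi|=\dim\mathfrak{a}(0)=r$, controlling the shape of the $E_i$, and establishing minimality and recognition by a Nakayama-type argument.

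First I would compute $|S_\chi|$. By construction $\{e_i\mid i\in S_\chi\}$ is a homogeneous basis of $\bigoplus_{p,\Delta}\big(r_-^{p,-p}[\Delta]\cap\ker\widetilde{Q}^{\text{gr}}\big)$. Because $\text{gr}(\phi^{I})$ sits in total degree $p+q=1$ while $\text{gr}(J_a)$ sits in $p+q=0$, the slice $\bigoplus_{p,\Delta}r_-^{p,-p}[\Delta]$ of $r_-$ is exactly $J_{\mathfrak{p}}$, and Lemma~\ref{lem:good} gives $\widetilde{Q}^{\text{gr}}(J_a)=-\phi^{\varphi(a)}$, so $\ker\widetilde{Q}^{\text{gr}}\cap J_{\mathfrak{p}}=J_{\ker\varphi}$. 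Hence $|S_\chi|=\dim\ker\varphi$. The surjectivity of $\varphi$ (Lemma~\ref{lem:surjectivity}) yields $\dim\ker\varphi=\dim\mathfrak{p}-\dim\mathfrak{n}_{\lambda,\mu}$, and the $\mathfrak{a}$-grading is symmetric, $\dim\mathfrak{a}(d)=\dim\mathfrak{a}(-d)$, since the involution $(i,j)\leftrightarrow(j,i)$ on the defining index blocks negates the degree $\col_\mu(j)-\col_\mu(i)$ while preserving the multiplicity $\min(\lambda_i,\lambda_j)$. Therefore the positive and negative graded pieces cancel and $\dim\ker\varphi=\dim\mathfrak{a}(0)=r$, as needed for part~(1).

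For part~(2) I would choose the basis $\{a_i\}$ of $\ker\varphi$ homogeneous for the $\mathfrak{a}$-grading and take $w_i:=E_i=J_{a_i}+\epsilon_i$ from Theorem~\ref{thm:degree zero}; as the complex splits into finite-dimensional conformal weight spaces (Lemma~\ref{lem:good}(1)) each $w_i$ may be taken homogeneous of weight $1-j_{a_i}$, giving~(i). Property~(ii) is filtration bookkeeping: with $p=j_{a_i}-\tfrac12$ we have $\epsilon_i\in F^{p+\frac12}\widetilde{C}^k(\lambda,\mu)^0[\Delta]$, and a single-factor monomial $\partial^t J_b$ of weight $\Delta=1-j_{a_i}$ has $p$-degree $j_{a_i}+t-\tfrac12$, which meets the bound $\geqslant p+\tfrac12=j_{a_i}$ only when $t\geqslant 1$. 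Hence every linear contribution to $\epsilon_i$ is a total derivative, so the linear term of $w_i$ modulo total derivatives is exactly $J_{a_i}$.

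The hard part, and the last step, is the minimality in~(1) and the recognition statement in~(2); both flow from a graded Nakayama argument. Under the conformal weight grading $W^k(\lambda,\mu)$ is connected ($\Delta\geqslant 0$ with $W^k(\lambda,\mu)[0]=\mathbb{C}$), so I would consider the space of indecomposables $W^k_+/(W^k_+\cdot W^k_+ + \partial W^k_+)$, where $W^k_+=\bigoplus_{\Delta\geqslant 1}W^k(\lambda,\mu)[\Delta]$. Free generation shows this space has dimension $r$ with basis the images $\overline{E_i}=\overline{J_{a_i}}$, so any differential-algebra generating set must surject onto it and therefore contains at least $r$ elements, giving minimality. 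Conversely, any family $\{w_i\}_{i=1}^{r}$ satisfying~(i) and~(ii) has images $\overline{w_i}=\overline{J_{a_i}}$ which are linearly independent modulo products and derivatives (the $a_i$ are independent in $\mathfrak{p}$), hence form a basis of the indecomposables, so by Nakayama they generate. The obstacle I anticipate is making this Nakayama step rigorous for a \emph{differential} algebra: one must verify that generation under $\partial$ and the normally ordered product is governed by the indecomposables, which rests on the PBW/associated-graded structure of $V(J_{\mathfrak{p}})$ and the compatibility of the conformal weight grading with the filtration \eqref{eq:filtration}.
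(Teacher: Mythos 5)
Your proposal is correct and follows the same skeleton as the paper's proof: Theorem~\ref{thm:degree zero} gives free generation by elements indexed by a basis of $r_-^{p,-p}[\Delta]\cap\ker\widetilde{Q}^{\text{gr}}$, which you correctly identify with $J_{\ker\varphi}$, and Lemma~\ref{lem:surjectivity} gives $\dim\ker\varphi=\dim\mathfrak{p}-\dim\mathfrak{n}_{\lambda,\mu}=\dim\mathfrak{a}(0)$. The paper's proof consists of exactly these two steps plus the remark that part (2) is a direct consequence of Theorem~\ref{thm:degree zero}; you usefully make explicit two points it leaves implicit, namely the symmetry $\dim\mathfrak{a}(d)=\dim\mathfrak{a}(-d)$ of the grading (via the involution $(i,j)\leftrightarrow(j,i)$) and the filtration computation showing that every linear term of $\epsilon_i$ carries at least one $\partial$. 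The one place you genuinely diverge is in the minimality and recognition claims, which you derive from a graded Nakayama argument on the space of indecomposables, flagging the linear independence of the generators modulo products and total derivatives as an obstacle. For the recognition claim this detour is avoidable: Theorem~\ref{thm:degree zero} is stated for an \emph{arbitrary} choice of representatives $E_i=e_i+\epsilon_i$ with $\epsilon_i\in F^{p+\frac{1}{2}}\widetilde{C}^k(\lambda,\mu)^0[\Delta]$, and your own filtration bookkeeping, run in reverse, shows that any family $\{w_i\}$ satisfying (i) and (ii) is such a choice --- every nonlinear monomial of conformal weight $1-j_{a_i}$ and every derivative linear term has filtration degree at least $j_{a_i}=p+\frac{1}{2}$ --- so it generates by the theorem itself. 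The Nakayama argument is genuinely needed only for the minimality assertion in part (1), which the paper asserts without further comment; there the obstacle you flag is real but closable, by passing to the associated graded of $V(J_{\mathfrak{p}})$ with respect to polynomial (PBW) degree, a free differential polynomial superalgebra in which the span of the underived degree-one generators meets the sum of products and total derivatives trivially.
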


\begin{proof}
    By Theorem \ref{thm:degree zero}, the number of free generators
    of $W^k(\lambda,\mu)$ is $\text{dim}(\text{ker }\varphi)$. Lemma~\ref{lem:surjectivity}
    implies that
    \ben
    \text{dim}(\text{ker }\varphi) = \text{dim }\mathfrak{p} - \text{dim }
    \mathfrak{n}_{\lambda, \mu}^\ast = \text{dim } \mathfrak{p}
    - \text{dim } \mathfrak{n}_{\lambda, \mu} = \text{dim } \mathfrak{a}(0).
    \een
    Hence part (1) holds. Part (2) is a direct
    consequence of Theorem \ref{thm:degree zero}.
\end{proof}

\begin{rem}
\label{rem:krw}
\begin{enumerate}[(a)]
    \item In the particular case $\lambda=(1^N)$,
our BRST complex defining the $W$-algebra $W^k(\lambda,\mu)$
coincides with the one used in \cite[Sec.~5]{dk:fa} to describe the $W$-algebra $W^k(\gl_N,f)$
associated with a nilpotent element $f\in\gl_N$ of type $\mu$. Therefore,
these two affine $W$-algebras are the same.
\item For other types of Lie algebras, similar construction should work. The only nontrivial part is finding various gradations on $\mathfrak{a}=\g^e,$ which substitute the partition $\mu$ in $\g=\mathfrak{gl}_N$ case.
\end{enumerate}

\end{rem}

\section{Application of the Zhu functor}

\subsection{Zhu algebra of $W^k(\lambda,\mu)$} \label{subsec:Zhu algebra}
In this section, we obtain the Zhu algebra of generalized affine $W$-algebra
which will be proved to be isomorphic to $U(\lambda,\mu).$
We start by recalling basic definitions related to the Zhu functor; see e.g. \cite{dk:fa}
for a more detailed discussion.

An operator $H$
on a vertex algebra $V$
is called a {\em Hamiltonian} if it is a diagonalizable operator satisfying the property:
\begin{equation} \label{eq:hamiltonian}
    \Delta(a_{(n)}b)= \Delta(a)+\Delta(b)-n-1,
\end{equation}
where $a$ and $b$ are eigenvectors for $H$ and $H(c)=:\Delta(c)c$
for any eigenvector $c$. The $H$-{\em twisted Zhu algebra} $Zhu_H(V)$ of $V$
is the associative algebra given by
\[ Zhu_H(V)= V/(\partial+H)V.\]
The associative algebra product and the commutator relation on $Zhu_H(V)$ are given by
\begin{equation} \label{eq:Zhu_product}
    \begin{aligned}
        & Zhu_H(a)\cdot Zhu_H(b)= Zhu_H(:ab:) + \int^{1}_0  [Zhu_H(H(a)),Zhu_H(b)]_x dx,\\
        & [Zhu_H(a),Zhu_H(b)]:= [Zhu_H(a),Zhu_H(b)]_{\hbar=1}
    \end{aligned}
\end{equation}
where
\ben
[Zhu_H(a),Zhu_H(b)]_\hbar
=\sum_{j\in \mathbb{Z}_+} { \Delta_a - 1 \choose j} \,
\hbar^j\, Zhu_H(a_{(j)} b).
\een
Note that the conformal weight  \eqref{eq:Delta in W}
on the complex $\widetilde{C}^k(\lambda,\mu)$ has the property \eqref{eq:hamiltonian}.
The conformal weight \eqref{eq:Delta in W} can also be extended to $C^k(\lambda,\mu)$ by letting
\begin{equation} \label{eq:conformal grading on C}
    \Delta(a)=1-j_a, \quad  \Delta(\phi_{\mathsf{n}})=1-j_{\mathsf{n}}
\end{equation}
for $a\in \mathfrak{a}(j_a),$ and $\mathsf{n}\in \mathfrak{a}(j_\mathsf{n})$.
Consider the Hamiltonian operator $H$ on $C^k(\lambda,\mu)$ defined
by \eqref{eq:conformal grading on C} and let \[C^{\text{fin}}(\lambda,\mu):=Zhu_H(C^k(\lambda,\mu)).\]
We set
\begin{equation}
\non
    \bar{a}:=Zhu_H(a)-k(h|a), \quad \bar{\phi}^{\mathsf{m}}:=Zhu_H(\phi^{\mathsf{m}}),
    \quad \bar{\phi}_{\mathsf{n}}:=Zhu_H(\phi_{\mathsf{n}}), \quad \bar{J_a} := Zhu_H(J_a),
\end{equation}
where \[h=\sum_{\substack{i=1,\dots, n \\ \col_\mu(i)=r}}\ (n-r) \cdot E_{ii}^{(0)}.\]
Note that $[h, a] = j_a a$ for $a\in \mathfrak{a}(j_a)$. In other words,
the grading $\bigoplus_{m\in \mathbb{Z}}\mathfrak{a}(m)$ can be induced from
the Dynkin grading by $h$ on $\mathfrak{gl}_N$.
Using \eqref{eq:Zhu_product}, we obtain
\ben
[\bar{a}, \bar{b}] = [Zhu_H(a), Zhu_H(b)] = Zhu_H([a, b])
+ (\Delta_a -1)(a|b)k = Zhu_H([a, b]) - j_a(a|b)k
\een
and hence
\ben
[\bar{a},\bar{b}]=\overline{[a,b]}, \quad [\bar{\phi}^{\mathsf{m}}, \bar{\phi}_{\mathsf{n}}]
=\mathsf{m}(\mathsf{n}).
\een
Let us consider the differential $d^{\text{fin}} := Zhu_H(d)
\in C^{\text{fin}}(\lambda,\mu)$ and $Q^{\text{fin}} = \text{ad } d^{\text{fin}}.$ Then we have
    \begin{align}
     Q^{\text{fin}}(\bar{a}) &= Zhu_H(d_{(0)} a) \label{degone}\\[0.4em]
     {} &= Zhu_H(\phi^{I}) Zhu_H([E_I, a])
     - \sum_{I \in S_{\lambda, \mu}} k([h, E_I] | a) Zhu_H \phi^I
     = \sum_{I\in S_{\lambda, \mu}} \bar{\phi}^{I}  \overline{[E_{I}, a]}, \non\\
     Q^{\text{fin}} (\bar{\phi}_{\mathsf{n}}) &= \bar{\mathsf{n}}
     + \chi(\mathsf{n}) + \sum_{I\in S_{\lambda, \mu}} \bar{\phi}^{I}
     \bar{\phi}_{[E_{I}, \mathsf{n}]}, \ \  Q^{\text{fin}}(\bar{\phi}^{\mathsf{m}})
     = \frac{1}{2} \sum_{I \in S_{\lambda,\mu}}
     \bar{\phi}^{I} \bar{\phi}^{E_{I} \cdot \mathsf{m}},\non\\
     Q^{\text{fin}}(\bar{J}_a) &= \bar{\phi}^I \bar{J}_{\pi_{\le} [E_I, a]}
     - \bar{\phi}^{a \cdot \chi} + \bar{\phi}^{\pi_{\le} [a, [h, E_I]] \cdot E_I^\ast} \non\\
     & - \sum_{I \in S_{\lambda, \mu}} \Big(k([h, E_I] | a)
     + \text{tr}_{\mathfrak{n}_{\lambda, \mu}} ((\pi_+ \text{ad } [h, E_I])
     \circ (\pi_+ \text{ad } a))\Big) \bar{\phi}^I\, .
     \non
    \end{align}
Note that equality \eqref{degone}
holds because the conformal weight
of the element $d$ is equal to $1$.
Consider the decomposition
\[C^{\text{fin}}(\lambda,\mu)
=C^{\text{fin}}_+(\lambda, \mu)\otimes \widetilde{C}^{\text{fin}}(\lambda, \mu),\]
where $C^{\text{fin}}_+(\lambda, \mu)$ and $\widetilde{C}^{\text{fin}}(\lambda, \mu)$ are
subalgebras of $C^{\text{fin}}(\lambda, \mu)$ respectively generated by
the subspaces
$\bar{\phi}_{\mathfrak{n}_{\lambda, \mu}} \oplus \bar{J}_{\mathfrak{n}_{\lambda, \mu}}$
and $\bar{\phi}^{\mathfrak{n}_{\lambda, \mu}^\ast} \oplus \bar{J}_{\mathfrak{p}}$.
Then the operator $H$ can be regarded as a Hamiltonian operator on both
$C^k_+(\lambda, \mu)$ and $\widetilde{C}^k(\lambda, \mu)$ and
\[Zhu_H(C^k_+(\lambda, \mu)) = C^{\text{fin}}_+(\lambda, \mu),
\quad Zhu_H(\widetilde{C}^k(\lambda, \mu)) = \widetilde{C}^{\text{fin}}(\lambda, \mu).\]
Similarly to the affine case, the differential $Q^{\text{fin}}$
satisfies
\ben
Q^{\text{fin}} (C^{\text{fin}}_+ (\lambda, \mu)) \subset C^{\text{fin}}_+ (\lambda, \mu),\qquad
\bar{Q}(\widetilde{C}^{\text{fin}}(\lambda, \mu)) \subset \widetilde{C}^{\text{fin}}(\lambda, \mu)
\een
and
\ben
H(C^{\text{fin}}_+(\lambda, \mu), Q^{\text{fin}}|_{C^{\text{fin}}_+(\lambda, \mu)}) = \mathbb{C}.
\een
Therefore,
\begin{equation}
\non
    H(C^{\text{fin}}(\lambda, \mu), \bar{Q})
    \cong H(C^{\text{fin}}_+(\lambda, \mu), \bar{Q}|_{C^{\text{fin}}_+(\lambda, \mu)})
    \otimes H(\widetilde{C}^{\text{fin}}(\lambda, \mu), \widetilde{Q}^{\text{fin}})
    \cong H(\widetilde{C}^{\text{fin}}(\lambda, \mu), \widetilde{Q}^{\text{fin}}),
\end{equation}
where $\widetilde{Q}^{\text{fin}} := Q^{\text{fin}}|_{\widetilde{C}^{\text{fin}}(\lambda, \mu)}$.

On the other hand, since the Hamiltonian operator $H$ on $C^k(\lambda,\mu)$
induces the Hamiltonian operator on the affine $W$-algebra $W^k(\lambda,\mu)$,
we get the associative algebra $Zhu_H(W^k(\lambda,\mu)).$ By
applying \cite[Theorem 4.20]{dk:fa}, we come to the following proposition.

\begin{prop} \label{prop:Zhu and coho commute}
    The following two associative algebras are isomorphic:
    \[ Zhu_H(W^k(\lambda,\mu))\simeq H(C^{\text{\rm fin}}(\lambda,\mu),Q^{\text{\rm fin}}).\]
\end{prop}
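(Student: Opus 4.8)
The plan is to deduce the isomorphism from the general commutation principle between the Zhu functor and BRST cohomology established in \cite[Theorem 4.20]{dk:fa}, so the bulk of the work is to check that the complex $(C^k(\lambda,\mu),Q)$, together with the Hamiltonian $H$ of \eqref{eq:conformal grading on C}, satisfies the hypotheses of that theorem. The three structural facts to verify are: that $H$ is a genuine Hamiltonian in the sense of \eqref{eq:hamiltonian}; that the BRST charge arises from an element $d$ of conformal weight exactly $1$; and that $Q^2=0$. The first was recorded already, the second is the remark following \eqref{degone}, and the third is Proposition~\ref{prop:differential}.

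First I would make precise how $Q$ descends to the Zhu algebra. Since $\Delta(d)=1$, the operator $Q=d_{(0)}$ preserves conformal weight and commutes with $\partial$, hence it commutes with $\partial+H$ and therefore induces a well-defined odd endomorphism of $C^{\text{fin}}(\lambda,\mu)=Zhu_H(C^k(\lambda,\mu))$. The key point is that, under the Zhu quotient, $d_{(0)}$ is carried precisely to the inner derivation $Q^{\text{fin}}=\text{ad}\,d^{\text{fin}}$ by $d^{\text{fin}}=Zhu_H(d)$; this identification is exactly what the condition $\Delta(d)=1$ guarantees, and it is the mechanism by which the affine reduction becomes a finite associative-algebra reduction. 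One also checks $(Q^{\text{fin}})^2=0$: since $\Delta(d)=1$, the commutator formula in \eqref{eq:Zhu_product} collapses to $[d^{\text{fin}},d^{\text{fin}}]=Zhu_H(d_{(0)}d)$, which vanishes because $d_{(0)}d=[d{}_\lambda d]\big|_{\lambda=0}=0$ by the proof of Proposition~\ref{prop:differential}; hence $Q^{\text{fin}}=\text{ad}\,d^{\text{fin}}$ squares to zero.

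With these compatibilities in place, I would invoke \cite[Theorem 4.20]{dk:fa}. The inputs that make that theorem applicable in our setting are supplied by Lemma~\ref{lem:good}: the finite-dimensionality of each conformal weight space $\widetilde{C}^k(\lambda,\mu)[\Delta]$ guarantees convergence of the comparison spectral sequence, and the fact that $\widetilde{Q}$ is a \emph{good} differential (so that the graded cohomology is concentrated in degree $p+q=0$) forces the associated filtration to degenerate and the natural comparison map to be an isomorphism. The theorem then yields a canonical isomorphism of associative algebras
\[
Zhu_H\big(H(C^k(\lambda,\mu),Q)\big)\ \simeq\ H\big(Zhu_H(C^k(\lambda,\mu)),\,\text{ad}\,Zhu_H(d)\big),
\]
which, recalling $W^k(\lambda,\mu)=H(C^k(\lambda,\mu),Q)$ together with $C^{\text{fin}}(\lambda,\mu)=Zhu_H(C^k(\lambda,\mu))$ and $Q^{\text{fin}}=\text{ad}\,d^{\text{fin}}$, is exactly the claimed isomorphism.

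The hard part will be the verification that the Zhu functor truly \emph{commutes} with passing to cohomology, rather than merely producing comparable algebras. Concretely, $Zhu_H(-)$ is the quotient by $(\partial+H)V$, whereas cohomology is a subquotient for $Q$; interchanging these operations is delicate because one must rule out that $Q$-cocycles or $Q$-coboundaries are created or destroyed upon quotienting by $(\partial+H)$. This is precisely where the good-differential property and the conformal-weight grading of Lemma~\ref{lem:good} enter: they ensure that the relevant short exact sequences of complexes remain exact after applying the Zhu quotient, so that the two subquotient constructions can be exchanged. I would therefore present this step by citing \cite[Theorem 4.20]{dk:fa} and confine the original contribution to checking its hypotheses, which hold here by Proposition~\ref{prop:differential} and Lemma~\ref{lem:good}.
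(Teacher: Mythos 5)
Your overall strategy---reducing the statement to the commutation theorem \cite[Theorem~4.20]{dk:fa} and confining the work to verifying its hypotheses---is the same as the paper's, but you apply that theorem to the wrong complex, and this is a genuine gap. The hypotheses you invoke (finite-dimensionality of conformal weight spaces, an almost-linear good differential with respect to a bigraded filtration) are established in Lemma~\ref{lem:good} only for the \emph{reduced} complex $\widetilde{C}^k(\lambda,\mu)$, generated by $J_{\mathfrak{p}}\oplus\phi^{\mathfrak{n}_{\lambda,\mu}^\ast}$, all of whose generators have conformal weight $\geqslant 1$. They fail for the full complex $C^k(\lambda,\mu)$ to which you apply the theorem: whenever $\mathfrak{n}_{\lambda,\mu}\neq 0$ there is a current $a\in\mathfrak{a}(1)\subset\mathfrak{n}_{\lambda,\mu}$ with $\Delta(a)=0$, so the normally ordered powers $:aa:$, $:aaa:$, $\dots$ span an infinite-dimensional weight-zero subspace (and elements of $\mathfrak{a}(j)$ with $j\geqslant 2$ have negative weight), while the fermions $\phi_{\mathsf{n}}$ satisfy $\Delta(\phi_{\mathsf{n}})\leqslant 0$; moreover, the bigrading \eqref{eq:bigrading} and filtration \eqref{eq:filtration} that feed into \cite[Theorem~4.20]{dk:fa} are defined in the paper only on $\widetilde{C}^k(\lambda,\mu)$. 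Consequently the isomorphism $Zhu_H H(C^k(\lambda,\mu),Q)\simeq H(Zhu_H(C^k(\lambda,\mu)),Q^{\text{fin}})$ you want to extract is not licensed by the theorem you cite; citing Lemma~\ref{lem:good} does not help, since its conclusions concern a different complex than the one you feed into the theorem.

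The repair is exactly the paper's route, and it needs two ingredients your write-up omits. First, use Proposition~\ref{prop:isom} to replace $W^k(\lambda,\mu)$ by $H(\widetilde{C}^k(\lambda,\mu),\widetilde{Q})$ and apply \cite[Theorem~4.20]{dk:fa} to the reduced complex, where Lemma~\ref{lem:good} genuinely verifies the hypotheses; this gives $Zhu_H(W^k(\lambda,\mu))\cong H(\widetilde{C}^{\text{fin}}(\lambda,\mu),\widetilde{Q}^{\text{fin}})$. Second, since the statement concerns the full finite complex $C^{\text{fin}}(\lambda,\mu)$, you must transfer back via the tensor decomposition $C^{\text{fin}}(\lambda,\mu)=C^{\text{fin}}_+(\lambda,\mu)\otimes\widetilde{C}^{\text{fin}}(\lambda,\mu)$ together with $H(C^{\text{fin}}_+(\lambda,\mu),Q^{\text{fin}}|_{C^{\text{fin}}_+(\lambda,\mu)})=\mathbb{C}$, established just before the proposition, which yields $H(C^{\text{fin}}(\lambda,\mu),Q^{\text{fin}})\cong H(\widetilde{C}^{\text{fin}}(\lambda,\mu),\widetilde{Q}^{\text{fin}})$. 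Your preliminary observations ($\Delta(d)=1$, the identification $Q^{\text{fin}}={\rm ad}\ts d^{\text{fin}}$, and $(Q^{\text{fin}})^2=0$) are correct and consistent with the paper's setup, but they do not substitute for these two steps.
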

\begin{proof}
    We have the grading \eqref{eq:bigrading} and filtration \eqref{eq:filtration} on
    $\widetilde{C}^k(\lambda, \mu)$ satisfying
    \ben
    F^{p_1}\widetilde{C}^k(\lambda, \mu)^{n_1}[\Delta_1]_{(n)}
    F^{p_2}\widetilde{C}^k(\lambda, \mu)^{n_2}[\Delta_2] \subset F^{p_1+p_2}
    \widetilde{C}^k(\lambda, \mu)^{n_1+n_2}[\Delta_1+\Delta_2-n-1]
    \een
    and
    \ben
    \widetilde{Q}(F^{p}\widetilde{C}^k(\lambda, \mu)^n[\Delta])
    \subset F^p\widetilde{C}^k(\lambda, \mu)^{n+1}[\Delta].
    \een
    Furthermore, due to Lemma \ref{lem:good}, the assumptions
    of \cite[Theorem~4.20]{dk:fa} are satisfied. Thus, there is a
    canonical associative algebra isomorphism
    \ben
    Zhu_H (W^k(\lambda, \mu))
    = Zhu_H H(\widetilde{C}^k(\lambda, \mu), \widetilde{Q})
    \cong H(Zhu_H(\widetilde{C}^k(\lambda, \mu)), \widetilde{Q})
    = H(\widetilde{C}^{\text{fin}}(\lambda, \mu), \widetilde{Q}^{\text{fin}}),
    \een
    completing the proof.
\end{proof}

\subsection{The algebras $U(\lambda,\mu)$ and $Zhu_H W^k(\lambda,\mu)$ } \label{eq:equiv_finite W}

We will show that the Zhu algebra of $W^k(\lambda,\mu)$ is
isomorphic to the generalized finite $W$-algebra $U(\lambda,\mu)$ by
analyzing the graded algebra of $U(\lambda,\mu)$.
To this end, we need to understand $U(\lambda,\mu)$ via
Lie algebra cohomology. As the first step,
construct the classical finite version of BRST complex.
Denote $S(V)$ the supersymmetric algebra of a vector superspace $V$ and set
\begin{equation}
\non
    \mathcal{C}(\lambda, \mu) := S(\mathfrak{a})
    \otimes S(\phi_{\mathfrak{n}_{\lambda, \mu}} \oplus \phi^{\mathfrak{n}_{\lambda, \mu}^\ast}),
\end{equation}
where $\phi_{\mathfrak{n}_{\lambda, \mu}}$, $\phi^{\mathfrak{n}_{\lambda, \mu}^\ast}$
are the odd vector superspaces defined in \eqref{eq:phi}.
The Poisson bracket on $\mathcal{C}(\lambda,\mu)$ is induced
from the Lie bracket on $\mathfrak{a}$ and
the super-commutator $[\phi^{\mathsf{m}}, \phi_{\mathsf{n}}]=\mathsf{m}(\mathsf{n}).$
Consider the element
\ben
    d^{cl} := \sum_{I \in S_{\lambda, \mu}} \phi^{I}E_{I}
    + \phi^{\chi} + \frac{1}{2} \sum_{I,I'\in S_{\lambda, \mu}}
    \phi^{I} \phi^{I'} \phi_{[E_{I'}, E_{I}]} \in \mathcal{C}(\lambda, \mu)
\een
and set
\ben
    \mathcal{Q} := \text{ad } d^{cl} = \{d^{cl}, \cdot \} \in \text{End}\ts\mathcal{C}(\lambda, \mu).
\een
By direct computation, one can show  $\mathcal{Q}^2 =0$ so that $\mathcal{Q}$
is a differential of the complex $\mathcal{C}(\lambda,\mu).$
For $a\in \mathfrak{a}$ set
\begin{equation}
\non
    J^{cl}_a = a + \sum_{I \in S_{\lambda, \mu}} \phi^{I}
    \phi_{[E_{I}, a]}\in \mathcal{C}(\lambda,\mu)
\end{equation}
and introduce the spaces
\ben
J^{cl}_{\mathfrak{n}_{\lambda,\mu}}:=
\{ J^{cl}_{\mathsf{n}}|\mathsf{n}\in \mathfrak{n}_{\lambda,\mu}\},\qquad
J^{cl}_{\mathfrak{p}}:= \{ J^{cl}_{p}|p\in \mathfrak{p}\}.
\een
Then
$\mathcal{C}_+(\lambda, \mu) := S(\phi_{\mathfrak{n}_{\lambda, \mu}}
\oplus  J^{cl}_{\mathsf{n}_{\lambda, \mu}})$ and
$\widetilde{\mathcal{C}}(\lambda, \mu)  := S( J^{cl}_{\mathfrak{p}}
\oplus \phi^{\mathfrak{n}_{\lambda, \mu}^\ast})
$
are Poisson subalgebras of $\mathcal{C}(\lambda, \mu)$.
As in the affine case, we can show that
\[ \mathcal{Q}(\mathcal{C}_+(\lambda, \mu)) \subset \mathcal{C}_+(\lambda, \mu),
\quad \mathcal{Q}(\widetilde{\mathcal{C}}(\lambda, \mu)) \subset \widetilde{\mathcal{C}}(\lambda, \mu).\]
Moreover, since $\mathcal{C}(\lambda,\mu)\cong
\mathcal{C}_+(\lambda, \mu)\otimes \widetilde{\mathcal{C}}(\lambda, \mu)$
and $H(\mathcal{C}_+(\lambda, \mu), \mathcal{Q}|_{\mathcal{C}_+(\lambda,\mu}))=\mathbb{C}$, we have
\begin{equation} \label{eq:finite cl cohomology}
    H(\mathcal{C}(\lambda, \mu), \mathcal{Q})
    \cong H(\widetilde{\mathcal{C}}(\lambda, \mu),  \widetilde{\mathcal{Q}}),
\end{equation}
where $\widetilde{\mathcal{Q}}:=\mathcal{Q}|_{\widetilde{\mathcal{C}}(\lambda,\mu)}.$
The differential $\widetilde{\mathcal{Q}}$ acts on the elements in
$\widetilde{\mathcal{C}}(\lambda,\mu)$ as follows:
\begin{equation} \label{eq:tilde of Q}
 \widetilde{\mathcal{Q}} (J^{cl}_a)=
 \sum_{I \in S_{\lambda, \mu}} \phi^{I} J^{cl}_{\pi_\le[E_{I}, a]} - \phi^{a \cdot \chi},
\qquad
\widetilde{\mathcal{Q}} (\phi^{\mathsf{m}})
= \frac{1}{2} \sum_{I \in S_{\lambda,\mu}}  \phi^{I} \phi^{E_{I} \cdot \mathsf{m}}.
\end{equation}

Similar to the
conformal weight on the complex $\widetilde{C}^k(\lambda,\mu)$,
we define the $\Delta$-grading on $\widetilde{\mathcal{C}}(\lambda,\mu)$
and call it {\em conformal weight},
by setting
\[ \Delta(J^{cl}_a) = 1-j_a, \qquad \Delta(\phi^{I}) = \text{deg}(E_{I})\]
for $a\in \mathfrak{a}(j_a)$
and introduce the space
\begin{equation} \label{eq:Delta-filtration}
    \widetilde{\mathcal{C}}(\lambda,\mu)_{\Delta}
    = \text{Span}_{\mathbb{C}}\{A\in \widetilde{\mathcal{C}}(\lambda,\mu)|\Delta(A)\leqslant \Delta \}.
\end{equation}
We also consider the $\mathbb{Z}/2$-bigrading
\ben
\text{gr}(J^{cl}_a) = \left( j_a-\frac{1}{2}, -j_a+\frac{1}{2} \right),
\quad \text{gr}(\phi^{I}) = \left( -\text{deg}(E_{I}) + \frac{1}{2},
\text{deg}(E_{I}) + \frac{1}{2} \right)
\een
and the induced $\mathbb{Z}_+$-grading given by
\ben
\widetilde{\mathcal{C}}(\lambda, \mu) = \bigoplus_{n \in \mathbb{Z}_+}
\widetilde{\mathcal{C}}(\lambda, \mu)^n,
\een
where
\begin{equation}
\non
    \widetilde{\mathcal{C}}(\lambda, \mu)^n
    = \text{Span}_\mathbb{C} \big\{ a_1a_2\dots a_s \ | \ a_k \in r_-,
    \text{gr}(a_k) = (p_k, q_k), \sum_{k=1}^s (p_k + q_k) = n \big\}
\end{equation}
for
\beql{rminus}
r_-= J^{cl}_{\mathfrak{p}}\oplus \phi^{\mathfrak{n}^\ast_{\lambda,\mu}}.
\eeq
Using the decreasing filtration
\begin{equation}
\non
    F^p\widetilde{\mathcal{C}}(\lambda, \mu)
    = \text{Span}_{\mathbb{C}} \big\{a_1a_2 \dots a_s \ |
    \ a_k \in r_-, \text{gr}(a_k) = (p_k, q_k) , \sum_{k=1}^s p_k \geqslant p\big\},
\end{equation}
define the associated graded algebra by
\begin{equation}
\non
\text{gr}(\widetilde{\mathcal{C}}(\lambda,\mu)):=\bigoplus_{p,q\in
\mathbb{Z}/{2}} \text{gr}^{pq} \widetilde{\mathcal{C}}(\lambda,\mu)
\end{equation}
where
\ben
\text{gr}^{pq} \widetilde{\mathcal{C}}(\lambda,\mu)
= F^p\widetilde{\mathcal{C}}(\lambda,\mu)^{p+q})/F^{p+\frac{1}{2}}
\widetilde{\mathcal{C}}(\lambda,\mu)^{p+q}
\een
and
\ben
F^p\widetilde{\mathcal{C}}(\lambda,\mu)^{p+q}= F^p\widetilde{\mathcal{C}}
(\lambda,\mu)\cap \widetilde{\mathcal{C}}(\lambda,\mu)^{p+q}.
\een
The corresponding graded cohomology is defined by
\begin{equation}
\non
    H^{pq}( \text{gr} (\widetilde{\mathcal{C}}(\lambda,\mu)), \widetilde{\mathcal{Q}}^{\text{gr}})
    = \frac{\text{ker}(\widetilde{\mathcal{Q}}^{\text{gr}}:\text{gr}^{pq}
    \widetilde{\mathcal{C}}(\lambda,\mu)\to\text{gr}^{p\, q+1}
    \widetilde{\mathcal{C}}(\lambda,\mu)) }{\text{im}
    (\widetilde{\mathcal{Q}}^{\text{gr}}:\text{gr}^{p\, q-1}\widetilde{\mathcal{C}}(\lambda,\mu)
    \to\text{gr}^{p q}\widetilde{\mathcal{C}}(\lambda,\mu)) },
\end{equation}
where $\widetilde{\mathcal{Q}}^{\text{gr}}$ is the differential
on $ \text{gr}(\widetilde{\mathcal{C}}(\lambda,\mu))$  induced from $\widetilde{\mathcal{Q}}.$

\if
On the other hand, the cohomology \eqref{eq:finite cl cohomology}
can be obtained as a graded cohomology of
 $\widetilde{C}^{\text{fin}}(\lambda, \mu)$ in Section \ref{subsec:Zhu algebra}.
 More precisely, we define the $\Delta$-grading
\ben
    \Delta(J_a) = 1-j_a, \qquad \Delta(\phi^{(i, j, r)}) = \text{deg}(E_{ij}^{(r)}),
\een
and the $\mathbb{Z}/2$-bigrading
\ben
    \text{gr}(J_a) = \left( j_a-\frac{1}{2}, -j_a+\frac{1}{2} \right),
    \quad \text{gr}(\phi^{(i, j, r)}) = \left( \text{deg}(E_{ij}^{(r)})
    + \frac{1}{2}, -\text{deg}(E_{ij}^{(r)}) + \frac{1}{2} \right).
\een
Then the bigrading induces the $\mathbb{Z}_+$ grading
$\widetilde{C}^{\text{fin}}(\lambda, \mu) = \oplus_{n \in \mathbb{Z}_+}
\widetilde{C}^{\text{fin}}(\lambda, \mu)^n$ where
\ben
    \widetilde{C}^{\text{fin}}(\lambda, \mu)^n
    = \text{Span}_\mathbb{C} \big\{ a_1a_2\dots a_s \ | \ a_k \in r_-, \text{gr}(a_k)
    = (p_k, q_k), \sum_{k=1}^s (p_k + q_k) = n \big\}
\een
and the decreasing filtration
\ben
    F^p(\widetilde{C}^{\text{fin}}(\lambda, \mu))
    = \text{Span}_{\mathbb{C}} \big\{a_1a_2 \dots a_s \ |
    \ a_k \in r_-, \text{gr}(a_k) = (p_k, q_k) , \sum_{k=1}^s p_k \geqslant p\big\}.
\een
With respect to the filtration, we consider the graded algebra
\ben
\text{gr}(\widetilde{C}^{\text{fin}}(\lambda,\mu)):=\bigoplus_{p,q\in \mathbb{Z}/{2}}
\text{gr}^{pq} \widetilde{C}^{\text{fin}}(\lambda,\mu),
\een
where $\text{gr}^{pq} \widetilde{C}^{\text{fin}}(\lambda,\mu)
= F^p(\widetilde{C}^{\text{fin}}(\lambda,\mu)^{p+q})/
F^{p+\frac{1}{2}}(\widetilde{C}^{\text{fin}}(\lambda,\mu)^{p+q})$ and $
F^p(\widetilde{C}^{\text{fin}}(\lambda,\mu)^{p+q})
= F^p(\widetilde{C}^{\text{fin}}(\lambda,\mu))\cap \widetilde{C}^{\text{fin}}(\lambda,\mu)^{p+q}$.
The corresponding graded cohomology is defined by
\ben
    H^{pq}( \text{gr} (\widetilde{C}^{\text{fin}}(\lambda,\mu)), \bar{d}^{\text{gr}})
    = \frac{\text{ker}(\bar{d}^{\text{gr}}:\text{gr}^{pq}
    \widetilde{C}^{\text{fin}}(\lambda,\mu)\to\text{gr}^{p\, q+1}
    \widetilde{C}^{\text{fin}}(\lambda,\mu)) }{\text{im}(\bar{d}^{\text{gr}}:\text{gr}^{p\, q-1}
    \widetilde{C}^{\text{fin}}(\lambda,\mu)\to\text{gr}^{p q}\widetilde{C}^{\text{fin}}(\lambda,\mu)) },
\een
where $\bar{d}^{\text{gr}}$ is the differential on
$ \text{gr}(\widetilde{C}^{\text{fin}}(\lambda,\mu))$  induced from $\bar{d}.$
Then one can check that
\[ \text{gr} (\widetilde{C}^{\text{fin}}(\lambda, \mu)) = \widetilde{\mathcal{C}}(\lambda, \mu). \]
\fi

\begin{lem}  \label{lem:classical finite}
The following properties hold.
    \begin{enumerate}[(1)]
        \item The differential $\widetilde{\mathcal{Q}}$ on
        $\widetilde{\mathcal{C}}(\lambda,\mu)$ is good and
        $\text{\rm gr}^{pq}H(\widetilde{\mathcal{C}}(\lambda, \mu),
        \widetilde{\mathcal{Q}}) \cong H^{pq}(\text{\rm gr}(\widetilde{\mathcal{C}}(\lambda, \mu)),
        \widetilde{\mathcal{Q}}^{\text{\rm gr}})$.
        \item $H(\widetilde{\mathcal{C}}(\lambda, \mu), \widetilde{\mathcal{Q}})
        = H^{0}(\widetilde{\mathcal{C}}(\lambda, \mu), \widetilde{\mathcal{Q}})$.
    \end{enumerate}
    \end{lem}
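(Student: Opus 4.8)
The plan is to treat this as the classical, finite shadow of the affine Lemma~\ref{lem:good}, carrying over the same Koszul-type argument with $\widetilde{\mathcal{Q}}$ in place of $\widetilde{Q}$ and the Poisson bracket replacing the $\lambda$-bracket. First I would record the associated graded differential on the generators $r_-=J^{cl}_{\mathfrak{p}}\oplus\phi^{\mathfrak{n}_{\lambda,\mu}^\ast}$. Starting from the explicit formulas \eqref{eq:tilde of Q}, the quadratic terms $\sum_I\phi^I J^{cl}_{\pi_\le[E_I,a]}$ and $\tfrac12\sum_I\phi^I\phi^{E_I\cdot\mathsf{m}}$ strictly raise the filtration degree attached to the $\mathbb{Z}/2$-bigrading, while the linear terms are preserved; thus on $\text{gr}(\widetilde{\mathcal{C}}(\lambda,\mu))$ one is left with $\widetilde{\mathcal{Q}}^{\text{gr}}(J^{cl}_a)=-\phi^{a\cdot\chi}$ and $\widetilde{\mathcal{Q}}^{\text{gr}}(\phi^{\mathsf{m}})=0$ for $a\in\mathfrak{p}$ and $\mathsf{m}\in\mathfrak{n}_{\lambda,\mu}^\ast$, exactly as in the affine case.

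Second, I would interpret $(\text{gr}(\widetilde{\mathcal{C}}(\lambda,\mu)),\widetilde{\mathcal{Q}}^{\text{gr}})$ as the Koszul complex attached to the linear map $\varphi$ of \eqref{varphi}, $a\mapsto(a\cdot\chi)|_{\mathfrak{n}_{\lambda,\mu}}$. Here $\text{gr}(\widetilde{\mathcal{C}}(\lambda,\mu))\cong S(J^{cl}_{\mathfrak{p}})\otimes\Lambda(\phi^{\mathfrak{n}_{\lambda,\mu}^\ast})$, and $\widetilde{\mathcal{Q}}^{\text{gr}}$ is the odd derivation sending $J^{cl}_a$ to $-\phi^{\varphi(a)}$ and annihilating the $\phi^{\mathsf{m}}$. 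Since $\varphi$ is surjective by Lemma~\ref{lem:surjectivity}, choosing a splitting $\mathfrak{p}=\ker\varphi\oplus\mathfrak{p}'$ with $\varphi|_{\mathfrak{p}'}$ an isomorphism onto $\mathfrak{n}_{\lambda,\mu}^\ast$ factors the complex as $S(J^{cl}_{\ker\varphi})$ tensored with the Koszul factor $S(J^{cl}_{\mathfrak{p}'})\otimes\Lambda(\phi^{\mathfrak{n}_{\lambda,\mu}^\ast})$. The standard one-variable computation (extended multiplicatively) shows the latter factor is acyclic with cohomology $\mathbb{C}$ in cohomological degree zero, whence $H^{pq}(\text{gr}(\widetilde{\mathcal{C}}(\lambda,\mu)),\widetilde{\mathcal{Q}}^{\text{gr}})=0$ unless $p+q=0$; this is precisely the assertion that $\widetilde{\mathcal{Q}}$ is good.

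Third, I would invoke the good-differential comparison theorem of De~Sole and Kac (the finite analogue of \cite[Thm~4.19]{dk:fa} already used in Lemma~\ref{lem:good}) to obtain $\text{gr}^{pq}H(\widetilde{\mathcal{C}}(\lambda,\mu),\widetilde{\mathcal{Q}})\cong H^{pq}(\text{gr}(\widetilde{\mathcal{C}}(\lambda,\mu)),\widetilde{\mathcal{Q}}^{\text{gr}})$, which gives part~(1); the concentration of the right-hand side in $p+q=0$ then forces $H(\widetilde{\mathcal{C}}(\lambda,\mu),\widetilde{\mathcal{Q}})=H^0$, which is part~(2). Convergence of the associated spectral sequence presents no difficulty, since every fixed conformal-weight subspace of $\widetilde{\mathcal{C}}(\lambda,\mu)$ is finite dimensional: all generators $J^{cl}_a$ and $\phi^I$ have $\Delta\geqslant 1$, so only finitely many monomials occur in each weight.

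The main obstacle I anticipate is the filtration bookkeeping in the first step, namely confirming that the quadratic parts of $\widetilde{\mathcal{Q}}$ in \eqref{eq:tilde of Q} genuinely lie in strictly higher filtration for the $\mathbb{Z}/2$-bigrading, so that the associated graded differential collapses to the linear Koszul map $\varphi$. Once this is settled, the acyclicity is a routine Koszul computation and the comparison theorem supplies the rest; the argument is in fact lighter than the affine one, as there are no $\partial$-derivatives, $\lambda$-brackets, or normal-ordering corrections to track.
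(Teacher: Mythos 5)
Your proposal is correct and follows essentially the same route as the paper: the paper likewise reduces to the graded differential $\widetilde{\mathcal{Q}}^{\text{gr}}(J^{cl}_a)=-\phi^{a\cdot\chi}$, $\widetilde{\mathcal{Q}}^{\text{gr}}(\phi^{\mathsf{m}})=0$, invokes the surjectivity of $\varphi$ (Lemma~\ref{lem:surjectivity}) to get goodness exactly as in Lemma~\ref{lem:good}, and then applies the De~Sole--Kac comparison result (cited there as \cite[Lemma 4.2]{dk:fa}) using finite dimensionality of the fixed conformal-weight subspaces, deducing part (2) from part (1) and the concentration of $H(\mathrm{gr}(\widetilde{\mathcal{C}}(\lambda,\mu)),\widetilde{\mathcal{Q}}^{\text{gr}})\simeq S(H(r_-,\widetilde{\mathcal{Q}}^{\text{gr}}))$ in degree zero. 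Your only deviations are cosmetic: you spell out the Koszul splitting $\mathfrak{p}=\ker\varphi\oplus\mathfrak{p}'$ that the paper leaves implicit, and you cite the analogue of \cite[Theorem 4.19]{dk:fa} rather than Lemma 4.2.
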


    \begin{proof}
        \begin{enumerate}[(1)]
            \item Similarly to Lemma \ref{lem:good} (3), we can show that
            $H^{pq}(\text{gr}(\widetilde{\mathcal{C}}(\lambda, \mu)),
            \widetilde{\mathcal{Q}}^{gr}) = 0$ unless $p+q=0$,
            which means $\mathcal{Q}$ is a good differential.
            Moreover, since $\mathcal{Q}$ preserves the conformal
            weight and the subspace of $\widetilde{C}(\lambda,\mu)$
            with a given conformal weight is finite dimensional,
            we can apply \cite[Lemma 4.2]{dk:fa}.
            \item
            As in Lemma \ref{lem:good} (2), the differential
            $\widetilde{\mathcal{Q}}$ is almost linear.
            Moreover, recalling
            notation \eqref{rminus}, observe that since
            $H^{pq}(r_-, \widetilde{\mathcal{Q}}^{\text{gr}})$ is nonzero
            only when $p+q=0,$ the cohomology
            $H(\text{gr}(\widetilde{\mathcal{C}}(\lambda,\mu)), \widetilde{\mathcal{Q}}^{\text{gr}})\simeq
            S(H(r_-,\widetilde{\mathcal{Q}}^{\text{gr}}))$ is also concentrated at degree $0$ part.
            Now, by part (1) we get part (2).
            \end{enumerate}
    \end{proof}

\begin{cor}
    The cohomology $H(\widetilde{\mathcal{C}}(\lambda, \mu), \widetilde{\mathcal{Q}})$
    is a Poisson subalgebra of $S(\mathfrak{p}).$
\end{cor}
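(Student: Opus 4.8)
The plan is to show that the cohomology is simply the space of $\widetilde{\mathcal{Q}}$-invariants sitting inside the degree-zero part of $\widetilde{\mathcal{C}}(\lambda,\mu)$, and then to identify that part with $S(\mathfrak{p})$ as a Poisson algebra. First I would record that, with respect to the $\mathbb{Z}_+$-grading $\widetilde{\mathcal{C}}(\lambda,\mu)=\bigoplus_{n}\widetilde{\mathcal{C}}(\lambda,\mu)^n$, the generators $J^{cl}_a$ carry total bidegree $p+q=0$ while each $\phi^{\mathsf{m}}$ carries $p+q=1$; hence $n$ counts the number of fermionic factors and
\ben
\widetilde{\mathcal{C}}(\lambda,\mu)^0=S(J^{cl}_{\mathfrak{p}}).
\een
By \eqref{eq:tilde of Q} the differential $\widetilde{\mathcal{Q}}$ raises $n$ by one, and since there is nothing in negative degree the degree-zero cohomology equals $\ker\big(\widetilde{\mathcal{Q}}|_{\widetilde{\mathcal{C}}(\lambda,\mu)^0}\big)$. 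By Lemma~\ref{lem:classical finite}(2) this already accounts for the entire cohomology, so
\ben
H(\widetilde{\mathcal{C}}(\lambda,\mu),\widetilde{\mathcal{Q}})
=\ker\big(\widetilde{\mathcal{Q}}|_{S(J^{cl}_{\mathfrak{p}})}\big)\subset S(J^{cl}_{\mathfrak{p}}),
\een
with the product and Poisson bracket on the cohomology being the restrictions of those on $\widetilde{\mathcal{C}}(\lambda,\mu)$.

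Next I would identify $S(J^{cl}_{\mathfrak{p}})$ with $S(\mathfrak{p})$ as Poisson algebras. The classical counterpart of the $\lambda$-bracket \eqref{eq: two building block} yields, for $a,b\in\mathfrak{p}$,
\ben
\{J^{cl}_a,J^{cl}_b\}=J^{cl}_{[a,b]},
\een
the a priori fermionic corrections $\phi^{I}\phi_{[\pi_\le[E_I,a],b]}$ and $\phi^{I}\phi_{[\pi_\le[E_I,b],a]}$ dropping out because the elements $[\pi_\le[E_I,a],b]$ and $[\pi_\le[E_I,b],a]$ again lie in $\mathfrak{p}$, so that $\pi_+$ annihilates them and the corresponding $\phi_{(\cdot)}$ vanish. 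Since $\mathfrak{p}=\bigoplus_{i\le 0}\mathfrak{a}(i)$ is a Lie subalgebra, the assignment $p\mapsto J^{cl}_p$ therefore extends by the Leibniz rule to a Poisson algebra isomorphism $S(\mathfrak{p})\xrightarrow{\sim}S(J^{cl}_{\mathfrak{p}})$; in particular $S(J^{cl}_{\mathfrak{p}})$ is closed under the Poisson bracket, consistently with $\widetilde{\mathcal{C}}(\lambda,\mu)$ being a Poisson subalgebra of $\mathcal{C}(\lambda,\mu)$.

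Finally I would invoke that $\widetilde{\mathcal{Q}}=\{d^{cl},\,\cdot\,\}$ is, by the Jacobi identity of the Poisson superbracket, an odd derivation of both the commutative product and the bracket of $\widetilde{\mathcal{C}}(\lambda,\mu)$. Consequently its kernel is stable under both operations, and restricted to the even subalgebra $S(J^{cl}_{\mathfrak{p}})$ this exhibits $\ker\big(\widetilde{\mathcal{Q}}|_{S(J^{cl}_{\mathfrak{p}})}\big)$ as a Poisson subalgebra. Transporting along the isomorphism of the previous paragraph then shows that $H(\widetilde{\mathcal{C}}(\lambda,\mu),\widetilde{\mathcal{Q}})$ is a Poisson subalgebra of $S(\mathfrak{p})$, as claimed. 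I expect the only genuine point requiring verification to be the vanishing of the fermionic corrections in $\{J^{cl}_a,J^{cl}_b\}$ for $a,b\in\mathfrak{p}$, which is the degree argument indicated above; the remaining assertions are formal consequences of $\widetilde{\mathcal{Q}}$ being a Poisson derivation whose cohomology is concentrated in degree zero.
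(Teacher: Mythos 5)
Your proof is correct and takes essentially the same route as the paper's: both arguments rest on Lemma~\ref{lem:classical finite}(2) to concentrate the cohomology in degree zero, identify the degree-zero part of $\widetilde{\mathcal{C}}(\lambda,\mu)$ with $S(J^{cl}_{\mathfrak{p}})$, use that $\widetilde{\mathcal{Q}}$ is an odd Poisson derivation so that its kernel inherits the Poisson structure, and finally transport along $J^{cl}_{\mathfrak{p}}\simeq\mathfrak{p}$. The only difference is that you make explicit what the paper leaves implicit — in particular the verification that the fermionic correction terms in $\{J^{cl}_a,J^{cl}_b\}$ vanish for $a,b\in\mathfrak{p}$ because $\mathfrak{p}$ is a Lie subalgebra killed by $\pi_+$, which is precisely what justifies the paper's closing assertion that $J^{cl}_{\mathfrak{p}}\simeq\mathfrak{p}$ as Lie algebras.
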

\begin{proof}
    By the same argument as in the proof of Lemma \ref{prop:BRST VA structure}, we can
    show $H(\mathcal{C}(\lambda, \mu), \mathcal{Q})$ has the Poisson algebra
    structure induced from that of $\mathcal{C}(\lambda,\mu)$.
    Since $H(\mathcal{C}(\lambda, \mu), \mathcal{Q})\simeq
    H^0(\widetilde{\mathcal{C}}(\lambda, \mu), \widetilde{\mathcal{Q}}),$
    this is a Poisson subalgebra of $S(J_{\mathfrak{p}}^{cl}).$
    Finally, since $J_{\mathfrak{p}}^{cl}\simeq \mathfrak{p}$ as Lie algebras, the proof
    is complete.
\end{proof}

Now we want to show that the Poisson algebra
$H(\widetilde{\mathcal{C}}(\lambda, \mu), \widetilde{\mathcal{Q}})$
can also be realized in two different ways:
\begin{equation} \label{eq:cl-fin-W, equiv}
    \text{(1) Lie algebra cohomology of $\mathfrak{n}_{\lambda,\mu}$;
    \qquad (2) graded algebra of $U(\lambda,\mu).$}
\end{equation}
Consider the Kazhdan filtration
$K_0(\mathfrak{a}) \subset K_1(\mathfrak{a}) \subset K_2(\mathfrak{a}) \subset \dots$ on
$U(\mathfrak{a})$, where a homogeneous element $a \in \mathfrak{a}$ is in
$ K_s(\mathfrak{a})$ if and only if $a\in \mathfrak{a}(j_a)$ for some $j_a \geqslant  1-s.$
Then for elements $a\in K_s(\mathfrak{a})$ and $b\in K_t(\mathfrak{a}),$ the commutator
$ab-ba$ belongs to $K_{s+t-1}(\mathfrak{a}).$ Then the graded algebra
$S(\mathfrak{a}):=\text{gr}^K(U(\mathfrak{a}))$ is the Poisson
algebra endowed with the bracket
\ben
\{\text{gr}^K_s(a), \text{gr}^K_t(b)\}
=\text{gr}^K_{s+t-1}(ab-ba).
\een
In addition, the Kazhdan filtration induces the
filtration on $U(\mathfrak{a})/\mathcal{I}_{\lambda,\mu}$ and
\ben
\text{gr}^K(U(\mathfrak{a})/\mathcal{I}_{\lambda,\mu})
= S(\mathfrak{a})/\mathcal{I}^{cl}_{\lambda,\mu},\qquad
\mathcal{I}^{cl}_{\lambda,\mu}:=S(\mathfrak{a})\left< n+\chi(n)|n\in \mathfrak{n}_{\lambda,\mu} \right>.
\een
Then $U(\mathfrak{a})/\mathcal{I}_{\lambda,\mu}$ and $S(\mathfrak{a})/\mathcal{I}^{cl}_{\lambda,\mu}$
are  $\mathfrak{n}_{\lambda,\mu}$-modules via adjoint actions. Here we note that the
Kazhdan filtration is the analogue of the filtration \eqref{eq:Delta-filtration}
induced from the conformal weight.
In the following proposition, we describe the two realizations
of $H(\widetilde{\mathcal{C}}(\lambda,\mu), \widetilde{Q})$ in \eqref{eq:cl-fin-W, equiv}
more precisely via the $\mathfrak{n}_{\lambda,\mu}$-module $S(\mathfrak{a})/\mathcal{I}^{cl}_{\lambda,\mu}$.

\begin{prop} \label{prop:Lie algebra coho-finite W-pre}
    The following two cohomologies are isomorphic:
        \[
        \non
        H(\widetilde{\mathcal{C}}(\lambda, \mu), \widetilde{\mathcal{Q}})
        \simeq H(\, \mathfrak{n}_{\lambda,\mu}  ,
        S(\mathfrak{a})/\mathcal{I}^{cl}_{\lambda,\mu} \, )
        = H^0(\, \mathfrak{n}_{\lambda,\mu}, S(\mathfrak{a})/\mathcal{I}^{cl}_{\lambda,\mu}\, ).\]
    Hence $H(\widetilde{\mathcal{C}}(\lambda, \mu), \widetilde{\mathcal{Q}})\simeq
    (S(\mathfrak{a})/\mathcal{I}^{cl}_{\lambda,\mu})^{\text{ad}\,  \mathfrak{n}_{\lambda,\mu}}.$
    \end{prop}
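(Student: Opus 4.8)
The plan is to recognize the reduced complex $(\widetilde{\mathcal{C}}(\lambda,\mu),\widetilde{\mathcal{Q}})$ as the Chevalley--Eilenberg cochain complex computing the Lie algebra cohomology $H^\bullet(\mathfrak{n}_{\lambda,\mu}, S(\mathfrak{a})/\mathcal{I}^{cl}_{\lambda,\mu})$, and then to read off the final isomorphism from the concentration-in-degree-zero result of Lemma~\ref{lem:classical finite}. First I would record the vector space decomposition $\mathfrak{a}=\mathfrak{p}\oplus\mathfrak{n}_{\lambda,\mu}$ coming from the grading, which gives $S(\mathfrak{a})=S(\mathfrak{p})\otimes S(\mathfrak{n}_{\lambda,\mu})$ and hence a linear isomorphism $M:=S(\mathfrak{a})/\mathcal{I}^{cl}_{\lambda,\mu}\cong S(\mathfrak{p})$, since modulo $\mathcal{I}^{cl}_{\lambda,\mu}$ every $\mathsf{n}\in\mathfrak{n}_{\lambda,\mu}$ is identified with the scalar $-\chi(\mathsf{n})$. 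I would then check that the adjoint action makes $M$ an $\mathfrak{n}_{\lambda,\mu}$-module: this is well-defined because $\mathfrak{n}_{\lambda,\mu}$ is a graded subalgebra and $\chi$ vanishes on $[\mathfrak{n}_{\lambda,\mu},\mathfrak{n}_{\lambda,\mu}]$ (brackets of elements of $\mathfrak{n}_{\lambda,\mu}$ have degree $\geqslant 2$, where $\chi\equiv 0$), so that $\{\mathfrak{n}_{\lambda,\mu},\mathcal{I}^{cl}_{\lambda,\mu}\}\subset\mathcal{I}^{cl}_{\lambda,\mu}$.

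Next I would set up the degree-preserving linear isomorphism $\Theta:\widetilde{\mathcal{C}}(\lambda,\mu)=S(J^{cl}_{\mathfrak{p}})\otimes\Lambda(\phi^{\mathfrak{n}_{\lambda,\mu}^\ast})\to M\otimes\Lambda^\bullet\mathfrak{n}_{\lambda,\mu}^\ast$, determined on generators by $J^{cl}_a\mapsto [a]$ (the class of $a\in\mathfrak{p}$ in $M$) and $\phi^{\mathsf{m}}\mapsto\mathsf{m}$, and extended as an algebra map. Because the cohomological grading on $\widetilde{\mathcal{C}}(\lambda,\mu)$ counts the number of ghost factors $\phi^{I}$ (each contributing $p+q=1$, while each $J^{cl}_a$ contributes $p+q=0$), the map $\Theta$ sends the $n$-ghost subspace onto $M\otimes\Lambda^n\mathfrak{n}_{\lambda,\mu}^\ast$, matching the exterior degree of the Chevalley--Eilenberg complex.

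The main step is to verify that $\Theta$ intertwines $\widetilde{\mathcal{Q}}$ with the Chevalley--Eilenberg differential $d_{\mathrm{CE}}$, using the explicit formulas \eqref{eq:tilde of Q}. The second formula $\widetilde{\mathcal{Q}}(\phi^{\mathsf{m}})=\tfrac12\sum_I\phi^I\phi^{E_I\cdot\mathsf{m}}$ is exactly the coadjoint (pure Lie bracket) part of $d_{\mathrm{CE}}$ on $\Lambda^\bullet\mathfrak{n}_{\lambda,\mu}^\ast$. For the first formula I would show that it reproduces the module-action part $\sum_I\phi^I\,(E_I\cdot[a])$: writing $[E_I,a]=\pi_\le[E_I,a]+\pi_+[E_I,a]$ and using $\pi_+[E_I,a]\equiv-\chi([E_I,a])=-(a\cdot\chi)(E_I)$ in $M$, the term $J^{cl}_{\pi_\le[E_I,a]}$ gives the $\mathfrak{p}$-part of the action while the scalar contributions assemble into precisely the correction $-\phi^{a\cdot\chi}=-\sum_I(a\cdot\chi)(E_I)\phi^I$. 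This is the crux, and the point requiring care is the bookkeeping of signs and of the factor $\tfrac12$ in the coadjoint term, together with confirming (equivalently, reading it off from the already established $\mathcal{Q}^2=0$) that $a\mapsto(E_I\cdot a)$ is a genuine $\mathfrak{n}_{\lambda,\mu}$-action on $M$, so that $\Theta$ is module-equivariant.

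Once $\Theta$ is shown to be an isomorphism of complexes, it induces $H(\widetilde{\mathcal{C}}(\lambda,\mu),\widetilde{\mathcal{Q}})\cong H^\bullet(\mathfrak{n}_{\lambda,\mu}, S(\mathfrak{a})/\mathcal{I}^{cl}_{\lambda,\mu})$. Since $\Theta$ preserves the cohomological grading, Lemma~\ref{lem:classical finite}(2), which gives $H(\widetilde{\mathcal{C}}(\lambda,\mu),\widetilde{\mathcal{Q}})=H^0(\widetilde{\mathcal{C}}(\lambda,\mu),\widetilde{\mathcal{Q}})$, transports to the statement that $H^\bullet(\mathfrak{n}_{\lambda,\mu},S(\mathfrak{a})/\mathcal{I}^{cl}_{\lambda,\mu})$ is concentrated in degree $0$. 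Finally, the degree-zero cohomology of the Chevalley--Eilenberg complex is the subspace of invariants, so $H^0(\mathfrak{n}_{\lambda,\mu},S(\mathfrak{a})/\mathcal{I}^{cl}_{\lambda,\mu})=(S(\mathfrak{a})/\mathcal{I}^{cl}_{\lambda,\mu})^{\text{ad}\,\mathfrak{n}_{\lambda,\mu}}$, which yields the asserted chain of isomorphisms.
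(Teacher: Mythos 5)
Your proposal is correct and follows essentially the same route as the paper: the paper's proof likewise identifies $(\widetilde{\mathcal{C}}(\lambda,\mu),\widetilde{\mathcal{Q}})$ with the Chevalley--Eilenberg complex of the $\mathfrak{n}_{\lambda,\mu}$-module $S(\mathfrak{a})/\mathcal{I}^{cl}_{\lambda,\mu}$ via the map $\mathsf{m}_1\wedge\dots\wedge\mathsf{m}_n\otimes p\mapsto \phi^{\mathsf{m}_1}\cdots\phi^{\mathsf{m}_n}J^{cl}_p$, invoking \eqref{eq:tilde of Q} for the intertwining and the degree-zero concentration (Lemma~\ref{lem:classical finite}) to conclude. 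Your write-up simply makes explicit the details the paper leaves implicit, namely the identification $S(\mathfrak{a})/\mathcal{I}^{cl}_{\lambda,\mu}\cong S(\mathfrak{p})$, the well-definedness of the $\mathfrak{n}_{\lambda,\mu}$-action, and the matching of $\widetilde{\mathcal{Q}}$ with the Chevalley--Eilenberg differential.
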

\begin{proof}
We show that the Chevalley--Eilenberg complex of the
$\mathfrak{n}_{\lambda, \mu}$-module $S(\mathfrak{a})/\mathcal{I}^{cl}_{\lambda, \mu}$
is isomorphic to the complex $(\widetilde{\mathcal{C}}(\lambda, \mu), \widetilde{\mathcal{Q}})$.
Consequently, the cohomology of $(\widetilde{\mathcal{C}}(\lambda, \mu), \widetilde{\mathcal{Q}})$
is concentrated in degree $0$, so the cohomology $H(\mathfrak{n}_{\lambda, \mu},
S(\mathfrak{a})/\mathcal{I}^{cl}_{\lambda, \mu})$ is also concentrated in degree $0$.
By \eqref{eq:tilde of Q},  the map between the complexes
    \ben
        \begin{aligned}
            \wedge^n (\mathfrak{n}_{\lambda, \mu}^\ast) \otimes
            S(\mathfrak{a})/\mathcal{I}^{cl}_{\lambda, \mu} &\to
            S(\phi^{\mathfrak{n}_{\lambda, \mu}^\ast} \oplus J^{cl}_{\mathfrak{p}}),\\
            \mathsf{m}_1\wedge \mathsf{m}_2 \wedge \ldots \wedge \mathsf{m}_n
            \otimes p &\mapsto \phi^{\mathsf{m}_1} \phi^{\mathsf{m}_2} \dots \phi^{\mathsf{m}_n}J^{cl}_p
        \end{aligned}
    \een
    for $\mathsf{m}\in \mathfrak{n}^*_{\lambda,\mu}$ and
    $p\in \mathfrak{p}$ is an isomorphism.
\end{proof}

Now using Proposition \ref{prop:Lie algebra coho-finite W-pre},
we can realize the generalized finite $W$-algebra $U(\lambda,\mu)$ via Lie algebra cohomology.

\begin{prop} \label{prop:Lie algebra coho-finite W}
    The Lie algebra cohomology of $U(\mathfrak{a})/ \mathcal{I}_{\lambda,\mu}$
    is concentrated on degree $0$ part; that is,
    \[
    \non
    H(\, \mathfrak{n}_{\lambda,\mu}  ,   U(\mathfrak{a})/ \mathcal{I}_{\lambda,\mu} \, )
    = H^0(\, \mathfrak{n}_{\lambda,\mu}, U(\mathfrak{a})/ \mathcal{I}_{\lambda,\mu}\, ).\]
    Hence $U(\lambda,\mu)\simeq H^0(\, \mathfrak{n}_{\lambda,\mu}  ,
    U(\mathfrak{a})/ \mathcal{I}_{\lambda,\mu} \, )=H(\, \mathfrak{n}_{\lambda,\mu}  ,
    U(\mathfrak{a})/ \mathcal{I}_{\lambda,\mu} \, ).$
\end{prop}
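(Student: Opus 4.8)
The plan is to deduce the enveloping-algebra statement from its classical counterpart in Proposition~\ref{prop:Lie algebra coho-finite W-pre} by means of the Kazhdan filtration and a spectral-sequence comparison. First I would observe that the degree-zero part of Lie algebra cohomology is by definition the space of invariants, so that
\[
H^0(\mathfrak{n}_{\lambda,\mu}, U(\mathfrak{a})/\mathcal{I}_{\lambda,\mu})
= (U(\mathfrak{a})/\mathcal{I}_{\lambda,\mu})^{\mathfrak{n}_{\lambda,\mu}}
= U(\lambda,\mu)
\]
by Definition~\ref{def:generalized W}. Hence the whole content of the proposition is the vanishing $H^n(\mathfrak{n}_{\lambda,\mu}, U(\mathfrak{a})/\mathcal{I}_{\lambda,\mu})=0$ for $n>0$, together with the resulting collapse $H=H^0$.

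To establish this vanishing I would work with the Chevalley--Eilenberg cochain complex $\wedge^\bullet \mathfrak{n}_{\lambda,\mu}^\ast \otimes U(\mathfrak{a})/\mathcal{I}_{\lambda,\mu}$ and equip it with the filtration induced by the Kazhdan filtration $K_\bullet(\mathfrak{a})$ on the coefficient module, the exterior generators $\phi^{\mathsf{m}}$ being assigned the conformal weight $\text{deg}(E_I)$ exactly as in $\widetilde{\mathcal{C}}(\lambda,\mu)$. The point to verify is that the Chevalley--Eilenberg differential is compatible with this filtration and that its leading term is the classical Chevalley--Eilenberg differential of the $\mathfrak{n}_{\lambda,\mu}$-module $S(\mathfrak{a})/\mathcal{I}^{cl}_{\lambda,\mu}$. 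This follows from $S(\mathfrak{a})=\text{gr}^K U(\mathfrak{a})$ and the identity $\text{gr}^K(U(\mathfrak{a})/\mathcal{I}_{\lambda,\mu})=S(\mathfrak{a})/\mathcal{I}^{cl}_{\lambda,\mu}$ recorded above: since the adjoint action of $\mathsf{n}\in K_s$ on $K_t$ lands in $K_{s+t-1}$ with leading term the Poisson action, the quantum corrections drop the Kazhdan degree and disappear in the associated graded.

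With this in hand the associated graded complex is precisely the Chevalley--Eilenberg complex computing $H(\mathfrak{n}_{\lambda,\mu}, S(\mathfrak{a})/\mathcal{I}^{cl}_{\lambda,\mu})$, which by Proposition~\ref{prop:Lie algebra coho-finite W-pre} is concentrated in degree $0$. Thus the $E_1$-page of the spectral sequence of the filtered complex is concentrated in cohomological degree $0$. Every higher differential $d_r$ with $r\geqslant 1$ raises the total cohomological degree by one, hence vanishes identically; the spectral sequence degenerates at $E_1$, giving $H^n=0$ for all $n>0$. Combined with the degree-zero identification this yields $U(\lambda,\mu)\simeq H^0=H(\mathfrak{n}_{\lambda,\mu},U(\mathfrak{a})/\mathcal{I}_{\lambda,\mu})$.

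I expect the main obstacle to be the convergence of this spectral sequence, since the Kazhdan filtration on $U(\mathfrak{a})/\mathcal{I}_{\lambda,\mu}$ is exhaustive but not bounded above. To circumvent this I would fix the total conformal weight $\Delta$: the Chevalley--Eilenberg complex splits as a direct sum over $\Delta$, each component is finite dimensional (as in Lemma~\ref{lem:good}(1)), and on it the Kazhdan filtration restricts to a finite filtration. Convergence then holds in each conformal weight separately and the degeneration argument applies verbatim, completing the proof.
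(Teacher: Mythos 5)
Your strategy coincides with the paper's: filter the Chevalley--Eilenberg complex $\wedge^\bullet \mathfrak{n}_{\lambda,\mu}^\ast \otimes U(\mathfrak{a})/\mathcal{I}_{\lambda,\mu}$ by the sum of the Kazhdan degree on the coefficients and the weights $\deg(E_I)$ on the exterior factors, identify the associated graded complex with the Chevalley--Eilenberg complex of the $\mathfrak{n}_{\lambda,\mu}$-module $S(\mathfrak{a})/\mathcal{I}^{cl}_{\lambda,\mu}$, invoke Proposition~\ref{prop:Lie algebra coho-finite W-pre} to get an $E_1$-page concentrated in cohomological degree $0$, and conclude by degeneration plus convergence. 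This is exactly the filtration displayed in the paper's proof and the argument of \cite[Proposition~5.2]{gg:qs} cited there; your identification of $H^0$ with $U(\lambda,\mu)$ and your observation that the quantum corrections to the differential strictly drop the Kazhdan degree are both correct.

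The gap is in the step you yourself flag as the main obstacle: convergence. You propose to split the Chevalley--Eilenberg complex ``as a direct sum over $\Delta$'' with finite-dimensional components, as in Lemma~\ref{lem:good}~(1). No such splitting exists. Conformal weight is a genuine grading on the affine complex $\widetilde{C}^k(\lambda,\mu)$ and on the classical complex $\widetilde{\mathcal{C}}(\lambda,\mu)$, but on $U(\mathfrak{a})/\mathcal{I}_{\lambda,\mu}$ it survives only as a filtration: the relation $ab-ba=[a,b]$ in $U(\mathfrak{a})$, for $a\in\mathfrak{a}(i)$ and $b\in\mathfrak{a}(j)$, equates an expression that would have weight $(1-i)+(1-j)$ with one of weight $1-i-j$, so the weight spaces cannot be separated. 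This is precisely why the paper remarks that the Kazhdan filtration is the analogue of the filtration \eqref{eq:Delta-filtration}, and it is the essential difference between the quantum finite complex and the two graded complexes you appeal to. The convergence must instead come from boundedness below: since $\chi$ vanishes on every element of $\mathfrak{n}_{\lambda,\mu}$ of degree $\geqslant 2$ with respect to the grading \eqref{eq:deg_mu}, any PBW monomial (with the $\mathfrak{n}_{\lambda,\mu}$-factors on the right) of negative Kazhdan degree lies in $\mathcal{I}_{\lambda,\mu}$, so the induced Kazhdan filtration on $U(\mathfrak{a})/\mathcal{I}_{\lambda,\mu}\cong U(\mathfrak{p})$ is nonnegative; together with $\deg(E_I)\geqslant 1$ this makes your filtration on the complex exhaustive and bounded below in each cohomological degree, and the classical convergence theorem for filtered complexes applies. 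With that replacement your degeneration argument goes through verbatim; as written, however, the convergence step rests on a decomposition that does not exist.
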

\begin{proof}
    The main idea is the same as for the proof of \cite[Proposition~5.2]{gg:qs}. Consider the
    cochain complex
        \[
        \non
        0 \longrightarrow  U(\mathfrak{a})/\mathcal{I}_{\lambda, \mu}
        \longrightarrow \mathfrak{n}_{\lambda, \mu}^\ast \otimes U(\mathfrak{a})/\mathcal{I}_{\lambda, \mu}
        \longrightarrow \dots \longrightarrow\wedge^n \mathfrak{n}_{\lambda, \mu}^\ast
        \otimes U(\mathfrak{a})/\mathcal{I}_{\lambda, \mu}  \longrightarrow \cdots\]
        and the  filtration on $\wedge^n \mathfrak{n}_{\lambda, \mu}^\ast
        \otimes U(\mathfrak{a})/\mathcal{I}_{\lambda, \mu}$ given by
        \begin{equation}
        \non
            F_p(\wedge^n \mathfrak{n}_{\lambda, \mu}^\ast
            \otimes U(\mathfrak{a})/\mathcal{I}_{\lambda, \mu})
            = \big\{ (E_{I_1}^\ast \wedge E_{I_2}^\ast \wedge \ldots
            \wedge E_{I_n}^\ast) \otimes v \ | \ \sum_{k=1}^n
            (\col_\mu(j_k) - \col_\mu(i_k))+j \leqslant p\big\}
        \end{equation}
        for $v \in K_jU(\mathfrak{a})/\mathcal{I}_{\lambda, \mu}$ and $I_k=(i_k,j_k,r_k)$.
        Considering the corresponding spectral sequence and its
        convergence, we get the proposition. The isomorphism
        $U(\lambda,\mu)\simeq H^0(\, \mathfrak{n}_{\lambda,\mu}  ,
        U(\mathfrak{a})/ \mathcal{I}_{\lambda,\mu} \, )$ follows directly
        from the definition of $U(\lambda,\mu).$
\end{proof}

\if
        Then the first page of the spectral sequence
        \ben
            E^{pq}_{0} = F_p(\wedge^{p+q} \mathfrak{n}_{\lambda, \mu}^- \otimes
            U(\mathfrak{a})/I_{\lambda, \mu}) / F_{p-1}
            (\wedge^{p+q} \mathfrak{n}_{\lambda, \mu}^- \otimes U(\mathfrak{a})/I_{\lambda, \mu}).
        \een
        Hence $
            E_{1}^{pq} = H^{p+q}(\mathfrak{n}_{\lambda, \mu},
            \text{gr}_p( U(\mathfrak{a})/I_{\lambda, \mu}))$
            and the spectral sequence converges to \[E_{\infty}^{p q}
            = F_pH^{p+q}(\mathfrak{n}_{\lambda, \mu},
            U(\mathfrak{a})/I_{\lambda, \mu}) / F_{p-1}H^{p+q}(\mathfrak{n}_{\lambda, \mu},
            U(\mathfrak{a})/I_{\lambda, \mu}),\]
            the second proposition holds.
\fi

Finally, we can prove the following theorem which is the main result in this section.

\begin{thm} \label{thm:zhu and finite}
    We have an isomorphism of associative algebras:
    \[
    Zhu_H(W^k(\lambda,\mu)) \cong U(\lambda,\mu).
    \]
\end{thm}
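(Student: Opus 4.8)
The plan is to chain together the cohomological descriptions already obtained, reducing the statement to an associative analogue of Proposition~\ref{prop:Lie algebra coho-finite W-pre}. Combining Proposition~\ref{prop:Zhu and coho commute} with the decomposition $C^{\text{fin}}(\lambda,\mu)=C^{\text{fin}}_+(\lambda,\mu)\otimes\widetilde{C}^{\text{fin}}(\lambda,\mu)$ and the vanishing $H(C^{\text{fin}}_+(\lambda,\mu),Q^{\text{fin}}|_{C^{\text{fin}}_+(\lambda,\mu)})=\mathbb{C}$, we obtain
\[
Zhu_H(W^k(\lambda,\mu))\cong H(\widetilde{C}^{\text{fin}}(\lambda,\mu),\widetilde{Q}^{\text{fin}}).
\]
On the other hand, Proposition~\ref{prop:Lie algebra coho-finite W} realizes $U(\lambda,\mu)$ as the Lie algebra cohomology $H(\mathfrak{n}_{\lambda,\mu},U(\mathfrak{a})/\mathcal{I}_{\lambda,\mu})$, concentrated in degree zero. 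Thus it suffices to produce an isomorphism of complexes between $(\widetilde{C}^{\text{fin}}(\lambda,\mu),\widetilde{Q}^{\text{fin}})$ and the Chevalley--Eilenberg complex of the $\mathfrak{n}_{\lambda,\mu}$-module $U(\mathfrak{a})/\mathcal{I}_{\lambda,\mu}$.

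To build the map I would proceed exactly as in Proposition~\ref{prop:Lie algebra coho-finite W-pre}, replacing the classical objects by their associative counterparts. Since $[\bar{J}_a,\bar{J}_b]=\overline{[a,b]}$ up to a scalar for $a,b\in\mathfrak{p}$, the elements $\bar{J}_{\mathfrak{p}}$ generate a subalgebra of $\widetilde{C}^{\text{fin}}(\lambda,\mu)$ which, by the PBW theorem and the Whittaker reduction $\mathsf{n}\mapsto-\chi(\mathsf{n})$, is isomorphic as a filtered vector space to $U(\mathfrak{a})/\mathcal{I}_{\lambda,\mu}$; write $\bar{J}_u$ for the image of $u\in U(\mathfrak{a})/\mathcal{I}_{\lambda,\mu}$. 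The candidate isomorphism is then
\[
\mathsf{m}_1\wedge\dots\wedge\mathsf{m}_n\otimes\bar{u}\ \longmapsto\ \bar{\phi}^{\mathsf{m}_1}\cdots\bar{\phi}^{\mathsf{m}_n}\,\bar{J}_u.
\]
This map is filtered with respect to the Kazhdan filtration, and using $\text{gr}(\widetilde{C}^{\text{fin}}(\lambda,\mu))\cong\widetilde{\mathcal{C}}(\lambda,\mu)$ together with $\text{gr}(U(\mathfrak{a})/\mathcal{I}_{\lambda,\mu})\cong S(\mathfrak{a})/\mathcal{I}^{cl}_{\lambda,\mu}$, its associated graded is precisely the classical isomorphism of complexes from Proposition~\ref{prop:Lie algebra coho-finite W-pre}. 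Since each conformal-weight space is finite dimensional (cf.\ Lemma~\ref{lem:good}(1)), the filtration is well behaved, so a filtered map whose associated graded is bijective is itself bijective.

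The main obstacle is to check that this bijection intertwines the two differentials, i.e.\ that $\widetilde{Q}^{\text{fin}}$ corresponds to the Chevalley--Eilenberg differential $d_{\text{CE}}$. The term $Q^{\text{fin}}(\bar{\phi}^{\mathsf{m}})=\tfrac12\sum_I\bar{\phi}^I\bar{\phi}^{E_I\cdot\mathsf{m}}$ reproduces the coadjoint part of $d_{\text{CE}}$, while the module-action part $\sum_I\bar{\phi}^I\,(E_I\cdot\bar{u})$ must be matched by the formula for $Q^{\text{fin}}(\bar{J}_a)$. The key point is that in $U(\mathfrak{a})/\mathcal{I}_{\lambda,\mu}$ one has $E_I\cdot\bar{a}=\overline{\pi_{\le}[E_I,a]}-\chi(\pi_+[E_I,a])$, so the terms $\bar{\phi}^I\bar{J}_{\pi_{\le}[E_I,a]}$ and $-\bar{\phi}^{a\cdot\chi}$ supply the $\mathfrak{p}$-part and the scalar part of the action, whereas the remaining quantum corrections $\bar{\phi}^{\pi_{\le}[a,[h,E_I]]\cdot E_I^\ast}$ and the trace terms account for the passage from the Poisson to the associative structure. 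Verifying that all these contributions assemble into $\sum_I\bar{\phi}^I(E_I\cdot\bar{a})$ on generators, and then extending by the Leibniz rule, establishes that the map is an isomorphism of complexes. Passing to cohomology and invoking Proposition~\ref{prop:Lie algebra coho-finite W} then yields
\[
Zhu_H(W^k(\lambda,\mu))\cong H(\widetilde{C}^{\text{fin}}(\lambda,\mu),\widetilde{Q}^{\text{fin}})\cong H(\mathfrak{n}_{\lambda,\mu},U(\mathfrak{a})/\mathcal{I}_{\lambda,\mu})\cong U(\lambda,\mu),
\]
as desired.
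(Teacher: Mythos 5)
Your reduction is the same as the paper's: Proposition~\ref{prop:Zhu and coho commute} gives $Zhu_H(W^k(\lambda,\mu))\cong H(C^{\text{fin}}(\lambda,\mu),Q^{\text{fin}})$, Proposition~\ref{prop:Lie algebra coho-finite W} identifies $U(\lambda,\mu)$ with $H(\mathfrak{n}_{\lambda,\mu},U(\mathfrak{a})/\mathcal{I}_{\lambda,\mu})$, and everything hinges on comparing the finite BRST complex with the Chevalley--Eilenberg complex. The paper settles this last comparison by observing $U(\mathfrak{a})/\mathcal{I}_{\lambda,\mu}\simeq U(\mathfrak{a})\otimes_{\mathfrak{n}_{\lambda,\mu}}\mathbb{C}_{-\chi}$ and following the argument of \cite[Appendix]{dk:fa}; you instead try to write down an explicit isomorphism of complexes lifting the classical one of Proposition~\ref{prop:Lie algebra coho-finite W-pre}. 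That is exactly where your argument has a genuine gap: the map you propose is \emph{not} a chain map, and the discrepancy is not a formality that "accounts for the passage from the Poisson to the associative structure" --- it is an obstruction that must be removed by correcting the map (or by replacing the chain-level identification with the filtration argument of \cite{dk:fa}).

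Concretely, for $a\in\mathfrak{p}$ the Chevalley--Eilenberg differential sends $1\otimes\bar a$ to $\sum_I E_I^\ast\otimes\big(\overline{\pi_\le[E_I,a]}-\chi(\pi_+[E_I,a])\big)$, whose image under your map is $\sum_I\bar\phi^I\bar J_{\pi_\le[E_I,a]}-\bar\phi^{a\cdot\chi}$, whereas the formula in Section~\ref{subsec:Zhu algebra} gives
\[
Q^{\text{fin}}(\bar J_a)=\sum_{I}\bar\phi^{I}\bar J_{\pi_\le[E_I,a]}-\bar\phi^{a\cdot\chi}
+\sum_{I}\bar\phi^{\pi_\le[a,[h,E_I]]\cdot E_I^\ast}
-\sum_{I}\Big(k([h,E_I]\,|\,a)+\text{tr}_{\mathfrak{n}_{\lambda,\mu}}\big((\pi_+\,\text{ad}\,[h,E_I])\circ(\pi_+\,\text{ad}\,a)\big)\Big)\bar\phi^{I}.
\]
The two extra sums are the normal-ordering corrections of the Zhu product and they do not vanish in general: already for $\lambda=(1,1)$, $\mu=(2)$ and $a=e_{21}$ they equal $(2-k)\,\bar\phi^{e_{12}^\ast}\neq 0$, so your map fails to intertwine the differentials on the very first nontrivial element (consistently with the fact that the weight-two generator of $U(\lambda,(n))$ is $\bar e_{21}$ plus quadratic and scalar corrections, not $\bar e_{21}$ itself). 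A second problem is your plan to "verify on generators and extend by the Leibniz rule": $U(\mathfrak{a})/\mathcal{I}_{\lambda,\mu}$ is a quotient by a \emph{left} ideal, hence only an $\mathfrak{n}_{\lambda,\mu}$-module and not an algebra, so there is no Leibniz rule on that side; and on the BRST side, rewriting $Q^{\text{fin}}(\bar J_{a_1}\cdots\bar J_{a_s})$ in your chosen PBW order produces further terms via $[\bar J_a,\bar\phi^{\mathsf m}]=\bar\phi^{a\cdot\mathsf m}$ which you never account for. Constructing the corrected quasi-isomorphism (or running the spectral-sequence comparison) is precisely the content the paper delegates to \cite[Appendix]{dk:fa}; without it, the central step of your proof is not just unproven but false as stated.
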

\begin{proof}
    By Proposition \ref{prop:Zhu and coho commute} and
    Proposition~\ref{prop:Lie algebra coho-finite W}, we know that
    \begin{equation}
    \non
        Zhu_H(W^k(\lambda, \mu)) \cong H(C^{\text{fin}}(\lambda, \mu), Q^{\text{fin}}),
        \qquad U(\lambda, \mu) = H(\mathfrak{n}_{\lambda, \mu}, U(\mathfrak{a})/\mathcal{I}_{\lambda, \mu}).
    \end{equation}
    Hence it suffices to show that the complexes
    $(C^{\text{fin}}(\lambda, \mu), Q^{\text{fin}})$
    and $(\wedge^\bullet \mathfrak{n}_{\lambda, \mu}^\ast
    \otimes U(\mathfrak{a})/\mathcal{I}_{\lambda, \mu}, d^c)$ are quasi-isomorphic, where
    \begin{equation}\label{eq:differentialofLAC}
    \begin{aligned}
        d^c(\Psi \otimes v) &= \frac{1}{2}
        \sum_{I \in S_{\lambda, \mu}} (E_{I})^\ast \wedge E_{I}
        \cdot \Psi \otimes v + \sum_{I \in S_{\lambda, \mu}}
        (E_{I})^\ast \wedge\Psi \otimes \text{ad}(E_{I})(v).
        \end{aligned}
    \end{equation}
    Here we set $E_I = E_{ij}^{(r)} = 0$ if $r \geqslant \lambda_j$ or
    $r < \lambda_j - \text{min}(\lambda_i, \lambda_j)$.
    Observe that
\ben
U(\mathfrak{a})/\mathcal{I}_{\lambda,\mu}
    \simeq U(\mathfrak{a})\otimes_{\mathfrak{n}_{\lambda,\mu}} \mathbb{C}_{-\chi},
\een
    where $\mathbb{C}_{-\chi}$ is the one-dimensional
    representation of $\mathfrak{n}_{\lambda,\mu}$ with $\mathsf{n}\cdot 1= -\chi(\mathsf{n})$.
    Hence we can follow the argument of \cite[Appendix]{dk:fa}.
\end{proof}

By Corollary \ref{cor:affine structure} and Theorem \ref{thm:zhu and finite},
we get the following corollary.

\begin{cor}  \label{cor:finite structure}
Let $a_1, \dots, a_r$ be a basis of $\text{ker}\, \varphi$ for the map $\varphi$ in \eqref{varphi}.
\begin{enumerate}[(1)]
    \item  As an associative algebra, $U(\lambda,\mu)$ has a generating set
    consisting of $r$ elements, where $r$ is the dimension of
    $\mathfrak{a}(0)$. Moreover, $r$ is the minimal number of elements in a generating set.
    \item Suppose $a_i\in K_{\Delta_i}(\mathfrak{a})\setminus K_{\Delta_i-1}(\mathfrak{a})$
    for $i=1,2,\dots, r.$ Let $v_i\in K_{\Delta_i}(\mathfrak{a})$ be an element
    in $U(\lambda,\mu)$ such that the linear term in $\text{gr}^K_{\Delta_i}(v_i)$ is $a_i.$
    Then the set $\{v_i\ |\ i=1,\dots, r\}$ generates $U(\lambda,\mu)$ and is algebraically independent.
\end{enumerate}
\end{cor}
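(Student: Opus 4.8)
The plan is to transport the structural description of the affine $W$-algebra obtained in Corollary~\ref{cor:affine structure} across the Zhu functor, using the isomorphism $Zhu_H(W^k(\lambda,\mu))\cong U(\lambda,\mu)$ of Theorem~\ref{thm:zhu and finite}. The key compatibility is that the Hamiltonian $H$, i.e. the conformal weight $\Delta$ of \eqref{eq:Delta in W}, matches the Kazhdan filtration $K$ on $U(\mathfrak{a})$: an element $a\in\mathfrak{a}(j_a)$ has $\Delta(J_a)=1-j_a$ and lies in $K_{1-j_a}(\mathfrak{a})\setminus K_{-j_a}(\mathfrak{a})$, so the two filtrations agree on the images of the building blocks, as already observed in the text preceding Proposition~\ref{prop:Lie algebra coho-finite W-pre}. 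Consequently, via Propositions~\ref{prop:Lie algebra coho-finite W-pre} and~\ref{prop:Lie algebra coho-finite W}, the associated graded $\text{gr}^K U(\lambda,\mu)$ is identified with the classical cohomology $H(\widetilde{\mathcal{C}}(\lambda,\mu),\widetilde{\mathcal{Q}})$, which by Lemma~\ref{lem:classical finite} is the polynomial algebra $S\big(H(r_-,\widetilde{\mathcal{Q}}^{\text{gr}})\big)$ on $r=\dim\mathfrak{a}(0)$ free generators.

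For part~(1), note that the quotient map $W^k(\lambda,\mu)\to Zhu_H(W^k(\lambda,\mu))=W^k(\lambda,\mu)/(\partial+H)W^k(\lambda,\mu)$ annihilates total derivatives, so it turns a differential-algebra generating set into an associative-algebra generating set. Applying this to the generators of Corollary~\ref{cor:affine structure}(1) produces $r$ generators $v_i:=Zhu_H(w_i)$ of $U(\lambda,\mu)$. For minimality, recall that the leading symbols of generators of a filtered algebra generate its associated graded; since $\text{gr}^K U(\lambda,\mu)$ is, by the previous paragraph, a polynomial algebra on $r$ free generators, its space of indecomposables has dimension $r$, so it cannot be generated by fewer than $r$ elements, and hence neither can $U(\lambda,\mu)$.

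For part~(2), the free generators of $H(r_-,\widetilde{\mathcal{Q}}^{\text{gr}})$ are exactly the classes $J^{cl}_{a_i}$ with $\{a_i\}$ a basis of $\ker\varphi$: since $\widetilde{\mathcal{Q}}^{\text{gr}}(J^{cl}_a)=-\phi^{a\cdot\chi}$ and $\widetilde{\mathcal{Q}}^{\text{gr}}(\phi^{\mathsf{m}})=0$, Lemma~\ref{lem:surjectivity} gives $H(r_-,\widetilde{\mathcal{Q}}^{\text{gr}})=J^{cl}_{\ker\varphi}\cong\ker\varphi$ under the identification $J^{cl}_{\mathfrak{p}}\simeq\mathfrak{p}$. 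Now take $v_i\in K_{\Delta_i}(\mathfrak{a})$ whose leading symbol $\text{gr}^K_{\Delta_i}(v_i)$ has linear term $a_i$. In $\text{gr}^K U(\lambda,\mu)\cong S(\ker\varphi)$ this symbol equals $a_i$ modulo decomposable elements, hence is a free generator; consequently the symbols $\text{gr}^K_{\Delta_i}(v_i)$ are algebraically independent and generate $\text{gr}^K U(\lambda,\mu)$. By the standard lifting from the associated graded to the filtered algebra, the $v_i$ then generate $U(\lambda,\mu)$ and are algebraically independent.

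The main obstacle is the bookkeeping that matches the affine linear-term datum $J_{a_i}$ of Corollary~\ref{cor:affine structure}(2)(ii) with the correct free generator of $\text{gr}^K U(\lambda,\mu)$ after passing through the Zhu functor; concretely, one must verify that $Zhu_H(J_{a_i})$ has Kazhdan leading symbol $J^{cl}_{a_i}\leftrightarrow a_i$, so that the affine and finite descriptions of the generators' linear parts coincide. Once this dictionary between conformal weight and Kazhdan degree is in place, both parts are formal consequences of the corresponding affine statements, exactly paralleling the passage from the affine to the finite $W$-algebra in \cite{dk:fa}.
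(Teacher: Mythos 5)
Your overall route is the same as the paper's: the paper proves this corollary by the one-line combination of Corollary~\ref{cor:affine structure} with Theorem~\ref{thm:zhu and finite}, and your elaboration of that combination is largely faithful to its machinery. In particular, the dictionary between conformal weight and Kazhdan degree, the identification $\text{gr}^K U(\lambda,\mu)\cong S(\ker\varphi)$ via Propositions~\ref{prop:Lie algebra coho-finite W-pre}, \ref{prop:Lie algebra coho-finite W} and Lemma~\ref{lem:classical finite}, the transport of the generators of Corollary~\ref{cor:affine structure}(1) through the Zhu quotient using \eqref{eq:Zhu_product}, and the graded-Nakayama-plus-lifting argument for part~(2) are all sound.

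The gap is in your minimality argument for part~(1). The fact you ``recall'' --- that the leading symbols of any generating set of a filtered algebra generate its associated graded --- is false. Counterexample: let $U=\mathbb{C}[x,y,w]$ with the filtration by total degree and take the generators $u_1=x$, $u_2=y$, $u_3=xy+w$ (these generate, since $u_3-u_1u_2=w$); their symbols are $x$, $y$ and $xy$, which generate only $\mathbb{C}[x,y]\subsetneq \text{gr}\,U=\mathbb{C}[x,y,w]$, even though $\text{gr}\,U$ is a polynomial algebra. So ``few generators of $U(\lambda,\mu)$ implies few generators of $\text{gr}^K U(\lambda,\mu)$'' does not follow, and your reduction to counting indecomposables collapses.

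In fact the difficulty is not merely technical: for bare associative-algebra generation the minimality assertion can fail outright. Take $\lambda=(1,1)$ and $\mu=(1,1)$, so that $U(\lambda,\mu)=U(\mathfrak{gl}_2)$ by Example~\ref{ex:env} and $r=\dim\mathfrak{a}(0)=4$; yet the three elements $e_{12}$, $e_{21}$, $e_{11}+e_{22}$ generate $U(\mathfrak{gl}_2)$ as an associative algebra, because $[e_{12},e_{21}]=e_{11}-e_{22}$. The minimality statement must therefore be understood --- exactly as its affine counterpart concerns differential-algebra (i.e.\ strong) generation rather than generation under all $\lambda$-bracket products --- for generating sets compatible with the Kazhdan filtration: sets $\{u_j\}_{j=1}^s$ with $u_j\in K_{m_j}(\mathfrak{a})$, $m_j\geqslant 1$, whose ordered monomials of weight $\sum_j k_jm_j\leqslant d$ span $K_dU(\lambda,\mu)$ for every $d$; this is precisely what the Zhu images of strong generators of $W^k(\lambda,\mu)$ provide. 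Under this reading the correct argument is a growth comparison: by the PBW property established in your part~(2), $\dim K_dU(\lambda,\mu)=\#\{(k_1,\dots,k_r)\,:\,\sum_i k_i\Delta_i\leqslant d\}$ grows like a positive constant times $d^r$, while a filtered generating set of $s$ elements gives $\dim K_dU(\lambda,\mu)\leqslant \#\{(k_1,\dots,k_s)\,:\,\sum_j k_jm_j\leqslant d\}=O(d^s)$, whence $s\geqslant r$. With this repair (and with ``generating set'' read in the filtered sense), your proof is complete and matches the paper's intended argument.
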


\section{Principal and minimal nilpotent $\mu$}

Here we will discuss the $W$-algebras for two particular cases of $\mu$, while
keeping $\la$ arbitrary. By extending the terminology from the column-partition $\la$,
we will refer to the cases $\mu=(n)$ and $\mu=(1^{n-2}2)$ as the {\em principal}
and {\em minimal nilpotent} cases, respectively; cf. Remark~\ref{rem:krw}.

\subsection{Principal nilpotent case}

In this subsection, we describe generators of the generalized affine
and finite $W$-algebras when $\mu = (n)$.
In this particular case, the affine $W$-algebra $W^k(\lambda,(n))$
was introduced and described in \cite{m:wa}, although the finite
counterpart $U(\lambda, (n))$ was not discussed there.
We will prove here that the
center of $U(\mathfrak{a})$ is isomorphic to $U(\lambda, (n)).$

Keeping the definitions from Sec.~\ref{sec:finite}, note that
the dimension of $\mathfrak{a}(0)$ is the number of boxes in the pyramid $\la$,
which is equal to $N$.
Hence $\text{dim}(\text{ker} \, \varphi) = N$. We can
decompose \[\text{ker} \, \varphi=\bigoplus_{m=1}^n (\text{ker} \, \varphi)[m]\subset \mathfrak{p}\]
where $(\text{ker} \, \varphi)[m]$ is the subspace of $\text{ker}\, \varphi$
with conformal weight $m$. Then
\begin{equation} \label{eq:dim_conf_princ}
    \text{dim}(\text{ker} \, \varphi)[m]=\lambda_{n-m+1}
\end{equation}
since the set
\ben
E_{n, n-m+1}^{(r)} + E_{n-1, n-m}^{(r+\lambda_{n}-\lambda_{n-1})}
+ E_{n-2, n-m-1}^{(r+\lambda_{[n-1, n]}- \lambda_{[n-m, n-m+1]})}+
\dots + E_{m, 1}^{(r+\lambda_{[m+1, n]}-\lambda_{[2, n-m+1]})}
\een
with $0 \leqslant r \leqslant \lambda_{n-m+1}-1$
is a basis of $(\text{ker} \, \varphi)[m].$
Combining Corollary \ref{cor:affine structure} and \eqref{eq:dim_conf_princ},
we can derive following properties.

\begin{lem} \label{lem:princ_affine}
    Let $ A:=\{w_1,\dots,w_N\}$ be a subset of $W^k(\lambda,(n))$
    and $m=1,2,\dots, n$. If $A$ has $\lambda_{n-m+1}$ elements whose
    linear parts without total derivatives span a subspace in $\mathfrak{p}$
    of dimension $\lambda_{n-m+1}$ with conformal weight $m$,
    then $A$
    freely generates $W^k(\lambda,(n))$ as a differential algebra.
\end{lem}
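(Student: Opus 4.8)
The plan is to reduce the statement to Corollary~\ref{cor:affine structure}, whose part~(2) guarantees that any homogeneous family whose linear terms (without total derivatives) form a basis of $\ker\varphi$ generates $W^k(\lambda,(n))$. The whole task is thus to show that the hypotheses on $A$ force its linear terms to be such a basis, distributed correctly across conformal weights, and then to upgrade ``generates'' to ``freely generates''.

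First I would record the structural fact, drawn from Theorem~\ref{thm:degree zero} and Corollary~\ref{cor:affine structure}, that $W^k(\lambda,(n))$ is a free differential polynomial algebra on $N=\dim\mathfrak{a}(0)$ homogeneous generators, realized inside $V(J_{\mathfrak{p}})$ via \eqref{eq:affine and W}, and that the linear term without total derivative of \emph{any} element of $W^k(\lambda,(n))$ already lies in $J_{\ker\varphi}$. Indeed, writing an arbitrary element as a differential polynomial in the generators $w_i^0$ of Corollary~\ref{cor:affine structure}(2), only the degree-one monomials $\partial^t w_i^0$ contribute to the linear part in $V(J_{\mathfrak{p}})$; the terms with $t\geqslant 1$ are total derivatives and products have degree $\geqslant 2$, so the surviving contribution is a combination of the $J_{a_i}$ with $a_i\in\ker\varphi$. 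Restricting to a homogeneous element of conformal weight $m$, its linear term without total derivative therefore lands in the weight-$m$ component $(\ker\varphi)[m]$. This refinement is essential, since the full weight-$m$ part of $\mathfrak{p}$ is in general strictly larger than $(\ker\varphi)[m]$.

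The key step is then a dimension count. By \eqref{eq:dim_conf_princ} we have $\dim(\ker\varphi)[m]=\lambda_{n-m+1}$. The hypothesis supplies, for each $m=1,\dots,n$, exactly $\lambda_{n-m+1}$ elements of $A$ of conformal weight $m$ whose linear terms span a $\lambda_{n-m+1}$-dimensional subspace; by the previous paragraph these linear terms already lie in the $\lambda_{n-m+1}$-dimensional space $(\ker\varphi)[m]$, so they must span all of it and be linearly independent. Summing over $m$, the linear terms without total derivatives of the $N=\sum_m\lambda_{n-m+1}$ elements of $A$ constitute a basis of $\ker\varphi=\bigoplus_{m=1}^n(\ker\varphi)[m]$. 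Taking this basis as the distinguished basis $a_1,\dots,a_N$, the family $A$ satisfies conditions (i) and (ii) of Corollary~\ref{cor:affine structure}(2), whence $A$ generates $W^k(\lambda,(n))$.

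It remains to promote generation to free (differential algebraically independent) generation, which I expect to be the only genuinely nontrivial point. Here I would use a Hilbert series argument: let $\mathcal{F}$ be the free differential polynomial algebra on homogeneous symbols $X_1,\dots,X_N$, with $X_i$ assigned the conformal weight of $w_i$, so that $\mathcal{F}$ carries $\lambda_{n-m+1}$ generators of weight $m$. The assignment $X_i\mapsto w_i$ defines a surjective homomorphism of conformally graded differential algebras $\mathcal{F}\to W^k(\lambda,(n))$. By Corollary~\ref{cor:affine structure}(2), $W^k(\lambda,(n))$ is itself freely generated by a homogeneous set with precisely these weight multiplicities, so $\mathcal{F}$ and $W^k(\lambda,(n))$ have the same (finite, by Lemma~\ref{lem:good}(1)) graded dimension in each conformal weight. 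A graded surjection between spaces of equal finite Hilbert series is an isomorphism, so $\mathcal{F}\to W^k(\lambda,(n))$ is an isomorphism and $A$ freely generates $W^k(\lambda,(n))$, as claimed.
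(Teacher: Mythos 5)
Your proof is correct and takes essentially the same route as the paper, which derives this lemma simply by combining Corollary~\ref{cor:affine structure} with the dimension count \eqref{eq:dim_conf_princ}. Your two supplementary steps --- showing that the linear term without total derivative of any homogeneous element of $W^k(\lambda,(n))$ necessarily lies in the corresponding weight component of $\ker\varphi$, and the graded-dimension argument upgrading generation to free generation --- are correct elaborations of details the paper leaves implicit.
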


Note that this agrees with \cite{m:wa}, where
generators of $W^k(\lambda,(n))$ were described.

\begin{ex} \label{rem:no conformal vector}
    Let $\lambda = (2, 2)$. Recall from \eqref{eq:affine and W} that $W^k(\lambda,(n))$
    can be regarded as a vertex subalgebra of $V(J_{\mathfrak{p}}).$
    According to \cite{m:wa}, conformal weight $1$ generators of $W^k(\lambda, (n))$ are
    \begin{equation}
    \non
        w_1 = J_{E_{11}^{(0)}} + J_{E_{22}^{(0)}}, \quad w_2 = J_{E_{11}^{(1)}} + J_{E_{22}^{(1)}}
    \end{equation}
    and conformal weight $2$ generators of $W^k(\lambda, (n))$ are
    \begin{equation}
    \non
        w_3 = J_{E_{21}^{(0)}} + :J_{E_{11}^{(0)}} J_{E_{22}^{(1)}}:
        + :J_{E_{11}^{(1)}} J_{E_{22}^{(0)}}: - (k+2) \partial J_{E_{11}^{(1)}}, \quad w_4
        = J_{E_{21}^{(1)}} + :J_{E_{11}^{(1)}} J_{E_{22}^{(1)}} : \, .
    \end{equation}
    The $\lambda$-brackets between generators are
    \begin{equation}
    \non
        [w_3 {}_\lambda w_3] = (\partial + 2\lambda)\left(-(k+4) w_4
        + \left( \frac{k}{4} + 1 \right) :w_2w_2:\right)
    \end{equation}
    and all other $\lambda$-brackets are $0$. Note that, in this example,
    we can see that $W^k(\lambda,(n))$ does not have a conformal vector; see Remark \ref{rem:conf}.
    If we look for a conformal vector of weight $2$ in the form
    \begin{equation}
    \non
        w := c_1 \partial w_1 + c_2 \partial w_2 + c_{11} :w_1 w_1:
        + c_{12} : w_1 w_2: + c_{22} :w_2 w_2: + c_3 w_3 + c_4 w_4
    \end{equation}
    and solve the equation
    \begin{equation}
    \non
        [w {}_\lambda w] = (\partial + 2\lambda) w + c\lambda^3
    \end{equation}
    then there will be no non-trivial solution $(c_1, c_2, c_{11}, c_{12}, c_{22}, c_3, c_4)$.
\end{ex}

Now we want to find a generating set of $U(\lambda,(n))$. Combining again
Corollary \ref{cor:finite structure} and \eqref{eq:dim_conf_princ}, we get the following lemma.

\begin{lem}\label{lem:princ_finite}
    Let $F:=\{v_1,\dots,v_N\}$ be a subset of $U(\lambda,(n))$
    and $m=1,2,\dots, n$. Suppose $F$ has $\lambda_{n-m+1}$ elements
    in $K_m(\mathfrak{a})\setminus K_{m-1}(\mathfrak{a})$
    whose linear parts in the image by $\text{gr}^K_{m}$ span
    a subspace in $\mathfrak{p}$ of
dimension $\lambda_{n-m+1}$,  where $K_{\bullet}$
    is the Kazhdan filtration. Then $F$ freely generates $U(\lambda,(n))$ as an associative algebra.
\end{lem}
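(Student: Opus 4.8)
The plan is to deduce the statement directly from Corollary~\ref{cor:finite structure}(2) together with the dimension count \eqref{eq:dim_conf_princ}, so that the only real task is to check that the hypotheses on $F$ force the linear terms of the Kazhdan symbols of $v_1,\dots,v_N$ to constitute a basis of $\text{ker}\,\varphi$. Once that is established, the conclusion that $F$ freely generates $U(\lambda,(n))$ is immediate.

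First I would record that the degree-$1$ component of the graded algebra $\text{gr}^K(U(\lambda,(n)))$, viewed inside $S(\mathfrak{p})$, is exactly $\text{ker}\,\varphi$. This is implicit in the proof of Lemma~\ref{lem:classical finite}: there one has $H(\text{gr}(\widetilde{\mathcal{C}}(\lambda,\mu)),\widetilde{\mathcal{Q}}^{\text{gr}})\simeq S(H(r_-,\widetilde{\mathcal{Q}}^{\text{gr}}))$, and the linear cohomology $H(r_-,\widetilde{\mathcal{Q}}^{\text{gr}})$ is identified with $\text{ker}\,\varphi$ by Lemma~\ref{lem:surjectivity} and the computation $\widetilde{\mathcal{Q}}^{\text{gr}}(J^{cl}_a)=-\phi^{a\cdot\chi}$ for $a\in\mathfrak{p}$. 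Since the conformal weight is compatible with the Kazhdan filtration, this grading restricts to the decomposition $\text{ker}\,\varphi=\bigoplus_{m=1}^n(\text{ker}\,\varphi)[m]$ with $\text{dim}(\text{ker}\,\varphi)[m]=\lambda_{n-m+1}$ by \eqref{eq:dim_conf_princ}.

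Next I would use this to locate the linear terms. For any $v\in U(\lambda,(n))$ lying in $K_m(\mathfrak{a})\setminus K_{m-1}(\mathfrak{a})$, the symbol $\text{gr}^K_m(v)$ belongs to $\text{gr}^K(U(\lambda,(n)))$ and has conformal weight $m$, so its linear term automatically lies in $(\text{ker}\,\varphi)[m]$. By hypothesis, for each $m$ the $\lambda_{n-m+1}$ elements of $F$ sitting in $K_m(\mathfrak{a})\setminus K_{m-1}(\mathfrak{a})$ have linear terms spanning a subspace of $\mathfrak{p}$ of dimension $\lambda_{n-m+1}=\text{dim}(\text{ker}\,\varphi)[m]$; as these linear terms already lie in $(\text{ker}\,\varphi)[m]$, they must form a basis of it. Taking the union over $m=1,\dots,n$ shows that the linear terms of the symbols of $v_1,\dots,v_N$ form a basis of $\text{ker}\,\varphi$.

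With this in hand, Corollary~\ref{cor:finite structure}(2) applies verbatim: choosing $a_i$ to be these basis vectors and $\Delta_i$ the corresponding conformal weights, the set $\{v_1,\dots,v_N\}$ is an algebraically independent generating set of $U(\lambda,(n))$, which is precisely the assertion that $F$ freely generates $U(\lambda,(n))$. The only genuinely delicate point is the first step---that the linear part of the Kazhdan symbol of an invariant of conformal weight $m$ necessarily lands in $(\text{ker}\,\varphi)[m]$ rather than merely in $\mathfrak{p}$; once the degree-$1$ piece of $\text{gr}^K(U(\lambda,(n)))$ is pinned down as $\text{ker}\,\varphi$, everything else reduces to a matching of dimensions.
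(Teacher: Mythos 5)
Your proof is correct and takes essentially the same route as the paper, which obtains this lemma precisely by ``combining Corollary~\ref{cor:finite structure} and \eqref{eq:dim_conf_princ}'' without writing out the details. Your verification of the one implicit point---that the linear parts of the Kazhdan symbols of elements of $U(\lambda,(n))$ automatically lie in $(\text{ker}\,\varphi)[m]$, so that the spanning hypothesis forces them to be a basis of $\text{ker}\,\varphi$ and Corollary~\ref{cor:finite structure}(2) applies---is exactly the intended argument.
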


To find a generating set of $U(\lambda,(n))$, recall
a result from \cite{m:cc}. Let
\begin{equation}
\non
    \mathcal{M}:= \left[\begin{array}{cccc} x +(n-1)\lambda_1 +\epsilon_{11}(u) &
    \epsilon_{12}(u) & \cdots & \epsilon_{1n}(u) \\
    \epsilon_{21}(u) & x+(n-2) \lambda_2 + \epsilon_{22}(u) & \cdots & \epsilon_{2n}(u) \\
    \vdots & \vdots & \ddots & \vdots \\
    \epsilon_{n1}(u) & \epsilon_{n2}(u) & \cdots & x+\epsilon_{nn}(u)
    \end{array} \right]
\end{equation}
be the matrix with entries in $\mathbb{C}[x]\otimes U(\mathfrak{a})[u]$, where
\begin{equation}
\non
    \epsilon_{ij}(u)= \left\{\begin{array}{ll}
        E_{ij}^{(0)}+ E_{ij}^{(1)}u+\dots + E_{ij}^{(\lambda_j-1)} u^{\lambda_j-1} &
        \text{ if } i\geqslant  j,  \\[0.4em]
        E_{ij}^{(\lambda_j-\lambda_i)} u^{\lambda_j-\lambda_i}+\dots
        + E_{ij}^{(\lambda_j-1)} u^{\lambda_j-1}&   \text{ if } i< j.
    \end{array} \right.
\end{equation}
Let $\Phi_m(u)=\sum_{r\in \mathbb{Z}_+}\Phi_m^{(r)}u^r$ for  $\Phi_m^{(r)}\in U(\mathfrak{a})$, be
the coefficients in the expansion of
the column determinant of $\mathcal{M}$,
\ben
    \text{cdet}\ts\mathcal{M}= x^n+ \Phi_1(u)x^{n-1}+ \dots +\Phi_n(u).
\een
Then the following is an algebraically independent
generating set of the center of the algebra $U(\mathfrak{a})$ \cite[Cor.~2.7]{m:cc}:
\begin{equation}\label{eq:gen_center}
    \{\, \Phi_m^{(r)}\, | \, \lambda_{n-m+2}+\lambda_{n-m+3}+\dots
    + \lambda_n < r+m\leqslant \lambda_{n-m+1}+\lambda_{n-m+2}+\dots + \lambda_n\, \}.
\end{equation}

\begin{thm} \label{thm:principal case}
Let $m=1,2,\dots,n$ and $\Phi_m^{(r)}\in U(\mathfrak{a})$ be elements in \eqref{eq:gen_center}.
Let $\Psi_m^{(r)}$ be the image of $\Phi_m^{(r)}$ in $U(\mathfrak{a})/\mathcal{I}_{\lambda,\mu}.$ Then
\begin{equation}\label{eq:finite W}
    \{\, \Psi_m^{(r)}\, | \, \lambda_{n-m+2}+\lambda_{n-m+3}+\dots + \lambda_n < r+m\leqslant
    \lambda_{n-m+1}+\lambda_{n-m+2}+\dots + \lambda_n\, \}
\end{equation}
is an algebraically independent generating set of $U(\lambda,(n))$.
\end{thm}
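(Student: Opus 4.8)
The plan is to verify the hypotheses of Lemma~\ref{lem:princ_finite} for the set \eqref{eq:finite W}. First, since each $\Phi_m^{(r)}$ lies in the center of $U(\agot)$, it commutes with every $\mathsf{n}\in\n_{\la,(n)}$, so its image $\Psi_m^{(r)}$ in $U(\agot)/\mathcal{I}_{\la,(n)}$ is $\n_{\la,(n)}$-invariant and hence lies in $U(\la,(n))$. Next, for fixed $m$ the number of admissible $r$ in \eqref{eq:finite W} equals $(\la_{n-m+1}+\dots+\la_n)-(\la_{n-m+2}+\dots+\la_n)=\la_{n-m+1}$, matching the dimension \eqref{eq:dim_conf_princ}. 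It then remains to control the Kazhdan degree of $\Psi_m^{(r)}$ and the linear part of its symbol $\text{gr}^K_m(\Psi_m^{(r)})$.

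For the degree bound I would show that $\Phi_m^{(r)}\in K_m(\agot)$. The key observation is that in $\cdet\Mc$ the Kazhdan degree of a single cycle equals its length: a cycle of $\si$ on columns $\{i_1,\dots,i_\ell\}$ contributes the product of entries $E_{\si(i_k)\,i_k}^{(\cdot)}$, each of Kazhdan degree $1+\si(i_k)-i_k$, and since $\si$ permutes these indices the shifts telescope, $\sum_k(\si(i_k)-i_k)=0$, leaving total degree $\ell$. Extracting the coefficient of $x^{n-m}$ organises the remaining $m$ matrix factors into cycles and fixed points on $m$ indices (the latter contributing the diagonal $E_{ii}^{(\cdot)}$ at top degree), so every term has Kazhdan degree at most $m$; this bound is insensitive to $u$, whence $\Psi_m^{(r)}\in K_m$.

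To pin down the symbol I would pass to $\text{gr}^K(U(\agot)/\mathcal{I}_{\la,(n)})=S(\agot)/\mathcal{I}^{cl}_{\la,(n)}\cong S(\p)$, where the reduction becomes the commutative substitution $E_{i,i+1}^{(\la_{i+1}-1)}\mapsto -1$ and $E_{ij}^{(\cdot)}\mapsto 0$ for the remaining raising generators. A monomial of $\text{gr}^K_m(\Phi_m^{(r)})$ survives into the linear part of $S(\p)$ only if it uses a single factor $E_{b+m-1,\,b}^{(s)}$ from $\p$ together with the superdiagonal chain $E_{b,b+1}^{(\la_{b+1}-1)},\dots,E_{b+m-2,\,b+m-1}^{(\la_{b+m-1}-1)}$ closing a length-$m$ cycle on $\{b,\dots,b+m-1\}$; the degree constraint forces $(b+m-1)-b=m-1$, so the factor has Kazhdan degree $m$. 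Collecting $u$-powers, the coefficient of $u^r$ selects $s=r-(\la_{b+1}+\dots+\la_{b+m-1})+(m-1)$, and the unique contribution with top row index $n$ is $E_{n,\,n-m+1}^{(s)}$ with $s=(r+m)-(\la_{n-m+2}+\dots+\la_n)-1$, which runs bijectively over $\{0,1,\dots,\la_{n-m+1}-1\}$ as $r$ ranges over \eqref{eq:finite W}. Its coefficient is a nonzero sign, so the linear parts of the $\Psi_m^{(r)}$ are linearly independent; in particular each symbol is nonzero, giving $\Psi_m^{(r)}\in K_m\setminus K_{m-1}$, and the linear parts span a $\la_{n-m+1}$-dimensional subspace of $\p$ of Kazhdan degree $m$ (in fact $(\text{ker}\,\varphi)[m]$).

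With these facts established, Lemma~\ref{lem:princ_finite} applies and shows that the $N$ elements $\Psi_m^{(r)}$ freely generate $U(\la,(n))$, hence form an algebraically independent generating set (cf. Corollary~\ref{cor:finite structure}), proving the theorem. I expect the main obstacle to be the symbol computation of the preceding paragraph: although the cycle-length identity makes the degree bound transparent, one must track the $u$-powers through the $\chi$-substitution carefully to isolate the leading generator $E_{n,\,n-m+1}^{(s)}$ and to verify the bijection $r\mapsto s$. The ordering ambiguities of the noncommutative reduction are harmless here, since all of this takes place in the associated graded Poisson algebra $S(\p)$.
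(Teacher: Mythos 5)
Your proposal is correct and follows essentially the same route as the paper's proof: centrality gives $\n_{\la,(n)}$-invariance, and the hypotheses of Lemma~\ref{lem:princ_finite} are then verified by computing the linear part of the top Kazhdan symbol $\text{gr}^K_m(\Psi_m^{(r)})$, whose leading term $E_{n,\,n-m+1}^{(s)}$ (with $r\mapsto s$ bijective over $\{0,\dots,\la_{n-m+1}-1\}$) forces linear independence. The only difference is that you spell out the cycle-decomposition and $\chi$-substitution bookkeeping that the paper compresses into ``by direct computations'' leading to its formula \eqref{eq:linear}; your resulting symbol agrees with the paper's $l_m^{(r)}$.
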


\begin{proof}
    It is obvious that $\Psi_m^{(r)}\in U(\lambda,(n))$ since $[\mathsf{n}, \Phi_m^{(r)}]=0$
    for any $\mathsf{n}\in \mathfrak{n}_{\lambda,(n)}.$ Also, it is not hard to check that
    $\Psi_m^{(r)}$ is nonzero element in the quotient
    \ben
    \big(K_m(\mathfrak{a})\setminus K_{m-1}(\mathfrak{a})\big)
    \big/ \mathcal{I}_{\lambda, \mu}\big(K_m(\mathfrak{a})\setminus K_{m-1}(\mathfrak{a})\big).
    \een
    Hence $\text{gr}^K_m(\Psi_m^{(r)})$ is nontrivial.
    Set
    \[
    \non
    l_m(u):=\sum_{r\in \mathbb{Z}_+} l_m^{(r)} u^r,\] where
    $l_m^{(r)}$ is the linear part of $\text{gr}^K_m(\Psi_{m}^{(r)}).$
    By Lemma \ref{lem:princ_finite}, it is enough to show that
    \begin{equation} \label{eq:linear part}
          \{\, l_m^{(r)}\, | \, \lambda_{n-m+2}+\lambda_{n-m+3}+\dots
          + \lambda_n < r+m\leqslant \lambda_{n-m+1}+\lambda_{n-m+2}+\dots + \lambda_n\, \}
    \end{equation}
    is a linearly independent subset in $\mathfrak{p}.$ By direct computations, we get
    \begin{equation} \label{eq:linear}
        \begin{aligned}
             l_m(u)&= \epsilon_{n, n-m+1}(u)
             + \epsilon_{n-1, n-m+2}(u) u^{\lambda_n-\lambda_{n-m+1}} \\
             &+ \epsilon_{n-2, n-m-1}(u) u^{\lambda_n
             + \lambda_{n-1} -\lambda_{n-m+1}-\lambda_{n-m}}
             + \dots + \epsilon_{m, 1}(u) u^{\lambda_{[m+1, n]} - \lambda_{[2, n-m+1]}}
        \end{aligned}
    \end{equation}
    for $\lambda_{[s,t]}:=\lambda_s+\lambda_{s+1}+\dots+\lambda_t$
   and hence
   \ben
       l_m^{(r)}=E_{n, n-m+1}^{(r)} + E_{n-1, n-m+2}^{(r-\lambda_n+\lambda_{n-m+1})}
       + \dots + E_{m1}^{(r-\lambda_{[m+1, n]} + \lambda_{[2, n-m+1]})}.
   \een
   This implies that the set \eqref{eq:linear part} is linearly independent, completing the proof.
\end{proof}

\begin{ex}
    Let $n=3.$ Let us verify that \eqref{eq:finite W} is an algebraically independent
    generating set of $U(\lambda,(3)).$ In order to find the image of
    $\Psi_m^{(r)}$ in the graded algebra with respect to the Kazhdan filtration,
    consider the following matrix:
    \begin{equation} \label{eq:ex 3by3}
        \overline{\mathcal{M}}_3=\left[\begin{array}{ccc}
        x+2\lambda_1+\epsilon_{11}(u) & -u^{\lambda_2-1} & 0 \\
        \epsilon_{21}(u) & x+ \lambda_2+\epsilon_{22}(u) & -u^{\lambda_3-1} \\
        \epsilon_{31}
        (u) & \epsilon_{32}(u) & x+\epsilon_{33}(u)
        \end{array}\right].
    \end{equation}
    We substituted $\epsilon_{12}(u)$, $\epsilon_{23}(u)$
    and  $\epsilon_{13}(u)$ by their respective images $-u^{\lambda_2-1}$, $-u^{\lambda_3-1}$
    and $0$ in $U(\mathfrak{a})/\mathcal{I}_{\lambda,\mu}.$
    Since the graded algebra of $U(\mathfrak{a})/\mathcal{I}_{\lambda,\mu}$
    is commutative, we can find $\text{gr}(\Psi_m^{(r)})$ by computing the column
    determinant of \eqref{eq:ex 3by3} which is given by
    \begin{equation}
    \non
    \begin{aligned}
        & \text{\rm cdet}\ \overline{\mathcal{M}}_3 = x^3+ x^2 \Big(\epsilon_{11}
        +\epsilon_{22}+\epsilon_{33}+2\lambda_1+\lambda_2\Big)\\
        & +x \Big( (2\lambda_1+\epsilon_{11})(\lambda_2+\epsilon_{22})
        +(2\lambda_1+\epsilon_{11})\epsilon_{33}+(\lambda_2+\epsilon_{22})\epsilon_{33}
        +\epsilon_{32}u^{\lambda_3-1}+\epsilon_{21}u^{\lambda_2-1} \Big)\\
        & + \Big( (2\lambda_1+\epsilon_{11})(\lambda_2+\epsilon_{22})\epsilon_{33}
        +(2\lambda_1+\epsilon_{11})\epsilon_{32}u^{\lambda_3-1}
        +\epsilon_{21}\epsilon_{33}u^{\lambda_2-1}+\epsilon_{31}u^{\lambda_2+\lambda_3-2}\Big).
    \end{aligned}
    \end{equation}
    By setting $\Psi_{m}(u):= \sum_{r\in \mathbb{Z}_+}\Psi_{m}^{(r)}u^r$ and
    $\text{gr}\, \Psi_{m}(u):= \sum_{r\in \mathbb{Z}_+}\text{gr}(\Psi_{m}^{(r)})u^r$ we find
    \begin{equation}
    \non
        \begin{aligned}
           &  \text{gr}\, \Psi_{1}(u)= \epsilon_{11}+\epsilon_{22}+\epsilon_{33}+2\lambda_1+\lambda_2,\\
           & \text{gr}\, \Psi_{2}(u)= (2\lambda_1+\epsilon_{11})(\lambda_2
           +\epsilon_{22})+(2\lambda_1+\epsilon_{11})\epsilon_{33}
           +(\lambda_2+\epsilon_{22})\epsilon_{33}+\epsilon_{32}u^{\lambda_3-1}
           +\epsilon_{21}u^{\lambda_2-1},\\
           & \text{gr}\, \Psi_{3}(u)=(2\lambda_1+\epsilon_{11})(\lambda_2
           +\epsilon_{22})\epsilon_{33}+(2\lambda_1+\epsilon_{11})
           \epsilon_{32}u^{\lambda_3-1}+\epsilon_{21}\epsilon_{33}
           u^{\lambda_2-1}+\epsilon_{31}u^{\lambda_2+\lambda_3-2}.
        \end{aligned}
    \end{equation}
    Hence $l_m(u)$ in \eqref{eq:linear} for $m=1,2,3$ are
    \begin{equation}
    \non
        \begin{aligned}
            & l_1= \epsilon_{11}+\epsilon_{22}+\epsilon_{33}, \quad l_2
            =\epsilon_{32}+\epsilon_{21}u^{\lambda_3-\lambda_2},\quad  l_3=\epsilon_{31}\ .
        \end{aligned}
    \end{equation}
    Now, by direct computations, one can check that $l_1^{(r_1)}$, $l_2^{(r_2)}$,
    $l_3^{(r_3)}$ for $r_1\in \{ 0,1,\dots, \lambda_3-1\}$,
    $r_2\in \{\lambda_3-1, \dots, \lambda_2+\lambda_3-2\}$,
    $r_3\in \{\lambda_2+\lambda_3-2, \dots, \lambda_1+\lambda_2+\lambda_3-3\}$ are linearly independent.
\end{ex}

As a direct consequence of Theorem \ref{thm:principal case}, we get the following corollary.

\begin{cor}\label{cor:cent}
    The generalized finite $W$-algebra $U(\lambda,(n))$ is
    isomorphic to the center of $U(\mathfrak{a})$ as an associative algebra.
    In particular, the algebra $U(\lambda,(n))$ is commutative.
\end{cor}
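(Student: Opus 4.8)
The plan is to write down the obvious map from the center of $U(\mathfrak{a})$ to $U(\lambda,(n))$ and check that Theorem~\ref{thm:principal case} forces it to be an isomorphism. Let $Z(U(\mathfrak{a}))$ denote the center and consider the canonical projection $U(\mathfrak{a})\to U(\mathfrak{a})/\mathcal{I}_{\lambda,(n)}$. For a central element $z$ and any $\mathsf{n}\in\mathfrak{n}_{\lambda,(n)}$ we have $[\mathsf{n},z]=0$, so its image $\bar z$ is fixed by the adjoint action of $\mathfrak{n}_{\lambda,(n)}$; that is, $\bar z\in U(\lambda,(n))$. Since the product on $U(\lambda,(n))$ is the one inherited from $U(\mathfrak{a})$, the assignment
\ben
\pi:Z(U(\mathfrak{a}))\to U(\lambda,(n)),\qquad z\mapsto\bar z,
\een
is an algebra homomorphism, and by the very definition of the $\Psi_m^{(r)}$ it sends the generator $\Phi_m^{(r)}$ from \eqref{eq:gen_center} to the element $\Psi_m^{(r)}$ in \eqref{eq:finite W}.

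First I would establish surjectivity. By Theorem~\ref{thm:principal case} the elements $\Psi_m^{(r)}$ generate $U(\lambda,(n))$, and each of them lies in the image of the algebra homomorphism $\pi$; hence $\pi$ is onto. For injectivity I would use that, by \cite[Cor.~2.7]{m:cc}, the set \eqref{eq:gen_center} is an algebraically independent generating set of $Z(U(\mathfrak{a}))$, so that every $z\in\ker\pi$ is a polynomial $z=P\big(\{\Phi_m^{(r)}\}\big)$ in these generators. Then $\pi(z)=P\big(\{\Psi_m^{(r)}\}\big)=0$, and the algebraic independence of the $\Psi_m^{(r)}$ in Theorem~\ref{thm:principal case} forces $P=0$, whence $z=0$. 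Thus $\pi$ is an isomorphism of associative algebras.

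The commutativity of $U(\lambda,(n))$ then follows since it is isomorphic, via $\pi$, to the commutative algebra $Z(U(\mathfrak{a}))$; alternatively one sees it directly, because $U(\lambda,(n))$ is generated by the pairwise commuting images $\Psi_m^{(r)}$ of the central elements $\Phi_m^{(r)}$. There is no serious obstacle here: essentially all the work has already been done in Theorem~\ref{thm:principal case}, and the only point that needs genuine care is the well-definedness of $\pi$ as a map into the invariants $U(\lambda,(n))$ together with its multiplicativity, both of which reduce to the centrality of the elements of $Z(U(\mathfrak{a}))$.
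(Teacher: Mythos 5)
Your proposal is correct and is essentially the argument the paper intends: the corollary is stated there as a direct consequence of Theorem~\ref{thm:principal case}, precisely because the canonical projection sends the algebraically independent generating set $\{\Phi_m^{(r)}\}$ of the center onto the algebraically independent generating set $\{\Psi_m^{(r)}\}$ of $U(\lambda,(n))$, and hence is an isomorphism. Your writeup simply makes explicit the well-definedness, multiplicativity, surjectivity, and injectivity of this map, which is exactly the content the paper leaves to the reader.
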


\subsection{Minimal nilpotent case}
\label{subsec:minnilp}
In this subsection, we describe generators of generalized finite
and affine $W$-algebras when $\lambda$ is arbitrary and
$\mu = (1, 1, \dots, 1, 2)$ (with $n-2$ parts equal to $1$).
From the $\mathbb{Z}$-grading \eqref{eq:deg_mu},
\begin{equation}
\non
    \mathfrak{n}_{\lambda, \mu}
    = \text{Span}_{\mathbb{C}} \{ E_{\beta n}^{(r)} \ | \ 1 \leqslant \beta \leqslant n-1,
    \lambda_n-\lambda_\beta \leqslant r < \lambda_n\}.
\end{equation}
To apply Corollaries~\ref{cor:affine structure} and \ref{cor:finite structure},
we first describe the kernel of $\varphi$. Since $\chi = (E_{n-1, n}^{(\lambda_n-1)})^\ast$,
we have
\begin{equation}
\non
    \text{ker} \, \varphi = \text{Span}_{\mathbb{C}}\left( \mathcal{B}_1 \cup \mathcal{B}_2  \right)
\end{equation}
where
\begin{equation}
\non
\begin{aligned}
    \mathcal{B}_1 &:= \{ E_{ij}^{(r)} \ | \ 1 \leqslant i \leqslant n-2, 1 \leqslant j \leqslant n-1,
    \lambda_j-\text{min}(\lambda_i, \lambda_j)\leqslant r <\lambda_j\} \\
    & \hskip 6cm \cup \{E_{n-1, n-1}^{(r)} + E_{nn}^{(r)} \ | \ 0 \leqslant r < \lambda_n - 1\}, \\[0.4em]
    \mathcal{B}_2 &:= \{ E_{n \alpha }^{(r)} \ | \ 1 \leqslant \alpha \leqslant n-1,
    0 \leqslant r < \lambda_\alpha-1\}.
    \end{aligned}
\end{equation}
Therefore the number of free generators of the generalized finite and affine $W$-algebras is
\begin{equation}
\non
    \text{dim} \, \mathfrak{a}(0)
    = \text{dim} \, (\text{ker} \, \varphi) = (2n-3)\lambda_1 + (2n-5)\lambda_2
    + \dots + \lambda_{n-1} + \lambda_n \ .
\end{equation}
Moreover, the numbers of generators of conformal weights $1$ and $2$ are
\ben
|S_1|
= (2n-4)\lambda_1 + (2n-6)\lambda_2 + \dots + 2\lambda_{n-2}
+ \lambda_n\Fand |S_2| = \lambda_1 + \lambda_2 + \dots + \lambda_{n-1},
\een
respectively.
In the following theorems, we find generating sets of $U(\lambda,\mu)$ and $W^k(\lambda,\mu).$

\begin{thm} \label{thm:finite generators}
The generalized finite $W$-algebra $U(\lambda,\mu)$ has the following properties:
    \begin{enumerate}[(1)]
        \item  Any element $a \in \mathcal{B}_1$ is in $U(\lambda,\mu)$
        and has conformal weight 1.
        \item  For  $E_{n\alpha}^{(r)} \in \mathcal{B}_2$, the element
            \begin{equation} \label{eq:finite min weight 2}
                E_{n\alpha}^{(r)} + \sum_{a+b=\lambda_n-1+r} E_{n-1, \alpha}^{(a)} E_{nn}^{(b)}
                - \sum_{\gamma=1}^{n-2}\ \sum_{a+b = \lambda_n -1 + r}
                E_{n-1, \gamma}^{(a)} E_{\gamma \alpha}^{(b)}
                - \delta_{r, 0} \delta_{\lambda_\alpha, \lambda_n}
                \lambda_n E_{n-1, \alpha}^{(\lambda_n-1)}
            \end{equation}
        is in $U(\lambda,\mu)$ and has conformal weight $2$.
        \item The set $\mathcal{B}_1$ and the elements in \eqref{eq:finite min weight 2}
        comprise an algebraically independent generating set of $U(\lambda,\mu).$
    \end{enumerate}
\end{thm}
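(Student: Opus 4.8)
The plan is to treat the three parts in turn, relying throughout on the defining relation of the left ideal $\mathcal{I}_{\lambda,\mu}$: for $\mathsf{m}\in\mathfrak{n}_{\lambda,\mu}$ and any $X\in U(\mathfrak{a})$ one has $X\mathsf{m}\equiv-\chi(\mathsf{m})\ts X$ modulo $\mathcal{I}_{\lambda,\mu}$, so an element of $\mathfrak{n}_{\lambda,\mu}$ sitting on the right of a product may be replaced by the scalar $-\chi$ of it. In the minimal case $\col_\mu$ takes only the values $1$ and $2$, whence $\mathfrak{a}(i)=0$ for $|i|\geqslant 2$; thus $\mathfrak{n}_{\lambda,\mu}=\mathfrak{a}(1)$ is abelian, $\mathfrak{p}=\mathfrak{a}(0)\oplus\mathfrak{a}(-1)$, and the conformal weight equals $1$ on $\mathfrak{a}(0)$ and $2$ on $\mathfrak{a}(-1)$. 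Since $\mathfrak{n}_{\lambda,\mu}$ is spanned by the $E_{\beta n}^{(s)}$, membership of $\bar v$ in $U(\lambda,\mu)$ reduces to verifying $[E_{\beta n}^{(s)},v]\in\mathcal{I}_{\lambda,\mu}$ for all admissible $(\beta,s)$. Part (1) is then immediate: for $a\in\mathcal{B}_1\subset\mathfrak{a}(0)$ the bracket $[E_{\beta n}^{(s)},a]$ lies in $\mathfrak{a}(1)=\mathfrak{n}_{\lambda,\mu}$, so modulo $\mathcal{I}_{\lambda,\mu}$ it equals the scalar $-\chi([E_{\beta n}^{(s)},a])=-\varphi(a)(E_{\beta n}^{(s)})$, which vanishes because $\mathcal{B}_1\subset\ker\varphi$ by the description of $\ker\varphi$ preceding the theorem. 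Hence $[E_{\beta n}^{(s)},a]\in\mathcal{I}_{\lambda,\mu}$, so $\bar a\in U(\lambda,\mu)$, of conformal weight $1$.

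Part (2) is the computational core. Writing the proposed element as $w=E_{n\alpha}^{(r)}+C_1-C_2-C_3$, where $C_1,C_2$ are the two double sums and $C_3$ the boundary term, I would expand $[E_{\beta n}^{(s)},w]$ by the Leibniz rule together with $[E_{ij}^{(p)},E_{hl}^{(q)}]=\delta_{hj}E_{il}^{(p+q)}-\delta_{il}E_{hj}^{(p+q)}$, and then reduce every resulting product modulo $\mathcal{I}_{\lambda,\mu}$ by moving the $\mathfrak{n}_{\lambda,\mu}$-factors to the right. The leading bracket yields $E_{\beta\alpha}^{(s+r)}-\delta_{\beta\alpha}E_{nn}^{(s+r)}\in\mathfrak{a}(0)$; the role of $C_1$ is to cancel $\delta_{\beta\alpha}E_{nn}^{(s+r)}$ and the $\beta=n-1$ portion of $E_{\beta\alpha}^{(s+r)}$, while $C_2$ cancels the remaining $\beta\leqslant n-2$ portion through the sum over $\gamma$. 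After these cancellations the only possible residue is a scalar, produced by commutators of the form $[E_{n-1,n}^{(s+a)},E_{nn}^{(b)}]=E_{n-1,n}^{(s+a+b)}$ that reduce to $-\chi(E_{n-1,n}^{(\lambda_n-1)})=-1$. I would show this residue is nonzero only when $\beta=\alpha$ and $s=r=0$; since admissibility of $E_{\beta n}^{(s)}$ forces $s\geqslant\lambda_n-\lambda_\alpha$, this requires $\lambda_\alpha=\lambda_n$, in which case $\lambda_{n-1}=\lambda_n$ and a count of the surviving index pairs gives the residue $\lambda_n\ts\delta_{r,0}\delta_{\lambda_\alpha,\lambda_n}\delta_{\beta\alpha}\delta_{s,0}$. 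This is precisely annihilated by $-[E_{\beta n}^{(s)},C_3]$, whose reduction contributes $-\lambda_n\ts\delta_{r,0}\delta_{\lambda_\alpha,\lambda_n}\delta_{\beta\alpha}\delta_{s,0}$. Thus $[E_{\beta n}^{(s)},w]\equiv 0$, so $\bar w\in U(\lambda,\mu)$, of conformal weight $2$ since its top Kazhdan-degree part is $E_{n\alpha}^{(r)}\in\mathfrak{a}(-1)$ together with the degree-two quadratic terms, the lower-degree $C_3$ not affecting the top. The main obstacle is exactly this bookkeeping: normal-ordering each product against $\mathcal{I}_{\lambda,\mu}$ and tracking the vanishing of out-of-range generators $E_{ij}^{(p)}$ (those with $p\geqslant\lambda_j$ or $p<\lambda_j-\min(\lambda_i,\lambda_j)$), which is what collapses the scalar residue onto the single boundary contribution absorbed by $C_3$.

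Finally, Part (3) follows formally from Corollary~\ref{cor:finite structure}(2). The set consisting of $\mathcal{B}_1$ and the elements \eqref{eq:finite min weight 2} has $|\mathcal{B}_1|+|\mathcal{B}_2|=\dim(\ker\varphi)=\dim\mathfrak{a}(0)=r$ members. For $a_i\in\mathcal{B}_1$ one takes $v_i=a_i$, whose $\text{gr}^K_1$-term is $a_i$ itself, and for $a_i=E_{n\alpha}^{(r)}\in\mathcal{B}_2$ one takes $v_i=w$, whose top Kazhdan component $\text{gr}^K_2(v_i)$ has linear part $E_{n\alpha}^{(r)}$ by Part (2). As $\{a_i\}=\mathcal{B}_1\cup\mathcal{B}_2$ is a basis of $\ker\varphi$, Corollary~\ref{cor:finite structure}(2) then yields that these elements form an algebraically independent generating set of $U(\lambda,\mu)$, establishing Part (3).
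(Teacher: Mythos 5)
Your proposal is correct and takes essentially the same approach as the paper: the paper proves this theorem as a ``simple analogue'' of the proof of Theorem~\ref{thm:affine generators}, i.e.\ the same cancellation computation (the leading bracket $E_{\beta\alpha}^{(s+r)}-\delta_{\beta\alpha}E_{nn}^{(s+r)}$ absorbed by the two quadratic sums, and the scalar residue $\lambda_n\,\delta_{r,0}\delta_{\lambda_\alpha,\lambda_n}\delta_{\beta\alpha}\delta_{s,0}$ absorbed by the boundary term), with part (3) deduced from Corollary~\ref{cor:finite structure}. Your direct verification of $\mathfrak{n}_{\lambda,\mu}$-invariance in $U(\mathfrak{a})/\mathcal{I}_{\lambda,\mu}$ is precisely that finite analogue, and both your residue count and your application of Corollary~\ref{cor:finite structure}(2) check out.
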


The proof of Theorem \ref{thm:finite generators} is a simple analogue
of the proof of the next theorem. Recall the property of $W^k(\lambda,\mu)$,
established in \eqref{eq:affine and W}.

\begin{thm} \label{thm:affine generators}
  For the generators
  of $W^k(\lambda,\mu),$ we have the following properties.
    \begin{enumerate}[(1)]
        \item For $a \in \mathcal{B}_1$, the element $J_a$ is in $W^k(\lambda,\mu)$
        and has conformal weight 1.
        \item For $E_{n\alpha}^{(r)} \in \mathcal{B}_2$, the element
                \begin{equation} \label{eq:minimal affine generator}
                \begin{aligned}
            G_{n\alpha}^{(r)} := J_{E_{n\alpha}^{(r)}} &
            + \sum_{a+b= \lambda_n - 1 +r}  :J_{E_{n-1, \alpha}^{(a)}}J_{E_{nn}^{(b)}}:
            - \sum_{\gamma=1}^{n-2}\
            \sum_{a+b = \lambda_n -1 + r} :J_{E_{n-1, \gamma}^{(a)}} J_{E_{\gamma \alpha}^{(b)}}:\\
            &- \delta_{r, 0} \delta_{\lambda_\alpha, \lambda_n}
            \left(\lambda_n +k \right)\partial J_{E_{n-1, \alpha}^{(\lambda_n-1)}}.
            \end{aligned}
        \end{equation}
        is in $W^k(\lambda, \mu)$ and has conformal weight $2$.
        \item The set $\{J_a \ | \ a \in \mathcal{B}_1\}$ and elements in
        \eqref{eq:minimal affine generator} comprise a differential
        algebraically independent generating set of $W^k(\lambda, \mu)$.
    \end{enumerate}
\end{thm}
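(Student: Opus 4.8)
The plan is to deduce the theorem from Corollary~\ref{cor:affine structure}(2). The excerpt already exhibits $\mathcal{B}_1\cup\mathcal{B}_2$ as a basis of $\ker\varphi$, and $|\mathcal{B}_1\cup\mathcal{B}_2|=\dim\mathfrak{a}(0)=r$, the minimal number of generators. Hence it is enough to verify, for each proposed element, that (i) it lies in $W^k(\lambda,\mu)$, (ii) it is homogeneous for the conformal weight \eqref{eq:Delta in W}, and (iii) its linear term without total derivative is $J_a$ with $a$ running over $\mathcal{B}_1\cup\mathcal{B}_2$. Granting (i)--(iii), Corollary~\ref{cor:affine structure}(2) immediately yields that the set is a differential algebraically independent generating set. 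Throughout I use that, by Theorem~\ref{thm:degree zero} and \eqref{eq:affine and W}, membership in $W^k(\lambda,\mu)$ means $Q$-closedness of an element of the degree-zero component $V(J_{\mathfrak{p}})$ of $\widetilde C^k(\lambda,\mu)$.

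The weights and (iii) are immediate. In the minimal case $\mathfrak{n}_{\lambda,\mu}=\mathfrak{a}(1)$ and $\mathfrak{p}=\mathfrak{a}(0)\oplus\mathfrak{a}(-1)$, and every element of $\mathcal{B}_1$ lies in $\mathfrak{a}(0)$, so $\Delta(J_a)=1$. For $E_{n\alpha}^{(r)}\in\mathcal{B}_2\subset\mathfrak{a}(-1)$ one has $\Delta(J_{E_{n\alpha}^{(r)}})=2$; every factor occurring in the quadratic corrections lies in $\mathfrak{a}(0)$, so each normally ordered product has weight $1+1=2$, and $\partial J_{E_{n-1,\alpha}^{(\lambda_n-1)}}$ also has weight $2$. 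Thus $G_{n\alpha}^{(r)}$ is homogeneous of weight $2$, and its only degree-one summand that is not a total derivative is $J_{E_{n\alpha}^{(r)}}$, giving (iii).

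It remains to prove (i), i.e. $Q$-closedness. Part (1) is short: for $a\in\mathcal{B}_1\subset\mathfrak{a}(0)$, formula \eqref{eq:d with building block} at $\lambda=0$ gives three contributions. The sum $\sum_I:\phi^I J_{\pi_\le[E_I,a]}:$ vanishes since $[E_I,a]\in\mathfrak{a}(1)$ forces $\pi_\le[E_I,a]=0$; the term $\phi^{a\cdot\chi}$ vanishes because $a\in\ker\varphi$; and the remaining $\partial$-contribution has conformal weight $2$, hence must vanish because $Q$ preserves the weight $1$ of $J_a$. So $Q(J_a)=0$. For part (2), set $b=E_{n\alpha}^{(r)}$; the same formula gives $Q(J_b)=\sum_I:\phi^I J_{[E_I,b]}:+\partial\big(\sum_I c_I\phi^I\big)$, where now $[E_I,b]\in\mathfrak{a}(0)$ and $\phi^{b\cdot\chi}=0$. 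Using $[E_{\beta n}^{(s)},E_{n\alpha}^{(r)}]=E_{\beta\alpha}^{(s+r)}-\delta_{\beta\alpha}E_{nn}^{(s+r)}$, this splits into an $E_{nn}$-part and an $E_{\beta\alpha}$-part.

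The crux is to show that $Q$ of the three correction terms cancels $Q(J_b)$. Since $Q$ is an odd derivation of the normally ordered product \eqref{eq:derv_norm} and $Q(J_c)=-\phi^{c\cdot\chi}=-\phi^{\varphi(c)}$ for $c\in\mathfrak{a}(0)$, I would compute $Q$ of each quadratic term from the coadjoint values $\varphi(E_{n-1,\alpha}^{(a)})=-(E_{\alpha n}^{(\lambda_n-1-a)})^\ast$, $\varphi(E_{nn}^{(b)})=(E_{n-1,n}^{(\lambda_n-1-b)})^\ast$, $\varphi(E_{n-1,\gamma}^{(a)})=-(E_{\gamma n}^{(\lambda_n-1-a)})^\ast$ and $\varphi(E_{\gamma\alpha}^{(b)})=0$, which follow from \eqref{eq:lem_sur_pf} and the definition of $\varphi$. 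After reindexing the convolution $a+b=\lambda_n-1+r$ by $s=\lambda_n-1-a$, the $E_{nn}$-sum of the first correction cancels the $E_{nn}$-part of $Q(J_b)$, while the $\gamma$-sum cancels the $\beta\le n-2$ part; the surviving $\beta=n-1$ terms assemble into $:\phi^{(n-1,n,s)}J_{E_{n-1,\alpha}^{(s+r)}}:-:J_{E_{n-1,\alpha}^{(s+r)}}\phi^{(n-1,n,s)}:$, which by quasi-commutativity of the normally ordered product and \eqref{eq: phi with building block} is a total derivative of a ghost. I expect the main obstacle to be precisely this last bookkeeping: these residual total-derivative ghost terms, together with the level-dependent piece $\partial(\sum_I c_I\phi^I)$ of $Q(J_b)$, must be annihilated by $Q$ of the third correction, $Q\big({-}\delta_{r,0}\delta_{\lambda_\alpha,\lambda_n}(\lambda_n+k)\partial J_{E_{n-1,\alpha}^{(\lambda_n-1)}}\big)=-\delta_{r,0}\delta_{\lambda_\alpha,\lambda_n}(\lambda_n+k)\partial\phi^{(\alpha,n,0)}$, and matching the scalars is exactly what pins down the coefficient $\lambda_n+k$ and restricts the correction to the boundary case $r=0$, $\lambda_\alpha=\lambda_n$, where $\varphi(E_{n-1,\alpha}^{(\lambda_n-1)})=-(E_{\alpha n}^{(0)})^\ast\neq0$. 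Once this cancellation is confirmed, $Q(G_{n\alpha}^{(r)})=0$ and the theorem follows; the finite analogue Theorem~\ref{thm:finite generators} is obtained by the identical computation with $\widetilde{\mathcal{Q}}$ and the Kazhdan filtration in place of $Q$ and the conformal weight.
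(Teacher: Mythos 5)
Your proposal follows essentially the same route as the paper's own proof: reduce all three parts to the criteria of Corollary~\ref{cor:affine structure} (membership in $W^k(\lambda,\mu)$, homogeneity of conformal weight, correct linear terms), and then verify $\widetilde{Q}$-closedness by computing $\widetilde{Q}$ on the building blocks via \eqref{eq:d with building block}, the derivation property \eqref{eq:derv_norm}, and the coadjoint values $\varphi(E_{n-1,\alpha}^{(a)})=-(E_{\alpha n}^{(\lambda_n-1-a)})^\ast$, $\varphi(E_{nn}^{(b)})=(E_{n-1,n}^{(\lambda_n-1-b)})^\ast$, $\varphi(E_{\gamma\alpha}^{(b)})=0$. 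The bookkeeping you defer---the cancellation of the $E_{nn}$-part and the $\gamma\leqslant n-2$ part, the residual $\partial\phi$ ghost terms produced by reordering normally ordered products, and the matching of the level term $k\,\partial\phi^{(E_{\alpha n}^{(0)})^\ast}$ plus the $\lambda_n\,\partial\phi^{(E_{\alpha n}^{(0)})^\ast}$ contribution against the coefficient $\lambda_n+k$ in the case $r=0$, $\lambda_\alpha=\lambda_n$---is exactly the computation the paper carries out, and it closes in the way you predict.
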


\begin{proof}
\begin{enumerate}[(1)]
   \item  By Theorem \ref{thm:degree zero}, it is sufficient to show
   that the above elements are inside $\widetilde{Q}(J_a)=0$
   for $a\in \mathcal{B}_1$. Let $1\leqslant i \leqslant n-2$ and $1 \leqslant j \leqslant n-1$.
   Then
    \begin{equation}
    \non
        \widetilde{Q}(J_{E_{ij}^{(r)}})
        = -\delta_{jn} \phi^{E_{n-1, i}^{(\lambda_n-1-r)}{}^\ast}
        + \delta_{i, n-1} \phi^{E_{jn}^{(\lambda_n-1-r)}{}^\ast} = 0
    \end{equation}
    and
    \begin{equation}
    \non
        \widetilde{Q}(J_{E_{n-1, n-1}^{(r)} + E_{nn}^{(r)}})
        = -\phi^{E_{n-1, n}^{(\lambda_n -1 -r)} {}^\ast}
        + \phi^{E_{n-1, n}^{(\lambda_n -1-r)} {}^\ast} = 0 \, .
    \end{equation}
    Hence we get the assertion.
    \item
    Now suppose $E_{n\alpha}^{(r)} \in \mathcal{B}_2$. Then
    \begin{equation}\non
        \begin{aligned}
            \widetilde{Q}(J_{E_{n\alpha}^{(r)}}) &
            = \sum_{\gamma=1}^{n-1}\sum_{a,b} : \phi^{E_{\gamma n}^{(\lambda_n-1-a)}}{}^\ast
            J_{E_{\gamma \alpha}^{(b)}} :
            - \sum_{a,b} :\phi^{E_{\alpha n}^{(\lambda_n -1-a)}}{}^\ast J_{E_{nn}^{(b)}} :\\
            &\qquad + \delta_{\lambda_\alpha, \lambda_n} \delta_{r, 0}
            \frac{k}{N} (\lambda_1 + \dots + \lambda_{\alpha-1}
            + (n-\alpha+1)\lambda_\alpha)\partial \phi^{E_{\alpha n}^{(0)}{}^\ast}
          \end{aligned}
            \end{equation}
            which equals
          \begin{equation}\non
        \begin{aligned}
            &\phantom{=} \sum_{\gamma=1}^{n-1}\sum_{a,b} :
            \phi^{E_{\gamma n}^{(\lambda_n-1-a)}}{}^\ast J_{E_{\gamma \alpha}^{(b)}} :
            - \sum_{a,b} :\phi^{E_{\alpha n}^{(\lambda_n -1-a)}}{}^\ast J_{E_{nn}^{(b)}} :\\
            &\qquad + \delta_{\lambda_\alpha, \lambda_n} \delta_{r, 0}
            \frac{k}{N} (\lambda_1 + \dots + \lambda_{\alpha -1}
            + \lambda_\alpha + \lambda_{\alpha+1} + \dots + \lambda_n )
            \partial  \phi^{E_{\alpha n}^{(0)}}{}^\ast \\
            &= \sum_{\gamma=1}^{n-1}\sum_{a,b} :
            \phi^{E_{\gamma n}^{(\lambda_n-1-a)}}{}^\ast J_{E_{\gamma \alpha}^{(b)}} :
            - \sum_{a,b} :\phi^{E_{\alpha n}^{(\lambda_n -1-a)}}{}^\ast J_{E_{nn}^{(b)}} :
            + \delta_{\lambda_\alpha, \lambda_n} \delta_{r, 0} k\partial
            \phi^{E_{\alpha n}^{(0)}}{}^\ast,
            \end{aligned}
            \end{equation}
            where all the summation for $a,b$ runs over the all the
            integer values satisfying $a+b=\lambda_n-1+r.$
            One the other hand, when $a+b=\lambda_n-1+r$ and $\gamma=1,\dots, n-2$, we have
            \begin{align}
                &   \widetilde{Q}(  :J_{E_{n-1, \alpha}^{(a)}} J_{E_{nn}^{(b)}}:)
                =   :\phi^{E_{\alpha n}^{(\lambda_n -1 -a)}{}^\ast} J_{E_{nn}^{(b)}} :
                - :\phi^{E_{n-1, n}^{(\lambda_n-1-a)}}{}^\ast J_{E_{n-1, \alpha}^{(b)}} :
                +\delta_{\lambda_\alpha, \lambda_n} \delta_{r, 0}
                \lambda_n \partial \phi^{E_{\alpha n}^{(0)}{}^\ast},  \non \\
                & \widetilde{Q}( :J_{E_{n-1, \gamma}^{(a)}} J_{E_{\gamma \alpha}^{(b)}}:)
                =  :\phi^{E_{\gamma n}^{(\lambda_n-1-a)}} {}^\ast
                J_{E_{\gamma \alpha}^{(b)}} : , \non \\
                & \widetilde{Q}(\partial J_{E_{n-1, \alpha}^{(\lambda_n-1)}})
                = \delta_{r, 0} \delta_{\lambda_n, \lambda_\alpha}
                \partial \phi^{E_{\alpha n}^{(0)}{}^\ast} .\non
            \end{align}
            These relations imply that $\widetilde{Q}(G_{n\alpha}^{(r)})=0$
        and so this element is in $W^k(\lambda,\mu).$
  \item This follows directly from Corollary \ref{cor:affine structure}.
        \end{enumerate}
\end{proof}

\begin{ex}
    Let $\lambda = (1, 1, 2, 2)$ and $\mu = (1, 1, 2)$. Then
    \begin{equation}
    \non
    \begin{aligned}
        \mathcal{B}_1 &= \{E_{11}^{(0)}, E_{12}^{(0)}, E_{13}^{(1)},
        E_{21}^{(0)}, E_{22}^{(0)}, E_{23}^{(1)}\} \cup \{E_{33}^{(0)}
        + E_{44}^{(0)}, E_{33}^{(1)} + E_{44}^{(1)} \} \ , \\
        \mathcal{B}_2 &= \{E_{41}^{(0)}, E_{42}^{(0)},E_{43}^{(1)},  E_{43}^{(0)} \} \ .
    \end{aligned}
    \end{equation}
    For $a \in \mathcal{B}_1$, $a$ and $J_{a}$ are generators of
    $U(\lambda, \mu)$ and $W^k(\lambda, \mu)$, respectively, of conformal
    weight~$1$.
    The generators of $U(\lambda, \mu)$ of conformal weight $2$ are
    \begin{equation}
    \non
        \begin{aligned}
            & E_{41}^{(0)} + E_{31}^{(0)} E_{44}^{(1)}\ ,\quad
            E_{42}^{(0)} + E_{32}^{(0)} E_{44}^{(1)}\ , \quad E_{43}^{(1)} + E_{33}^{(1)} E_{44}^{(1)}\ , \\
            & E_{43}^{(0)} + E_{33}^{(0)} E_{44}^{(1)}
            + E_{33}^{(1)} E_{44}^{(0)} -E_{31}^{(0)}E_{13}^{(1)} - E_{32}^{(0)} E_{23}^{(1)}-2 E_{33}^{(1)}
        \end{aligned}
    \end{equation}
    and the generators of  $W^k(\lambda, \mu)$ of conformal weight $2$ are
    \begin{equation}
    \non
        \begin{aligned}
            & J_{E_{41}^{(0)}} + :J_{E_{31}^{(0)}} J_{E_{44}^{(1)}}:\ , \quad
            J_{E_{42}^{(0)}} + :J_{E_{32}^{(0)}} J_{E_{44}^{(1)}}:\ ,\quad
            J_{E_{43}^{(1)}} + :J_{E_{33}^{(1)}} J_{E_{44}^{(1)}}:\ ,\\
            & J_{E_{43}^{(0)}} + :J_{E_{33}^{(0)}} J_{E_{44}^{(1)}}: + :J_{E_{33}^{(1)}}
            J_{E_{44}^{(0)}}: -:J_{E_{31}^{(0)}} J_{E_{13}^{(1)}}:
            - :J_{E_{32}^{(0)}} J_{E_{23}^{(1)}}: -(k+2)\partial J_{E_{33}^{(1)}}\ . \\
        \end{aligned}
    \end{equation}
\end{ex}

\begin{ex}
    If $\lambda= (1, 1, 2, 3)$ and $\mu = (1, 1, 2)$, then the last terms of
    \eqref{eq:finite min weight 2} and \eqref{eq:minimal affine generator}
    are equal to $0$. In this case, the generators of $W^k(\lambda, \mu)$
    of conformal weight $2$ are
    \begin{equation}
    \non
        \begin{aligned}
            J_{E_{41}^{(0)}} &+ :J_{E_{31}^{(0)}}J_{E_{44}^{(2)}}:\ ,
            \quad J_{E_{42}^{(0)}} + :J_{E_{32}^{(0)}}J_{E_{44}^{(2)}}: \\
            J_{E_{43}^{(1)}} & + :J_{E_{33}^{(1)}} J_{E_{44}^{(2)}}:\ ,
            \quad J_{E_{43}^{(0)}} + :J_{E_{33}^{(0)}}J_{E_{44}^{(2)}}:\ .
        \end{aligned}
    \end{equation}
\end{ex}

\section*{Declarations}

\subsection*{Competing interests}
The authors have no competing interests to declare that are relevant to the content of this article.

\subsection*{Acknowledgements}
This project was initiated during the second named author visit to the Seoul National University. He
is grateful to the Department of Mathematical Sciences for the warm hospitality.
His work was also supported by the Australian Research Council, grant DP240101572.
The work of U.R.~Suh was supported by NRF Grant \#2022R1C1C1008698
and  Creative-Pioneering Researchers Program by Seoul National University.

\subsection*{Availability of data and materials}
No data was used for the research described in the article.



\bigskip
\bigskip

\small

\noindent
D.J.C.:\newline
Department of Mathematical Sciences\\
Seoul National University, Gwanak-ro 1, Gwanak-gu, Seoul 08826, Korea\\
{\tt djchoi9696@snu.ac.kr}

\vspace{5 mm}

\noindent
A.M.:\newline
School of Mathematics and Statistics\newline
University of Sydney,
NSW 2006, Australia\newline
{\tt alexander.molev@sydney.edu.au}

\vspace{5 mm}

\noindent
U.R.S:\newline
Department of Mathematical Sciences and Research Institute of Mathematics\\
Seoul National University, Gwanak-ro 1, Gwanak-gu, Seoul 08826, Korea\\
{\tt uhrisu1@snu.ac.kr}

\end{document}